\documentclass[11pt,english]{article}
\usepackage{amsmath,amssymb,amsthm}
\usepackage{multicol}
\usepackage[margin=1in]{geometry}
\usepackage{graphicx,color}
\usepackage{enumitem}
\usepackage{fullpage}
\usepackage{tcolorbox}
\usepackage{babel}
\usepackage{wrapfig}
\usepackage{MnSymbol}
\usepackage{mdframed}
\usepackage{mathtools}
\usepackage{floatrow}
\usepackage{multirow}
\usepackage[ruled,linesnumbered,vlined]{algorithm2e}
\usepackage{tablefootnote}
\usepackage{thm-restate}
\usepackage{bbding}
\usepackage{pifont}
\usepackage{nicefrac}
\usepackage{xspace}
\usepackage{lineno}

\definecolor{Darkblue}{rgb}{0,0,0.4}
\definecolor{Brown}{cmyk}{0,0.61,1.,0.60}
\definecolor{Purple}{cmyk}{0.45,0.86,0,0}
\definecolor{Darkgreen}{rgb}{0.133,0.543,0.133}

\usepackage[colorlinks,linkcolor=Darkblue,filecolor=blue,citecolor=blue,urlcolor=Darkblue,pagebackref]{hyperref}
\usepackage[nameinlink]{cleveref}
\newcommand{\appref}[1]{\hyperref[#1]{Appendix~\ref*{#1}}}
\newcommand{\algline}[1]{\hyperref[#1]{line~\ref*{#1}}}

\usepackage[colorinlistoftodos,prependcaption,textsize=tiny]{todonotes}

\newif\ifdraft 
\draftfalse

\newcommand{\namedref}[2]{\hyperref[#2]{#1~\ref*{#2}}}
\newcommand{\propref}[1]{\hyperref[#1]{property~(\ref*{#1})}}

\newtheorem{theorem}{Theorem}
\newtheorem{lemma}{Lemma}
\newtheorem{definition}{Definition}
\newtheorem{claim}{Claim}
\newtheorem{observation}{Observation}
\newtheorem{corollary}{Corollary}
\newtheorem{proposition}{Proposition}

\newtheorem{remark}{Remark}

\newcommand{\localmemexp}{\delta}

\newcommand{\poly}{\mathrm{poly}}

\newcommand{\R}{\mathbb{R}}
\newcommand{\N}{\mathbb{N}}
\newcommand{\E}{\mathbb{E}}

\newcommand{\OPT}{\mathrm{OPT}}

\newcommand{\diam}{\mathrm{diam}}

\newcommand{\ddim}{\mathrm{ddim}}
\newcommand{\Exp}{\mathrm{Exp}}
\newcommand{\Prb}{\mathop{\mathrm{Pr}}}

\DeclareMathOperator{\wopen}{\mathfrak{f}}
\newcommand{\eqdef}{\coloneqq}
\DeclareMathOperator{\fl}{fl}

\newcommand{\OPTfl}{\ensuremath{\OPT^{\mathrm{fl}}}\xspace}
\newcommand{\OPTkz}{\ensuremath{\OPT^{\mathrm{cl}}_z}\xspace}

\newcommand{\cC}{\mathcal{C}}
\newcommand{\cD}{\mathcal{D}}

\newcommand{\cF}{\mathcal{F}}

\newcommand{\cI}{\mathcal{I}}

\newcommand{\cL}{\mathcal{L}}

\newcommand{\cN}{\mathcal{N}}
\newcommand{\cO}{\mathcal{O}}
\newcommand{\cP}{\mathcal{P}}

\newcommand{\Ptilde}{\tilde{\cP}_\varphi}
\newcommand{\Pin}{\tilde{\cP}_\varphi^{\rm in}}
\newcommand{\Pout}{\tilde{\cP}_\varphi^{\rm out}}
\newcommand{\phiin}{\varphi^{\rm in}}
\newcommand{\phiout}{\varphi^{\rm out}} 
\newcommand{\Oin}{{\cO}_\varphi^{\rm in}}
\newcommand{\Oout}{{\cO}_\varphi^{\rm out}}

\newcommand{\Texp}{\mathsf{TExp}}

\newcommand{\dist}{\ensuremath{d}}

\definecolor{forestgreen}{rgb}{0.13, 0.55, 0.13}

\SetCommentSty{mycommfont}

\DeclareMathOperator*{\argmax}{arg\,max}

\DeclareMathOperator{\polylog}{polylog}

\definecolor{BrickRed}{rgb}{.72,0,0}
\def\EMPH#1{\emph{\textcolor{BrickRed} {#1}}}
\newcommand{\brick}[1]{{\color{BrickRed}#1}}

\def\eps{\varepsilon}

\DeclareMathAlphabet{\mathpzc}{OT1}{pzc}{m}{it}
\newcommand{\etal}{{\em et al. }}

\newlength{\dhatheight}

\newcommand {\ignore} [1] {}

\newcommand{\initOneLiners}{%
	\setlength{\itemsep}{0pt}
	\setlength{\parsep }{0pt}
	\setlength{\topsep }{0pt}
}

\title{Geometric Hashing in MPC with Applications to Clustering and Stochastic Embeddings}

\author{%
    Sujoy Bhore\thanks{Department of Computer Science \& Engineering, Indian Institute of Technology Bombay, Mumbai, India.\\ Email: sujoy@cse.iitb.ac.in. This work is supported in part by ANRF ARG-MATRICS, Grant 002465}
    \and Arnold Filtser\thanks{Bar Ilan University, Ramat Gan, Israel. Email: arnold.filtser@biu.ac.il. This research was supported by the Israel Science Foundation (grant Nos.~1042/22 and~2021/26) and the European Research Council (ERC), grant No.~101301875 (NonMetEmbed). Funded by the European Union. Views and opinions expressed are however those of the author(s) only and do not necessarily reflect those of the European Union or the European Research Council Executive Agency. Neither the European Union nor the granting authority can be held responsible for them.}
    \and Eunjin Oh\thanks{POSTECH, Korea. Email: eunjin.oh@postech.ac.kr. This work is supported by Institute of Information \& Communications Technology Planning \& Evaluation (IITP) grant funded by the Korea government (MSIT) (No. RS-2024-00440239, Sublinear Scalable Algorithms for Large-Scale Data Analysis) and the National Research Foundation of Korea (NRF) grant funded by the Korea government (MSIT) (No. RS-2024-00358505).}
    \and Hyeonjun Shin\thanks{POSTECH, Korea. Email: hyeonjun.shin@postech.ac.kr}
    \and Himanshi Singh \thanks{Department of Computer Science \& Engineering, Indian Institute of Technology Bombay, Mumbai, India.\\ Email: himanshi@cse.iitb.ac.in}
}

\date{}
\begin{document}
\maketitle

\begin{abstract}
    In this paper we study consistent hashing (a.k.a. sparse partitions) for doubling metrics in the Massively Parallel Computation (MPC) model. 
    A $(\Gamma,\Lambda)$-hash with diameter bound $\Delta$ partitions a metric space into clusters of diameter at most $\Delta$ such that every ball of radius $\Delta/\Gamma$ intersects at most $\Lambda$ clusters in expectation. 
    Consistent hashing has emerged as a central tool in geometric MPC algorithms.
    However, so far it has been solely used in Euclidean spaces, where it can be pre-computed.
    On the other hand, in cases where the metric space is received as input and unknown in advance, no previous parallel construction was known. 
    
    We present the first fully scalable MPC algorithm over machines with local space $n^\sigma$, that given an $n$-point metric space with doubling dimension $\ddim\leq O(\sigma \log n)$, samples a randomized $(\Gamma,2^{O(\ddim/\Gamma)})$-hash in $O(\ddim/\sigma^3)$ rounds. There is a dichotomy: if the doubling dimension is $\ddim=\Omega(\sigma\cdot\log n)$, a polynomial number of rounds is required to compute such a hash.
    
    We conclude by showing that this MPC primitive yields $O(\ddim/\sigma^3)$-round fully scalable MPC algorithms for a broad class of fundamental geometric problems in doubling metrics, including stochastic tree embeddings with $O(\log n)$ expected distortion, constant factor approximation for facility location, bicriteria $(O(\eps^{-2}),1+\eps)$-approximation for $k$-median and $k$-means.
\end{abstract}

\thispagestyle{empty}

\newpage
\setcounter{secnumdepth}{5}
\setcounter{tocdepth}{2} \tableofcontents
\newpage

\section{Introduction}\label{sec:intro}
The Massively Parallel Computation (\EMPH{MPC}) model has emerged as a central theoretical framework for capturing the algorithmic challenges of modern large-scale data processing systems, including MapReduce, Hadoop, Dryad, and Spark~\cite{dean2008mapreduce, dean2004mapreduce, white2012hadoop, isard2007dryad}. Introduced by Karloff, Suri, and Vassilvitskii in their seminal work~\cite{karloff2010model}, the model has, over the past decade, become the standard abstraction for studying parallel algorithms on massive datasets; see, e.g.,~\cite{assadi2019massively, assadi2019sublinear, ghaffari2019massively, coy2022deterministic, andoni2014, goodrich2011sorting, ghaffari2019sparsifying, beame2017communication}.

An MPC system consists of $\EMPH{m}$ machines, each with local memory $\EMPH{s}$, for a total memory of $m \cdot s$. Computation proceeds in synchronous rounds: in each round, every machine performs arbitrary local computation, followed by a communication phase in which it sends and receives at most $s$ words, with each message addressed to a single destination (i.e., no broadcast). After a bounded number of rounds, the machines collectively output the solution.
The complexity of an MPC algorithm is measured primarily by the number of rounds and the local memory $s$, while the total space $m \cdot s$ is also an important parameter and typically constrained depending on the regime under consideration.
As is standard in the MPC model, local computation within each machine is assumed to be unrestricted.
A central parameter in the MPC model is the local memory size $s$, and the behavior of algorithms depends on how $s$ scales with the input size \EMPH{$N$}.
In this work, we focus on the \EMPH{fully scalable} sublinear regime, where the local memory size is $s=N^\sigma$ for $\sigma \in (0,1)$, which may be arbitrarily small, and may even depend on $N$ (though $\sigma$ is usually thought of as an absolute constant).%
\footnote{There are many works where the memory is sublinear $s=n^\sigma$ for some fixed $\sigma$. By fully scalable we mean that the algorithm works for every fixed $\sigma$, while the number of rounds may depend on $\sigma$.}
This strongly sublinear setting is particularly challenging and captures the constraints of modern large-scale systems.
This regime imposes stringent per-machine memory constraints while still allowing the total memory to be sufficient to store the input. In particular, we require that the total memory satisfies $m \cdot s \geq N$, and is ideally close to linear in $N$.
The central goal in MPC is to solve fundamental computational problems in as few rounds as possible, ideally $O(1)$.
In the rest of the introduction, we will refer to fully scalable constant-round MPC algorithms as \EMPH{FSCR-alg}.

\paragraph*{Geometric MPC.} In recent years, significant progress in MPC has been driven by the \EMPH{geometric setting}, broadly constructed as metric problems where the input admits a structured, implicit representation of distances.
Most of the previous work dealt with the case where the input consists of points in $\mathbb{R}^d$, often in high dimension, where distances can be computed locally from coordinate representations.
This additional structure enables algorithmic approaches to bypass the fundamental barriers present when the input is a graph (e.g., the $2$-vs-$1$ cycle conjecture%
\footnote{Suppose that the input is a graph $G=(V,E)$ where the edge set $E$ is distributed among the machines.
In the “$2$-vs-$1$ cycle” problem we are asked to distinguish between a graph that is a single large cycle and graphs that comprise two cycles. It is conjectured to require $\Omega(\log n)$ rounds.
There are several conditional hardness results for sublinear MPC based on the $2$-vs-$1$ cycle conjecture. See, e.g.,~\cite{NS22}.}%
).
Indeed, in recent years, there has been a large body of work that exploited this structural advantage in various memory regimes; see~\cite{andoni2014, behnezhad2022almost, goodrich2011sorting, CGJKV24, chang2024fully, CGGJ25, OS26}.

In the sublinear regime, Andoni, Nikolov, Onak, and Yaroslavtsev~\cite{andoni2014} initiated the study of geometric graph problems.
They obtained an FSCR-alg $(1+\eps)$-approximation for MST for metrics with constant doubling dimension (and a $(1+\eps)$-approximation for Earth Mover Distance in $\mathbb{R}^2$).
A lot of attention was devoted to fundamental clustering problems ($k$-median/means/center, facility location).
Most of the previous work~\cite{bhaskara2018distributed,chen2016communication,cohen2021parallel,coy2023parallel, enrico2023, ceccarello2017mapreduce, ceccarello2024fast} was based on classical techniques (e.g., core-sets) and typically requires local memory $s \ge \Omega(k)$, thus not yielding fully scalable MPC algorithms.

Recently, the situation drastically changed via the introduction of consistent hashing.
Czumaj, Gao, Jiang, Krauthgamer, and Vesel{\'{y}}~\cite{CGJKV24} initiated the use of consistent hashing in MPC and used it to design an FSCR-alg constant-approximation for facility location for points in high-dimensional $\mathbb{R}^d$.
This was the first FSCR-alg for facility location.
Further,~\cite{CGJKV24} used consistent hashing to design an FSCR-alg bicriteria $(O(\eps^{-2}),1+\eps)$-approximation for $k$-median and $k$-means for points in high-dimensional $\mathbb{R}^d$ (previous FSCR-algs had polylogarithmic bicriteria approximation~\cite{bhaskara2018distributed}).
In a subsequent work, Czumaj \etal~\cite{CGGJ25} utilized consistent hashing to construct an FSCR-alg $O(\frac{\log n}{\log\log n})$-approximation for $k$-center for points in high-dimensional $\mathbb{R}^d$—the first $k$-center FSCR-alg for points in high dimension.\footnote{\cite{CGGJ25} also provided improved $(2+\eps)$-approximation for points in low-dimensional $\mathbb{R}^d$.}
Lastly, Goranci \etal~\cite{goranci2026tree} constructed FSCR-alg stochastic tree embeddings for points in high-dimensional $\mathbb{R}^d$ with optimal expected distortion $O(\log n)$. Previous FSCR-algs for stochastic tree embeddings for points in $\mathbb{R}^d$ had expected distortion $\tilde{O}(\log^{1.5}n)$~\cite{ahanchi2023massively}.

\begin{wrapfigure}{r}{0.3\textwidth}
	\begin{center}
		\vspace{-30pt}		\includegraphics[width=0.95\textwidth]{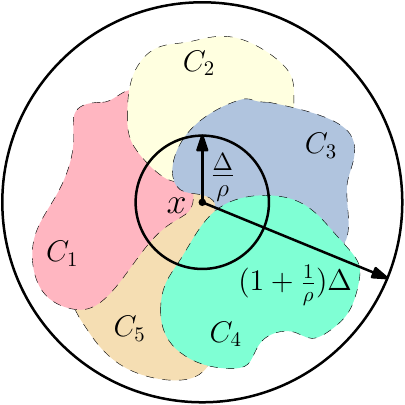}
		\vspace{-15pt}
	\end{center}
	\vspace{-10pt}
\end{wrapfigure}
\paragraph*{Consistent Hashing (a.k.a Sparse Partitions).} Given a metric $(X,d_X)$, a \EMPH{$(\rho, \tau, \Delta)$-sparse partition} is a partition of $X$ into clusters of diameter $\le \Delta$ such that every ball of radius $\frac{\Delta}{\rho}$ intersects at most $\tau$ clusters.
In the example on the right, the ball $B_X(x,\frac{\Delta}{\rho})$ intersects $5$ different clusters: $C_1,\dots,C_5$.
Sparse partitions are useful, especially in sublinear regimes, as they allow us to efficiently approximate balls and manipulate them in various ways. Indeed, let $\cC_x$ be the collection of clusters intersecting $B_X(x,\frac{\Delta}{\rho})$.
It holds that $B_X(x,\frac{\Delta}{\rho}) \subseteq \cup \cC_x \subseteq B_X\left(x,(1+\frac{1}{\rho})\Delta\right)$.
Maintaining and manipulating such clusters is often much easier than similar operations on balls.

Sparse partitions were introduced by Jia \etal~\cite{JLNRS05} (see also~\cite{awerbuch1990sparse}), who used them to construct universal TSP and universal Steiner tree (see~\cite{BDRRS12,BCFHHR24} for follow-ups).
Later, similar notions of scattering and shortcut partitions~\cite{Fil24scattering,CCLMST23Planar,CCLMST24} were used to resolve the Steiner point removal problem and to construct sparse tree covers.
Existential optimal constructions are now known in a variety of settings, including general metrics~\cite{JLNRS05}, doubling (and Euclidean) metrics~\cite{Fil24scattering}, and minor-free graphs~\cite{CF25} (see \Cref{sec:related_work} for further details).
Czumaj \etal~\cite{CJFKVY23} coined the term \EMPH{(geometric) consistent hashing} to emphasize the computational aspects of these constructions.
Beyond FSCR-algs for MPC (mentioned above), consistent hashing for high-dimensional $\mathbb{R}^d$ had been used for $O(\frac{d}{\log d})$-approximation in dynamic streaming algorithms (single pass, polylog space) for facility location,
for fully dynamic stochastic tree embedding with distortion $O_\eps(\log n)$ and update time $O(n^\eps+d)$~\cite{goranci2026tree},
for fully dynamic $\poly(\frac{1}{\eps})$-approximation algorithms for $k$-means with $\tilde{O}(k^\eps)$ update time and polylogarithmic recourse,
and for almost-linear-time constant-approximation for $k$-center~\cite{FJLNPY25}.

Interestingly, while sequential constructions of sparse partitions are known for various graph families and metric spaces, all this recent success story of consistent hashing for MPC FSCR-algs, streaming, fully dynamic, and almost-linear-time algorithms is exclusive to Euclidean space.
This is somewhat natural: indeed, Euclidean space is fixed. One can leisurely compute a consistent hash offline and then hard-code it into the local algorithm of each machine.
On the other hand, in most other scenarios, the metric space arrives as input, and it is impossible to precompute the consistent hash in advance.
But is the situation indeed so dire for other metric spaces?
Is it the case that only Euclidean space (or other predetermined spaces) is destined to enjoy the fruits of consistent hashing?

This leads to the following fundamental question, which is the driving core behind our paper:

\begin{quote}
\emph{Can one design consistent hashing schemes in the fully scalable MPC model, in constant rounds, for broader metric spaces, such as doubling or general metrics?}
\end{quote}

\paragraph*{Distance-Label Model.}
Prior works~\cite{andoni2014, CGJKV24, CGGJ25, ahanchi2023massively, OS26} on metric problems typically assume, often implicitly, access to a ground set $S$ endowed with a distance function $\delta(\cdot,\cdot)$. We make this assumption explicit via the \EMPH{Distance-Label MPC} model.
In this model, we are given a set of points $P$, where each point $p_i \in P$ is associated with a label $\ell_i$ of size $O(\log n)$ bits. The distance between any two points $p_i$ and $p_j$ is not given explicitly; instead, it can be computed in $O(1)$ time from their labels $\ell_i$ and $\ell_j$.

\subsection{Our contributions}
\paragraph*{Main theorem.}
In this work, our primary goal is to obtain a fully scalable MPC (\EMPH{FSCR-alg}) algorithm for consistent hashing schemes for a broader class of metric spaces. To this end, we develop our approach in doubling metrics and establish the following upper bound.

\begin{theorem}[Informal; see \Cref{thm:main_doubling}]\label{thm:mainInformal}
For any metric space of doubling dimension $\ddim$, there is a fully scalable MPC algorithm that, given $n$ points distributed among machines with local memory $n^\sigma$ (where $\ddim\le O(\sigma\cdot\log n)$), constructs a $(\Gamma,2^{O(\ddim/\Gamma)},\Delta)$-consistent hashing (sparse partition) for arbitrary $\Gamma\ge 16$ and $\Delta$. The algorithm runs in $O(\ddim/\sigma^3)$ rounds if the input point set has a polynomially bounded aspect ratio.
\end{theorem}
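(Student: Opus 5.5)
The plan is to simulate in MPC the standard sequential construction of sparse partitions for doubling metrics, which is a ball-carving process over a net. The sequential construction has two ingredients. First, build an $r$-net $N$ with $r=\Theta(\Delta/\Gamma)$. Second, assign each net point a random shift or radius: every point $x$ joins the first net point $y$, in a random order, for which $d(x,y)\le R_y$. The radius $R_y\in[\Delta/4,\Delta/2]$ is drawn from a truncated exponential distribution with rate $\Theta(\ddim\cdot\Gamma/\Delta)$, in the spirit of Miller--Peng--Xu or Filtser's scattering partitions. Diameter $\le\Delta$ is immediate because every cluster lies inside a ball of radius $\Delta/2$.

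For sparsity, fix a ball $B=B_X(x,\Delta/\Gamma)$. Only net points within distance $\Delta$ of $x$ can capture points of $B$, and by the packing property there are at most $2^{O(\ddim\log\Gamma)}$ of them. The memoryless property of the exponential shifts then gives a standard argument. Condition on the first candidate $y$ (in the random order) whose ball reaches $B$. Every later center that also cuts $B$ must have its radius land in a window of width $2\Delta/\Gamma$. With rate $\lambda=\Theta(\ddim\,\Gamma/\Delta)$, the expected number of cutting centers is $E[\#]\le \exp(\lambda\cdot 2\Delta/\Gamma)=2^{O(\ddim/\Gamma)}$. I would redo this calculation carefully with truncation, since the truncation loss is controlled by $|N\cap B(x,\Delta)|\cdot e^{-\lambda\Delta/4}$, which is tiny.

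The MPC implementation proceeds in the following order.

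(i) Compute the net. Use a hierarchy of nets at scales $2^i$. Build each level by a Luby-style or MIS-type procedure restricted to the bounded-degree "conflict graph" on the current points at distance $\le 2^i$. By doubling, that graph has degree $2^{O(\ddim)}\le n^{O(\sigma)}$, so a point's whole neighborhood fits on one machine, provided the constant in $\ddim\le O(\sigma\log n)$ is chosen suitably.

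(ii) Let each point learn all net points within distance $\Delta$ of it. There are only $2^{O(\ddim\log\Gamma)}$ such net points; I would first reduce to $\Gamma$ small relative to $n^\sigma$, or else replace the $\Delta$-range by tree/hierarchy navigation.

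(iii) Have each point locally choose the minimum-priority net point covering it, using shared randomness in the form of hash-derived priorities and radii.

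The main obstacle is steps (i)–(ii): without coordinates, finding neighbors within distance $\Delta$ in a general doubling metric is a similarity join. My approach is to build the net hierarchy bottom-up through $O(\log\text{aspect ratio})=O(\log n)$ scales. To avoid $\log n$ rounds, I would process $\Theta(\sigma\log n/\ddim)$ consecutive scales per "phase" by grouping points into clusters of size $n^{\sigma}$. The budget is as follows: the number of phases is $O(\ddim/\sigma)$, and each phase costs $O(1/\sigma^2)$ rounds for sorting/aggregation and MIS-style sampling with polynomial-degree fan-in. This yields the $O(\ddim/\sigma^3)$ bound. Within a phase, each coarse cluster together with its $2^{O(\ddim)}$ neighboring clusters fits on a machine, so all finer-scale net computations inside it are local. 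Making this grouping consistent, so that neighboring clusters agree on boundary net points, is the delicate part. I would handle it with a deterministic rule such as "the smallest ID wins" under the shared neighborhood view.
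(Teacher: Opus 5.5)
There are two genuine gaps. Your sparsity calculation does not give $2^{O(\ddim/\Gamma)}$, and the MPC implementation leaves its central difficulty unresolved.

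\textbf{Sparsity.} With rate $\lambda=\Theta(\ddim\,\Gamma/\Delta)$ and a window of width $2\Delta/\Gamma$, your bound is $e^{\lambda\cdot 2\Delta/\Gamma}=e^{\Theta(\ddim)}$, which does not improve with $\Gamma$ at all. Lowering the rate does not rescue this design. Because your net is at scale $\Delta/\Gamma$, up to $(2\Gamma)^{\ddim}$ centers can reach $B$. Controlling the truncation by a union bound over them requires $\lambda\Delta=\Omega(\ddim\log\Gamma)$, so this route gives at best $2^{O(\ddim\log\Gamma/\Gamma)}$. The fine net is the culprit. \Cref{lem:seq} uses a $(\Delta,\Delta/4)$-net, so only $|N_x|\le 16^{\ddim}$ centers are relevant. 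Then rate $8\ddim/\Delta$ with truncation at $\Delta/2$ makes the probability that some relevant shift is truncated at most $1/2$, and conditioning on its complement costs only a factor $2$. The memoryless tail $\Pr[\Lambda\ge k]\le(1-e^{-2r\lambda})^k$ then yields $e^{4r\lambda}=e^{O(\ddim/\Gamma)}$ for $r=\Delta/\Gamma$. Separately, it is not true that every later center cutting $B$ has its radius in a window of width $2r$. The correct statement is that each successive center capturing part of $B$ swallows the remainder with probability at least $e^{-2r\lambda}$, which gives geometric domination.

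\textbf{MPC implementation.} This part is missing its key idea. Steps (i)--(ii) require each point, at every scale, to find the net points near it, which is the similarity join you yourself flag. Grouping $\Theta(\sigma\log n/\ddim)$ scales into phases presupposes a coarse partition, but in a bottom-up order that partition is exactly what has not yet been computed. A ``smallest ID wins'' rule does not tell a boundary point which centers of neighboring regions it must compare against, nor does it bound the cost.

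The paper avoids neighbor search by working top-down (\Cref{alg:par}). For a current cluster of diameter $D$, it takes a net at scale $\max\{8D/2^t,\Delta/4\}$ with $t=\lfloor\log s/(3\ddim)\rfloor$. By packing this net has at most $s^{1/3}$ points, so it is built via \Cref{lem:net-mpc} and its shifts are broadcast to every point of the cluster. Each point then evaluates $\delta_q-d_X(x,q)$ against the whole net locally. The diameter shrinks by $\Theta(2^t)=n^{\Theta(\sigma/\ddim)}$ per level, giving $O(\ddim/\sigma)$ levels of $O(1/\sigma^2)$ rounds each.

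Boundary effects are handled by overlap rather than agreement rules. Each point keeps every center whose $f_x$-value is within $2\Delta/\Gamma$ of its maximum; these are the approximate neighborhood hash pairs. By \Cref{prop:ball}, this maintains the invariant that $x$ belongs to every current cluster containing a point of $B_X(x,\Delta/\Gamma)$. The blow-up from this overlap is controlled only because \Cref{lem:seq} bounds the expected number of kept centers by $e^{O(\ddim r/\Delta_i)}$, which tends to $1$ as $r/\Delta_i\to 0$. The product over the geometrically shrinking $\Delta_i$ is therefore $e^{O(\ddim/\Gamma)}$. A per-level factor bounded away from $1$, such as a worst-case $\tilde O(\ddim)$, would compound exponentially in the recursion depth. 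This vanishing-overlap analysis is the missing ingredient, not an implementation detail.
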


The padding-sparsity tradeoff of our consistent hashing is optimal~\cite{Fil24scattering}.
Note that this tradeoff is also optimal for $d$-dimensional Euclidean space.
Furthermore, in cases where the intrinsic doubling dimension $\ddim$ of a given point set $P\subset\R^d$ is considerably smaller than the actual Euclidean dimension $d$, it will be much preferable to use \Cref{thm:mainInformal} over the existing consistent hashes of high-dimensional Euclidean spaces.
Algorithms for Euclidean point sets with small intrinsic doubling dimension are extensively studied \cite{bartal2014impossibility, choudhary2014local, indyk2007nearest, aiger2014reporting,anagnostopoulos2018randomized, ding2019geometric, ding2020metric}.

At a high level, we revisit the classical MPX-based sparse partition framework.
However, in order to implement it in MPC, we need to take the intersection of $\poly(\frac1\sigma)$ partitions in different scales.
The classic construction \cite{Fil24scattering}, while providing worst case guarantees, bounded the number of intersections only by $\tilde{\Omega}(\ddim)$, regardless of how large the padding parameter $\Gamma$ is. 
Implementing this in our multi-scale approach would have blown up the number of intersections to exponential (even for huge $\Gamma$).
We provide a more careful analysis that bounds the expected number of intersections by $2^{O(\ddim/\Gamma)}$, which goes to $1$ as $\Gamma$ grows larger.
Thus eventually we don't lose anything by our multi-scale approach.
See \Cref{rem:ComparisonFil24} and \appref{apx:comparison-fil24} for a detailed comparison with \cite{Fil24scattering}.

\paragraph*{Lower bound.}
As general $n$-point metrics admit a $(O(\log n),O(\log n))$-sparse partition scheme \cite{JLNRS05}, one might hope to remove the bounded doubling dimension assumption from \Cref{thm:mainInformal}, and construct consistent hashing for general metrics in MPC.
Unfortunately, this is impossible.
In \Cref{thm:mainInformal} we require that $\ddim\le O(\sigma\cdot\log n)$. In other words, the local space has to be at least $s=n^\sigma=2^{\Omega(\ddim)}$.
General metrics might have doubling dimension as large as $\log n$.
We show that if this assumption does not hold, then a polynomial number of rounds is required to compute a consistent hash. 
Thus there is a sharp dichotomy: for $\ddim=O(\sigma\cdot\log n)$ there is an efficient $O(\ddim/\sigma^3)$-round algorithm, while for $\ddim=\Omega(\sigma\cdot\log n)$, one needs either a polynomial number of rounds, or $n^{1-\sigma}\cdot2^{\ddim}$ total space ($n^{2-\sigma}$ for general metrics).

\begin{theorem}[Informal; see \Cref{thm:LB-hashing}]
Any fully scalable MPC algorithm that constructs consistent hashing schemes in doubling metrics using $O(n^{1+\varepsilon})$ total space and $n^\sigma$ local memory must take at least $\Omega\!\left(2^{\ddim}/n^{\varepsilon+\sigma}\right)$ rounds.
\end{theorem}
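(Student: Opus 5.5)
We prove the lower bound by reducing from a communication-style problem in which each point must learn about an exponentially large set of ``hidden'' distances. The hard instance is a metric that locally looks like a near-uniform metric: take roughly $n/2^{\ddim}$ disjoint groups, placed very far apart so that no cluster spans two groups. Each group has $m=2^{\ddim}$ points (doubling dimension $\Theta(\ddim)$) with all pairwise distances in $\{1,2\}$ (or $\{\Delta/\Gamma',\Delta\}$ suitably scaled). A hidden ``planted'' structure inside a group is chosen at random, for instance a random perfect matching of distance-$1$ pairs with all other pairs at distance $2$. The distance labels encode this structure, but each individual label reveals only its own partner, and only through pairwise distance evaluations. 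Any valid $(\Gamma,\Lambda,\Delta)$-hash with $\Delta$ between the two distance levels must separate the distance-$2$ pairs and, by the padding/sparsity requirement, must (in expectation) keep most matched pairs together. Hence the output determines a large fraction of the hidden matching.

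Next we bound the information flow. In each round a machine holds at most $n^\sigma$ labels, so it can evaluate distances only among pairs it currently holds. A matched pair is ``discovered'' only when both endpoints are co-located on some machine. With total space $O(n^{1+\eps})$ there are at most $n^{1+\eps}/n^\sigma$ machines per round, each testing at most $n^{2\sigma}$ pairs. That gives at most $n^{1+\eps+\sigma}$ tested pairs per round. We then argue by a standard adversary/random-planting argument that, conditioned on the pairs tested so far, the partner of an undiscovered point is nearly uniform among the roughly $2^{\ddim}$ remaining points of its group. So each tested pair succeeds with probability about $2^{-\ddim}$.

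Discovering a constant fraction of the $\Theta(n)$ matched pairs therefore requires $\Omega(n\cdot 2^{\ddim})$ tested pairs in total. Dividing by $n^{1+\eps+\sigma}$ per round yields $\Omega(2^{\ddim}/n^{\eps+\sigma})$ rounds. Finally we translate correctness of the hash into discovery: if an undiscovered pair is placed in a common cluster, the algorithm is guessing blindly, so the cluster containing an undiscovered point $x$ contains its partner with probability only about $|C|/2^{\ddim}$. Combined with the diameter bound forcing clusters to avoid distance-$2$ pairs (so $|C|\le 2$), the expected number of clusters met by the ball $B(x,\Delta/\Gamma)$, which contains $x$ and its partner, is close to $2>\Lambda$ whenever $\Lambda<2$ is required. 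For larger $\Lambda$ the instance is adjusted so that the ball contains a hidden clique of size more than $\Lambda$ instead of a single partner.

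The main obstacle is making the ``co-location is necessary'' step rigorous in MPC, where machines can also forward derived information rather than raw labels. I would first try to formalize it by making labels random strings. The distance oracle depends jointly on both labels, for example through a hash that tests whether the two labels are paired. Then any information about $x$'s partner must come from an oracle call on $(x,y)$, and the argument reduces to counting oracle calls. Calibrating $\Lambda$ against the planted clique size while keeping the doubling dimension $\Theta(\ddim)$ is the secondary technical point.
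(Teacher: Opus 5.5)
Your route differs genuinely from the paper's, and in outline it is viable.

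\textbf{The paper's route.} It plants a \emph{single} uniformly random pair $e$ at distance $1$ inside one of $n/2^{\ddim}$ groups of size $2^{\ddim}$. All other intra-group distances are $10c$, and the groups sit on a line at distances $100c|i-j|$. Yao's principle shows that finding $e$ takes $\Omega(n2^{\ddim})$ queries, because a deterministic algorithm's queries form one fixed list until it hits $e$. The paper then takes $\Delta=\Gamma=c$. Every cluster other than possibly $e$ is a singleton. If $e=(u,v)$ is split, the output must still contain $(u,\varphi(v))\in\mathcal P_\varphi\subseteq\tilde{\mathcal P}_\varphi$, so $e$ is read off with no further queries. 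This uses only the \emph{deterministic} containment requirement on the neighborhood pairs, so it works for every $\Lambda$ with a one-target search bound. But it genuinely needs the algorithm to output $\tilde{\mathcal P}_\varphi$: for $\Lambda\ge 2$ the all-singletons hash is valid on that instance and costs nothing.

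\textbf{Your route.} You plant $\Theta(n)$ hidden pairs (or cliques of size $k>\Lambda$). You then use the \emph{expected} sparsity bound on the hash alone to force $\Omega(n)$ certified merges. Note that the bound forces only a $(2-\Lambda)$-fraction of matched pairs, respectively $k-\Lambda$ merges per clique, not ``most''. Completed, this proves the stronger statement that sampling the hash by itself is hard. The price is a $\Lambda$-dependent instance and a many-target search lower bound, which is more delicate than the paper's argument.

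\textbf{Points that need repair.}
\begin{itemize}
\item \emph{Distance levels.} Levels $\{1,2\}$ fail once $\Gamma\ge 2$. A planted pair must be at distance at most $\Delta/\Gamma$, and every other intra-group pair must be at distance \emph{strictly} above $\Delta$. At exactly $\Delta$, as in your $\{\Delta/\Gamma',\Delta\}$, a whole group is a legal cluster. Use e.g.\ $\{\Delta/\Gamma,2\Delta\}$; this remains a metric since the short edges form a matching (an ultrametric for cliques).
\item \emph{Placement of groups.} The far-apart groups must be arranged in a doubling way, e.g.\ on a line as in the paper. Pairwise equidistant groups have doubling dimension at least $\log_2(n/2^{\ddim})$.
\item \emph{Per-test success probability.} It is false that each tested pair succeeds with probability about $2^{-\ddim}$. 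Once a point has been tested against most of its group, its partner is nearly determined; this is also how elimination yields discoveries without a positive test. Amortize instead: only $O(Q/2^{\ddim})$ points take part in $\Omega(2^{\ddim})$ of the $Q$ tests. For a test between two lightly tested points, in a group that is still mostly unresolved, a switching argument bounds the posterior by $O(2^{-\ddim})$.
\item \emph{The co-location obstacle.} The obstacle you single out is settled in the paper by modeling, not proof. The paper assumes the metric is accessible only through distance queries between co-located points. Hence any $R$-round algorithm is a query algorithm making $O(R\cdot n^{1+\eps+\sigma})$ queries. Your label-based formalization is risky as sketched. If pairing is a publicly computable relation such as $\ell_y=h(\ell_x)$, machines can compute $h(\ell_x)$ locally and sort, pairing everything in $O(\log_s n)$ rounds without ever co-locating candidate pairs.
\end{itemize}
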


\paragraph*{Applications.}
We now turn to applications of consistent hashing. Our consistent hash bears fruit for several fundamental problems (mostly in a black-box fashion), such as clustering, stochastic tree embeddings, and EMD. 
These problems have been extensively studied in the MPC model. 
However, existing approaches~\cite{ceccare2019,sungjin2015,LIANG2025114975,  ceccarello2017mapreduce, MIRJALALI2024114296,ceccarello2017,mazzetto2019,enrico2023} require large local memory (typically $\tilde{O}(n^{1/2})$ per machine), incur additional dependence on parameters such as $k$ (number of clusters), and thus fall short of achieving the ``holy grail'' fully scalable regime. 
Consequently, fully scalable $O(1)$-round MPC algorithms for a broad class of fundamental problems remain out of reach beyond Euclidean settings.

\medskip
In the following description of our applications we assume that the doubling dimension $\ddim$ and $\sigma$ are absolute constants. 
Nevertheless, our results hold even for moderately large doubling dimension, as long as $\ddim\le O(\sigma\cdot\log n)$. Then the number of rounds is $O(\ddim/\sigma^3)$.

\medskip
\noindent \textbf{Stochastic Embeddings:}
Our framework yields a fully scalable MPC algorithm for stochastic tree embeddings for doubling metrics, achieving optimal $O(\log n)$ expected distortion in constant rounds and near-linear total space; see \Cref{cor:HST}. 
Stochastic tree embeddings are a very powerful tool, as they allow us to treat various metrics as if they were a tree. 
Indeed, we use our stochastic tree embedding to obtain the first FSCR-alg for approximating EMD for doubling metrics (see more details below). 

We also show that stochastic tree embeddings are impossible for general metrics in MPC. Indeed, $\Omega(2^{\ddim}/n^{\varepsilon+\sigma})$ rounds are required (for local space $n^\sigma$, and total space $n^{1+\eps}$); see \Cref{thm:LB-hst}.

\medskip
\noindent \textbf{Facility Location:}
We obtain a fully scalable, constant-round MPC constant-factor approximation algorithm for facility location in doubling metrics, with near-linear total space; see \Cref{cor:facilityLocation}. 
No fully-scalable approximation for doubling metrics was previously known.

\medskip
\noindent \textbf{$k$-Median and $k$-Means:}
We obtain fully scalable MPC algorithms for $k$-median and $k$-means in doubling metrics (in fact for a more general $(k,z)$-objective), achieving a $(1+\varepsilon, O(\varepsilon^{-2}))$-bicriteria approximation in constant rounds, while using near-linear total space; see \Cref{cor:kmedian}.
No fully-scalable approximation for doubling metrics was previously known.

\medskip
\noindent \textbf{Earth Mover's Distance (EMD):}
We obtain a fully scalable, constant-round MPC, $O(\log n)$-approximation algorithm for EMD in doubling metrics; see \Cref{cor:emd}. 
No fully-scalable approximation for doubling metrics was previously known.

\medskip
\noindent \textbf{$r$-Gather (a.k.a. minimum size clustering):}
We obtain a fully scalable, $O((\log\log n)^2)$-round MPC, $O(\log\log\log n)$-approximation algorithm for the $r$-gather problem in doubling metrics; see \Cref{cor:r-gather} and \Cref{cor:r-gather-nice}. 
No fully-scalable approximation for doubling metrics was previously known.

\subsection{Related work}\label{sec:related_work}
Given a metric space $(X,d_X)$, a $(\rho,\tau,\Delta)$-\EMPH{sparse cover} is a collection of clusters $\cP$ with diameter at most $\Delta$ such that every ball of radius $\Delta/\rho$ is fully contained in some cluster, and every vertex belongs to at most $\tau$ clusters.
If  $(X,d_X)$ admits a $(\rho,\tau,\Delta)$-sparse cover for every $\Delta>0$, we can simply say that it admits a $(\rho,\tau)$-sparse cover.
It is known that sparse partitions (a.k.a. consistent hashing) and sparse covers are essentially equivalent (see \cite{Fil24scattering}).
Sparse covers were introduced by Awerbuch and Peleg \cite{AP90} for general graphs, who showed that they admit $(4k-2,O(k\cdot n^{\frac1k}))$-sparse covers for every integer $k\geq 1$.
Metric spaces with doubling dimension $\ddim$ admit $(\ddim,\tilde{O}(\ddim))$-sparse covers, or $\left(O(t),2^{\frac{\ddim}{t}\cdot\ddim\cdot\log t}\right)$-sparse covers for every $t\ge1$.
Sparse covers were extensively studied for $K_r$-minor free graphs \cite{KPR93,FT03,KLMN04,AGMW10,BLT14,Fil19padded,FFIKLMZ24,Filtser25}, where recently Conroy and Filtser \cite{CF25} showed that $K_r$-minor free graphs admit $(O(1),O(r^4))$-sparse covers.

Given a weighted graph $G=(V,E,w)$, the \EMPH{strong diameter} of a cluster $C\subseteq V$ is its maximum induced distance, i.e., $\max_{u,v\in C}d_{G[C]}(u,v)$. In this paper we only use weak diameter, which is defined as $\max_{u,v\in C}d_{G}(u,v)$, the maximum distance w.r.t. the original graph distances. It is considerably harder to construct partitions with strong diameter guarantees.
In particular, it is impossible to construct strong $(O(1),O(1))$-sparse partitions even for trees \cite{Fil24scattering}.
$(O(1),O(1))$-sparse partitions are known only for graphs with either constant doubling dimension, or constant pathwidth \cite{Fil24scattering}.

\EMPH{Scattering partitions} are similar to sparse partitions, where the guarantee is that every shortest path intersects a small number of clusters (instead of a ball), and the diameter guarantee is strong.
Other than bounded-doubling and pathwidth graphs, chordal \cite{Fil24scattering} and series-parallel graphs \cite{FL22} are also known to admit $(O(1),O(1))$-scattering partitions. It is still open whether planar graphs admit $(O(1),O(1))$-scattering partitions.
\EMPH{Shortcut partitions} are similar to scattering partitions, where the difference is (roughly) that instead of the guarantee that every shortest path intersects a small number of clusters, the guarantee is that between every pair of vertices there is some approximate shortest path that intersects $O(1)$ clusters. 
It is known that planar graphs, or more generally $K_r$-minor free graphs, admit $(O(1),O(1))$-shortcut partitions \cite{CCLMST23Planar,CCLMST24}. 
Scattering and shortcut partitions have been used to obtain $O(1)$-distortion solutions for the Steiner point removal problem \cite{Fil19SPR,Fil24scattering,CCLMST24}.

Consistent hashing in Euclidean $\R^d$ was extensively studied. For every $\beta>1$, it is known how to construct a $(\beta,\tilde{O}(d)\cdot e^{O(d/\beta)})$-consistent hashing \cite{Fil24scattering}, which is tight up to second-order terms \cite{Fil24scattering}.
Note that our \Cref{thm:main_doubling} removes the $\tilde{O}(d)$ factor from the sparsity (due to the fact that our guarantees are only in expectation). \cite{BCFJJL25} constructed a different consistent hash function with the same parameters that can be stored and computed using $\tilde{O}(d^2)$ space. Unfortunately, evaluating both near-optimal hash functions \cite{Fil24scattering,BCFJJL25} takes $\exp(d)$ time.
On the other hand, it is known how to construct a $(\beta,\poly(d)\cdot e^{O(d/\beta^{\frac23})})$-consistent hash \cite{FJLNPY25,BCFJJL25} that can be both stored and evaluated in $\poly(d)$-time. Constructing efficient consistent hashing for $\R^d$ with optimal parameters is a fascinating open question.

\section{Preliminaries}
In this section, we introduce basic notions used in our main results.
We consider a metric space $(X,\dist)$ of doubling dimension $\ddim$. 
Throughout this paper, we assume that the aspect ratio of $X$ is polynomially bounded.
The \EMPH{doubling dimension} $\ddim$ of a metric space is the smallest constant such that any ball of radius $r$, $r \geq 0$, can be covered by at most $2^{\ddim}$ balls of radius $r/2$~\cite{GKL03}. It is well-known that a $d$-dimensional Euclidean space has doubling dimension $\Theta(d)$. Doubling metrics admit several useful structural properties, including nets and the packing property.
The \EMPH{packing property} can be stated as follows: 
If $S \subseteq X$ has minimum interpoint distance~$\delta$ and is contained in a ball of radius~$R$, then $|S| \le (2R/\delta)^{\ddim}$.
For $\delta \geq 0, \delta' \geq 0$, a set $N \subseteq X$ is called a \EMPH{$(\delta, \delta')$-net} if for any point $x$ of~$X$ there is a point $q \in N$ such that $d_X(x, q) \leq \delta$, and for any pair of net points $(q, q')$, it holds that $d_X(q, q') \geq \delta'$. We call $N$ a \EMPH{$\delta$-net} if $\delta = \delta'$.
The packing property implies that any $\delta$-net contained in a ball of radius $R$ has size at most $(2R/\delta)^\ddim$.

For a point set $S\subseteq X$, we use \EMPH{$\diam(S)$} to denote the diameter of $S$, that is, $\diam(S)=\max_{x,y\in S} \dist(x,y)$.
For a point $v\in X$ and a value $r>0$, we let \EMPH{$B_X(v,r)$} denote the ball centered at $v$ with radius $r$.

\paragraph*{Consistent hashing.}
We define (geometric) consistent hashing as follows. A hash $\varphi$ induces a partition of $X$ into \emph{clusters} of diameter at most $\Delta$ such that for any two points $x,y\in X$, they are contained in the same cluster if and only if $\varphi(x)=\varphi(y)$.
A partition induced by a $(\Gamma,\Lambda)$-hash with diameter bound $\Delta$ is called a $(\Gamma,\Lambda,\Delta)$-sparse partition. 

\begin{definition}[Consistent Hashing]\label{def:consistent-hashing}
Let $(X,\dist)$ be a metric space, $\cI$ an index set, and $\Delta>0$ a scale.
A distribution $\mathcal{H}$ over mappings $\varphi : X \to \cI$ is called a
randomized $(\Gamma,\Lambda)$-hash with diameter bound $\Delta$ if:
\begin{enumerate}[itemsep=-0.1em]
\item (Diameter) For every $\varphi$ in the support of $\mathcal{H}$ and
every $i\in \cI$,
$\diam(\varphi^{-1}(i)) \le \Delta$.
\item (Expected consistency) For every $S\subseteq X$ with
$\diam(S)\le \Delta/\Gamma$, we have 
$\mathbb{E}_{\varphi\sim\mathcal{H}}[|\varphi(S)|] \le \Lambda$.
\end{enumerate}
Moreover, if $\mathcal H$ is supported on a single mapping $\varphi$,
we call it a \emph{deterministic} $(\Gamma,\Lambda)$-hash.
\end{definition}

In Euclidean spaces, consistent hashing is input-independent and can be defined implicitly, rather than constructed algorithmically. 
In particular, a shared random seed suffices for all machines to obtain the same hash, and the random seed can be generated by a single machine. 
However, in doubling metrics, the situation is inherently different: consistent hashing must be computed from the input and thus requires communication between machines. Moreover, even if a machine stores the hash values of the points stored in the machine, it is not straightforward to determine the set of hash values $\varphi(B_X(x,\Delta/\Gamma))$ for a given point $x\in X$.
Due to this difference, we need to introduce the notion of \EMPH{neighborhood hash pairs} for doubling metrics explicitly, defined as follows.
Note that if we sample a $(\Gamma,\Lambda)$-hash $\varphi$ then the expected size of the set of neighborhood hash pairs is $n\cdot \Lambda$.

\begin{definition}[Neighborhood hash pairs]\label{def:hashpairs}
Given a $(\Gamma,\Lambda)$-hash $\varphi$ with diameter bound $\Delta$ for a metric space $(X,d_X)$, the \EMPH{set of neighborhood hash pairs} is 
    $$\cP_\varphi=\left\{(x,q)\mid x\in X,q\in\cI,\mbox{ and there is }y\in B_X(x,\Delta/\Gamma)\mbox{ s.t. }\varphi(y)=q\right\}~,$$
the collection of point-hash index pairs $(x,q)$ such that there is some point $y\in X$ at distance at most $d_X(x,y)\le\Delta/\Gamma$ from $x$ that is mapped to $q$.\\
A subset $\tilde{\cP}_\varphi\subseteq X\times\cI$ is called a \EMPH{set of approximate neighborhood hash pairs} if 
$\tilde{\cP}_\varphi\supseteq \cP_\varphi$, and in addition, for every $(x,q)\in \tilde{\cP}_\varphi$, it holds that $\varphi^{-1}(q)\subseteq B_X(x,2\Delta)$.
\end{definition}

\paragraph*{MPX clustering and exponential distribution.}
We use the clustering algorithm of Miller, Peng, and Xu~\cite{MPX13}, as generalized in~\cite{Fil24scattering}. Given a set of centers $N \subseteq X$ with shifts $\{\delta_t\}_{t \in N}$, each vertex~$v$ is assigned to the cluster of the center $t \in N$ that maximizes $f_v(t) = \delta_t - d_X(t,v)$. 
The following property is standard~\cite{MPX13, Fil24scattering}.

\begin{proposition}[Ball Intersection]\label{prop:ball}
Let $t^*$ be a center that maximizes $f_v$ among all centers. If $f_v(t') < f_v(t^*)- 2r$, then $B_X(v,r)$ does not intersect the cluster of~$t'$.
\end{proposition}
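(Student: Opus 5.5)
We argue by contradiction, using only the triangle inequality applied to the definition $f_v(t)=\delta_t-d_X(t,v)$ of the MPX assignment. Suppose some point $u\in B_X(v,r)$ is assigned to the cluster of $t'$. Then $t'$ maximizes $f_u$ over all centers, so in particular $f_u(t')\ge f_u(t^*)$. The idea is to transfer this inequality from $u$ back to $v$ and lose at most $2r$ in the process, which will contradict the hypothesis $f_v(t')<f_v(t^*)-2r$.

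The key step is that each individual function value changes by at most $d_X(u,v)\le r$ when the evaluation point moves from $v$ to $u$. Indeed, for any center $t$,
\[
|f_u(t)-f_v(t)| = |d_X(t,v)-d_X(t,u)| \le d_X(u,v)\le r
\]
by the triangle inequality. Hence
\[
f_u(t')\le f_v(t')+r \qquad\text{and}\qquad f_u(t^*)\ge f_v(t^*)-r.
\]
Combining these with $f_u(t')\ge f_u(t^*)$ gives
\[
f_v(t')+r \ge f_u(t') \ge f_u(t^*) \ge f_v(t^*)-r,
\]
that is, $f_v(t')\ge f_v(t^*)-2r$. This contradicts the hypothesis, so no point of $B_X(v,r)$ lies in the cluster of $t'$.

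There is no real obstacle in this argument. The only subtlety is the tie-breaking rule in the MPX assignment. Since the hypothesis is a strict inequality, the chain above rules out even $f_u(t')=f_u(t^*)$: that case would still give $f_v(t')\ge f_v(t^*)-2r$. Thus the conclusion holds under any consistent tie-breaking rule. Note also that the argument never uses that $t^*$ is the maximizer for $v$; it works for any center $t^*$ satisfying the stated gap.
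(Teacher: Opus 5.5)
Your proof is correct. The paper gives no proof of this proposition and only cites it as standard from the MPX literature; your argument is that standard one, namely that $|f_u(t)-f_v(t)|\le d_X(u,v)\le r$ for every center $t$, applied to both $t'$ and $t^*$ (your remark that maximality of $t^*$ is never used is also right).
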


In the original paper~\cite{MPX13}, the shifts are drawn from an exponential distribution. 
Let \EMPH{$\Exp(\lambda)$} denote the exponential distribution with rate~$\lambda$ and density $g(x) = \lambda e^{-\lambda x}$ for $x \ge 0$. A key property is \emph{memorylessness}: if $Y \sim \Exp(\lambda)$, then for all $a, b \ge 0$, $\Prb[Y \ge a + b \mid Y \ge a] = \Prb[Y \ge b]$.
In this case, since 
a value drawn from $\Exp(\lambda)$ can be arbitrarily large, 
the diameter of clusters obtained by the MPX algorithm can be unbounded in the worst case.
To avoid this, we use \EMPH{truncated exponential distribution} denoted by 
\EMPH{$\Texp(\lambda,\lambda_\top)$}, that is, an exponential distribution with rate parameter $\lambda$ conditioned on the event that the outcome is at most $\lambda_\top$.
The density function is then
	$g(y) = \frac{ \lambda\cdot e^{-\lambda\cdot y} }{1 - e^{-\lambda \lambda_\top}}$.
While $\Texp(\lambda,\lambda_\top)$ has bounded support (at most $\lambda_\top$), it is no longer memoryless.

\paragraph*{MPC primitives.} We use the following standard MPC primitives in our algorithms.

\begin{lemma}[Sorting in MPC; {~\cite{goodrich2011}}]\label{lem:sort}
Given $n$ comparable items
distributed across $m = n/s$ machines with space $s = n^\sigma$ per machine, the items can be sorted and redistributed so that each machine holds a contiguous block of $s$ items in the sorted order. The algorithm runs in $O(\log_s n)$ MPC rounds with high probability, using $O(n)$ total space.
\end{lemma}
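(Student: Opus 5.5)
We would prove the lemma by a recursive sample sort in which each level shrinks the subproblem size from $N$ to roughly $\sqrt{N}$, rather than by a constant factor. The point of this choice is that the round costs of the levels then form a geometric series summing to $O(\log_s n)$. With a constant-factor (or $s^{\Theta(1)}$-factor) shrink per level, each level costs $\Theta(\log_s n)$ rounds for aggregation, and the total would be quadratic.

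\textbf{Step 1 (primitives).} We first record two standard primitives for a group of $N$ items spread over $N/s$ machines, each taking $O(\log_s N)$ rounds using $O(N)$ total space.
\begin{itemize}
\item \emph{Broadcast/aggregation:} we use a tree of fan-in $s^{1/2}$ over the machines. This gives sums, prefix sums, and distribution of a value to all machines.
\item \emph{Search-tree descent:} given $k\le N$ sorted splitters, we build a $B$-tree of fan-in $s^{1/2}$ over them, with one node per machine. Each item descends one level per round, and so learns its bucket index in $O(\log_s k)$ rounds. To avoid congestion, we replicate heavily queried nodes, which is affordable by splitting load counts computed via aggregation.
\end{itemize}

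\textbf{Step 2 (one level).} Consider a subproblem of size $N>s$.
\begin{enumerate}
\item Each item independently joins a sample $R$ with probability $N^{-1/2}$, so $|R|\approx N^{1/2}$ w.h.p.
\item We rank the sample without recursion by all-pairs comparison. There are $|R|^2=O(N)$ pairs, placed on machines (rebalanced via prefix sums). The rank of each $r\in R$ is the number of sample elements smaller than it, and these counts are summed by aggregation trees in $O(\log_s N)$ rounds.
\item We take every $(c\log N)$-th sample element as a splitter. A Chernoff bound shows that w.h.p.\ every bucket has size $O(N^{1/2}\log N)$.
\item Items find their buckets by the search-tree descent.
\item We compute bucket sizes and offsets by prefix sums, and route each item to a contiguous group of machines assigned to its bucket.
\end{enumerate}
This level costs $O(\log_s N)$ rounds and $O(N)$ space.

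\textbf{Step 3 (recursion and accounting).} We recurse on all buckets in parallel. When a bucket has size at most $s$, it fits on one machine and we sort it locally. After level $i$ the sizes are $N_i\le n^{(1/2)^i}\polylog n$, so the total round count is
\[
\sum_i O(\log_s N_i)=O(\log_s n)\sum_i 2^{-i}+O(\text{lower order})=O(\log_s n).
\]
The polylog factors add only $O(\log_s\log n)$ per level. This is absorbed once $N_i\ge s$, and below that a single extra level finishes. Total space stays $O(n)$ at each level, since buckets are disjoint and each level uses $O(N)$ space per subproblem.

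\textbf{Step 4 (output and obstacle).} A final global prefix sum over bucket sizes, in $O(\log_s n)$ rounds, gives each item its global rank. We then route the items so that machine $j$ holds ranks $js,\dots,(j+1)s-1$.

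We expect the main obstacle to be load balancing inside Step 2. Two things must be controlled:
\begin{itemize}
\item no machine may send or receive more than $s$ words during routing and during the search-tree descent;
\item the $\polylog$ slack in bucket sizes must not break the geometric decay of the $N_i$.
\end{itemize}
We would handle both by computing exact counts with prefix sums before every routing step, and by choosing the sampling rate slightly above $N^{-1/2}$ so that the high-probability bounds hold at every level by a union bound.
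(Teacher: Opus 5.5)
There is a genuine gap, and it sits in the step you flag as the obstacle. The paper gives no proof of this lemma; it imports it from \cite{goodrich2011}, whose multi-search routine is what your Step 2.4 needs.

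Your outer design is sound. Shrinking the subproblem to roughly $\sqrt N$ per level does make $\sum_i O(\log_s N_i)$ geometric, and ranking the $\approx N^{1/2}$ samples with all $|R|^2=O(N)$ comparisons fits in $O(\log_s N)$ rounds. What fails is the claim that one level costs $O(\log_s N)$ rounds.

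\begin{itemize}
\item With $k\approx N^{1/2}/\log N$ splitters and fan-in $s^{1/2}$, the search tree has depth $\Theta(\log_s N)$ whenever $N\gg s^2$; at the top level this is about $1/\sigma$.
\item For an item to descend one level per round, the copies of every node on the next level must already be in place. You size the replication by ``exact counts with prefix sums before every routing step.''
\item The number of items reaching a node on level $\ell+1$ is known only after level $\ell$ has been processed. Combining that count from up to $N/s$ machines takes $\Theta(\log_s N)$ rounds.
\item So each tree level costs $\Theta(\log_s N)$ rounds, and a recursion level costs $\Theta(\log_s^2 N_i)$ rounds whenever $N_i\gg s^2$. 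The total is $\sum_i (2^{-i}\log_s n)^2=\Theta(\log_s^2 n)$, which is exactly the quadratic bound your $\sqrt N$-shrinking was meant to avoid.
\end{itemize}
Step 2.5 has the same flaw in milder form. With up to $N^{1/2}\gg s$ buckets, the bucket sizes are a sum-by-key over all machines, not a single prefix sum.

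The missing idea is that loads never need to be measured, because the sample already predicts them. By the Chernoff argument of your Step 2.3, a tree node or bucket whose key range spans $m$ consecutive splitter gaps receives at most $m\cdot O(N^{1/2}\log n)$ items w.h.p. These bounds sum to $O(N)$ on every tree level. This lets you set up all replication once, in advance:
\begin{itemize}
\item Allocate capacity for every node on every tree level, and for every bucket, by one prefix sum over the $O(k)$ nodes and buckets.
\item Broadcast each node's contents, only $O(s^{1/2})$ words, to its copies. This takes $O(\log_s N)$ rounds and $O(N)$ space in total.
\item Let each node store, for each child, the location and number of that child's copies. An item then moves to a uniformly random copy of its next node.
\end{itemize}
Since $s=n^\sigma\gg\log n$, a Chernoff bound keeps every machine's load below $s$ w.h.p., so each tree level really is one round. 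Exact counts are then needed only once, for the global ranks in Step 4.

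One smaller correction: do not raise the sampling rate above $N^{-1/2}$ to get the union bound. That makes $|R|^2$ superlinear and breaks the $O(N)$ space of the all-pairs ranking. Instead keep the rate at $N^{-1/2}$ and take every $(c\log n)$-th sample as a splitter, using $n$ rather than $N$. That already makes each bucket bound fail with probability $n^{-\Omega(c)}$.
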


\begin{lemma}[Broadcast in MPC; {~\cite{dinitz2019massively}}]\label{lem:broadcast}
    Given two machine IDs $a,b$, a machine can send, i.e., broadcast, a message of size at most $s=n^\localmemexp$ to all machines $M_a,M_{a+1},\ldots,M_b$ in $O(\log_sn)$ rounds, where $M_i$ denotes the machine with ID $i$.
\end{lemma}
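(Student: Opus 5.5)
The statement to prove is the Broadcast lemma (\Cref{lem:broadcast}). The plan is a standard $s$-ary broadcast tree over the contiguous machine range $M_a,\dots,M_b$. Let $k=b-a+1\le m$ be the number of target machines. Since the total space is polynomial in $n$, we have $k\le \poly(n)$, hence $\log_s k=O(\log_s n)$. Set the fan-out to $f=\lfloor s/L\rfloor$, where $L\le s$ is the message length. In the worst case $L=\Theta(s)$, which would force $f=O(1)$, so the tree depth would no longer be logarithmic in base $s$. The plan therefore splits into two cases.

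\textbf{Short messages.} If $L\le \sqrt{s}$, take $f\ge\sqrt{s}$. Index the targets $0,\dots,k-1$ (target $i$ is machine $M_{a+i}$) and arrange them as a complete $f$-ary tree rooted at index $0$, where node $i$ has children $fi+1,\dots,fi+f$. In round $1$ the source sends the message to $M_a$. In every later round, each machine that already holds the message forwards it to its children. A sender transmits $f\cdot L\le s$ words per round and each receiver gets exactly one copy, so the communication bounds hold. The depth is $\lceil\log_f k\rceil\le 2\log_s k+1=O(\log_s n)$. Each machine can compute its parent and children from its own ID together with $a$ and $b$, which the source includes in the message, so no extra coordination is needed.

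\textbf{Long messages.} If $L>\sqrt{s}$, split the message into $\sqrt{s}$ chunks of length at most $\sqrt{s}$. Broadcasting all chunks naively in parallel overloads senders, so I would first scatter them. The source sends chunk $j$ to machine $M_{a+j\lceil k/\sqrt s\rceil}$, one chunk per machine, using at most $L\le s$ words in a single round. The target range is thus divided into $\sqrt{s}$ contiguous blocks of about $k/\sqrt{s}$ machines, and the first machine of block $j$ holds chunk $j$. Within each block, run the short-message tree broadcast for chunk $j$ with fan-out $\sqrt{s}$. This takes $O(\log_s n)$ rounds and uses disjoint machines for different blocks, so every machine in block $j$ ends up holding chunk $j$.

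Next, every machine must collect all $\sqrt{s}$ chunks. Consider the $r$-th machine of every block. Each of these machines must end up with the $\sqrt{s}$ chunks held by its $\sqrt{s}$ counterparts, which is an all-to-all exchange within a group of $\sqrt{s}$ machines. Each machine sends its chunk, of size at most $\sqrt{s}$, to $\sqrt{s}$ partners, for $s$ words sent. It receives $\sqrt{s}$ chunks, for $s$ words received. One round suffices. If $k$ is not divisible by $\sqrt{s}$, a leftover machine with no counterpart in some block takes that chunk from any designated machine of the block, for example the machine whose index is its own index taken modulo the block size. This keeps each machine's load within a constant factor of the bound. In total the long-message case uses $O(\log_s n)$ rounds. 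The edge case $k<\sqrt{s}$ is easy: the source sends the full message to each target in turn, taking $O(k)$ rounds with $k<\sqrt{s}$. Alternatively, it sends chunks so that each target receives one chunk, after which a single all-to-all round finishes the broadcast.

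\textbf{Main obstacle.} The only delicate point is respecting the per-round receive and send budget of $s$ words when $L$ is close to $s$, and the scatter–broadcast–all-gather scheme above is my approach for that. I would verify the load accounting in two places: the all-gather round, where each machine sends and receives $\sqrt{s}\cdot\sqrt{s}=s$ words, and the handling of uneven block sizes, which must not exceed a constant multiple of $s$. The $O(1)$ constant factor is absorbed by the usual convention of local memory $O(s)$.
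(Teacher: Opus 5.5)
The paper gives no proof of this lemma; it imports it from \cite{dinitz2019massively}. Your argument therefore supplies a self-contained proof, and its overall plan is right. You correctly spot the one real difficulty. When $L=\Theta(s)$, forwarding whole copies multiplies the number of holders by only $O(1)$ per round, so you must scatter the message into $\sqrt{s}$ pieces of size at most $\sqrt{s}$. Each piece is then tree-broadcast inside its own block with fan-out $\sqrt{s}$. You finish with one all-to-all round in groups of $\sqrt{s}$ machines, where each machine sends and receives at most $\sqrt{s}\cdot\sqrt{s}=s$ words. Note that the paper only uses the lemma in \Cref{alg:par}, to broadcast the shifts of a net $N_C$, so its messages have size $|N_C|\le s^{1/3}$. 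Your short-message tree alone already covers that use. The scatter/all-gather part is what makes the lemma hold for messages of full size $s$.

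Two details need fixing before the proof is airtight.

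First, for $k<\sqrt{s}$, the option ``send the full message to each target in turn'' takes up to $k\approx\sqrt{s}=n^{\sigma/2}$ rounds, which is not $O(\log_s n)$. Drop it and use your alternative, which takes two rounds. Send chunk $i$, of size $\lceil L/k\rceil$, to the $i$-th target. Then every target sends its chunk to the other $k-1$ targets. Each machine sends and receives at most $(k-1)\lceil L/k\rceil< L+k\le 2s$ words.

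Second, placing chunk $j$ at offset $j\lceil k/\sqrt{s}\rceil$ does not give $\sqrt{s}$ balanced blocks.
\begin{itemize}
\item For $k=\sqrt{s}+1$, roughly half of these offsets fall outside $[a,b]$, so those chunks never reach the targets.
\item For $k=(\sqrt{s}-1)B+1$ with $2\le B\le\sqrt{s}$, one checks that $\lceil k/\sqrt{s}\rceil=B$, so the last block is a single machine. Under your ``index modulo block size'' rule, that machine must serve all $k-1$ other targets. It then sends about $k\sqrt{s}\approx Bs$ words, a factor $B$ (up to $\sqrt{s}$) over budget. So the constant-factor claim fails as written.
\end{itemize}
Let block $j$ start at $\lfloor jk/\sqrt{s}\rfloor$ instead. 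Then block sizes are $q$ or $q+1$, where $q=\lfloor k/\sqrt{s}\rfloor\ge 1$. Only machines at position $q$ lack counterparts, and under your rule they fetch from position $0$ of the smaller blocks. Every machine then sends its chunk to fewer than $2\sqrt{s}$ machines, so each sends fewer than $2s$ words and receives fewer than $s$.
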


\begin{lemma}[Net Construction in MPC; adapted from Lemma~6.3 of~\cite{andoni2014}]
\label{lem:net-mpc}
Let $(X, d_X)$ be an $n$-point metric space with doubling dimension $\ddim$, which can be covered by a ball of radius~$R$. 
Suppose that the points of $X$ are distributed among machines with local memory $s\ge C\log^3(n/\gamma)$, for a sufficiently large constant $C$. 
Let $t = \lfloor\frac{\log s}{3 \cdot \ddim}\rfloor$ and $\delta = 8R/2^t$. 
Then there is an MPC algorithm that with probability $1-\gamma$ computes a $(\delta, \delta/4)$-net of size at most $\sqrt[3]{s}$. The following properties hold:
\begin{enumerate}
\item The required number of machines is $O(n/s)$. Thus the total space is $O(n)$.
\item The number of rounds is 
$O(\log^2_s n)$.
\item The computation time per machine per round is
    $2^{O(\ddim)} \cdot s \cdot \log s$.
\end{enumerate}
\end{lemma}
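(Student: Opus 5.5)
The plan is to build the net greedily, in a small number of sampling phases, each of which sharply shrinks the set of points that are still uncovered. Set $\delta=8R/2^t$.

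\textbf{Size bound.} This follows directly from the packing property. Every $(\delta,\delta/4)$-net $N$ has minimum interpoint distance $\delta/4$ and lies in a ball of radius $R$. Hence
\[
|N|\le (8R/\delta)^{\ddim}=2^{t\cdot\ddim}\le s^{1/3}.
\]
So it suffices to maintain a set $N$ that is $\delta/4$-separated and, at termination, $\delta$-covering.

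\textbf{Algorithm.} We keep a current net $N$, initially empty, and the set $U_i$ of uncovered points, i.e., points at distance more than $\delta$ from $N$. Phase $i$ does the following.
\begin{enumerate}
\item Compute $|U_i|$ by aggregating per-machine counts along a tree of fan-in $s$, in $O(\log_s n)$ rounds.
\item Each machine samples each of its uncovered points independently with probability $\min\{1,s^{2/3}/|U_i|\}$.
\item Route the sample, of size $O(s^{2/3})$ w.h.p., to a single machine.
\item On that machine, scan the sample greedily and add a sampled point to $N$ if it is at distance more than $\delta$ from the current $N$. New points are uncovered and pairwise more than $\delta$ apart, so $N$ stays $\delta/4$-separated and its size stays at most $s^{1/3}$.
\item Broadcast $N$ to all machines using \Cref{lem:broadcast}. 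Each machine recomputes which of its points remain uncovered.
\end{enumerate}
Once $|U_i|\le s^{2/3}$, all of $U_i$ is shipped to one machine and the greedy step is finished there, which makes $N$ $\delta$-covering.

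Each phase costs $O(\log_s n)$ rounds. Total space stays $O(n)$, because we only store points, flags, and $O(n/s)$ copies of a set of size $s^{1/3}$. Local time is dominated by testing each of the $s$ local points against the $s^{1/3}$ net points, which should be organizable within the stated $2^{O(\ddim)}\cdot s\log s$ bound (e.g., using a hierarchical structure over $N$).

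\textbf{Main step: progress per phase.} The key claim is that w.h.p.
\[
|U_{i+1}|\le |U_i|\cdot\frac{C\log(n/\gamma)}{s^{1/3}}.
\]
Given this and $s\ge C\log^3(n/\gamma)$, the number of phases is $O(\log_s n)$, giving $O(\log_s^2 n)$ rounds in total.

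To prove the claim, fix (only for the analysis) a $\delta/2$-net $M$ of $X$. By packing, $|M|\le (4R/\delta)^{\ddim}\le s^{1/3}$. Assign every point to a nearest point of $M$; call the resulting groups cells. Each cell has diameter at most $\delta$.

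Suppose a cell contains at least $k=C\log(n/\gamma)\,|U_i|/s^{2/3}$ points of $U_i$. Then with probability $1-\gamma/\poly(n)$ at least one of them is sampled. That sampled point is either added to $N$ or was already within $\delta$ of $N$. In the first case every point of the cell ends up within $\diam(\text{cell})\le\delta$ of $N$. In the second case a doubled-radius argument applies instead. To make this step airtight, I would use cells of diameter $\delta/2$ (a $\delta/4$-net $M$, still of size $2^{O(t\,\ddim)}$ by adjusting constants) and a covering radius of $\delta/2$ in the greedy test. I expect this constant bookkeeping between separation, covering radius and cell diameter to be the delicate point.

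Therefore, after the phase only cells with fewer than $k$ uncovered points can contribute to $U_{i+1}$, so
\[
|U_{i+1}|\le |M|\cdot k\le s^{1/3}\cdot C\log(n/\gamma)\,\frac{|U_i|}{s^{2/3}}.
\]
A union bound over all phases and cells yields overall success probability $1-\gamma$.
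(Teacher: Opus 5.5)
The paper does not prove this lemma; it imports it from Andoni et al. So I am judging your argument on its own terms. The scheme is sound: sample uncovered points, insert greedily on one machine, broadcast, and analyze through a fixed partition of $X$ into few small-diameter cells. Your repair of the radius bookkeeping also works, provided $U_i$ keeps the threshold $\delta$ and only the insertion test uses $\delta/2$. Then a sampled point of a cell of diameter at most $\delta/2$ either enters $N$ or already lies within $\delta/2$ of $N$, so the whole cell ends within $\delta$ of $N$, and $N$ stays $\delta/2$-separated. No constants need adjusting: a $\delta/4$-net $M$ has $|M|\le(8R/\delta)^{\ddim}=2^{t\ddim}\le s^{1/3}$. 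You do need this exact bound, not merely $2^{O(t\ddim)}$, for the progress estimate.

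\textbf{The gap is the phase count.} With a sample of size $s^{2/3}$, the union bound over cells gives a contraction factor $c\log(n/\gamma)/s^{1/3}$, where $c$ is the constant needed for the hitting probability. The hypothesis $s\ge C\log^3(n/\gamma)$ only guarantees $s^{1/3}\ge C^{1/3}\log(n/\gamma)$. So at the boundary of the hypothesis this factor is a constant $c/C^{1/3}$. With your single constant $C$ it is even $C^{2/3}>1$, i.e.\ no guaranteed progress at all.

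Your argument therefore only bounds the number of phases by $O(\log n)$, and the rounds by $O(\log n\cdot\log_s n)$. That is a $\Theta(\log s)$ factor above the claim. The inference ``given this and $s\ge C\log^3(n/\gamma)$, the number of phases is $O(\log_s n)$'' is valid only when $\log(n/\gamma)\le s^{1/3-\Omega(1)}$. That holds for the paper's $s=n^\sigma$, but it is not what the lemma assumes.

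\textbf{Fix: use the whole memory.} Sample with probability $\min\{1,s/(4|U_i|)\}$; the sample plus $N$ still fits w.h.p. Finish once $|U_i|\le s/4$. The heavy threshold becomes $k=\Theta(\log(n/\gamma)\,|U_i|/s)$, and
\[
|U_{i+1}|\le s^{1/3}\,k=O\bigl(\log(n/\gamma)\,s^{-2/3}\bigr)\,|U_i|\le O(C^{-1/3})\,s^{-1/3}\,|U_i|,
\]
using $\log(n/\gamma)\le(s/C)^{1/3}$. This is exactly where the cube in the hypothesis is spent, and it gives $O(\log_s n)$ phases and $O(\log_s^2 n)$ rounds.

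Separately, item~3 is only asserted. Testing $s$ local points against $N$ naively costs $s\cdot|N|$, and $|N|$ can be polynomial in $s$, so this can far exceed $2^{O(\ddim)}s\log s$. You need a net hierarchy on $N$, with $O(t)$ levels and $2^{O(\ddim)}$ candidates per level by packing. It is needed both for the coverage tests and for the greedy step once the sample has size $\Theta(s)$.
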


The net scale $\delta = 8R/2^t$ is determined by the space-per-machine ratio $n/m = s$. Given a cluster of diameter~$D$, it is contained in a ball of radius~$D$ around any point in the cluster, and thus the finest $\delta$-net that fits in memory has $\delta = 8D/2^t$, where $t = \lfloor\frac{\log s}{3 \cdot \ddim}\rfloor$. Note that \Cref{lem:net-mpc} actually produces a $(\delta, \delta/4)$-net, which is in particular a $\delta$-net (every point is within distance~$\delta$ of a net point) whose minimum interpoint distance is at least~$\delta/4$.

\section{Consistent hashing for doubling metrics}\label{sec:doubling}
In this section we show how to construct a \EMPH{$(\Gamma, \Lambda)$-consistent hash} with diameter bound $\Delta$ (see \Cref{def:consistent-hashing}) for a metric space $(X, d_X)$ with doubling dimension $\ddim$, working in the fully scalable MPC model with $s = n^{\sigma}$ space per machine. Our construction achieves $\Lambda = 2^{O(\ddim/\Gamma)}$ in expectation and terminates in $O(\ddim/\sigma^{3})$ MPC rounds using $\widetilde{O}(n)$ total space with high probability.

Filtser~\cite{Fil24scattering} constructed \EMPH{sparse partitions} for doubling metrics based on the classical clustering algorithm of Miller, Peng, and Xu~\cite{MPX13} (MPX) with \emph{``betailed exponential shifts"}\footnote{A \emph{betailed} exponential distribution with parameters $\lambda,\lambda_{\top}$ is an exponential distribution with parameter $\lambda$, where a value above $\lambda_{\top}$ collapses to  $\lambda_{\top}$}.
In this section, we revisit this construction. However, in contrast to the worst-case analysis of~\cite{Fil24scattering}, we provide an alternative analysis based on the expected number of intersections. This improves the bound on the number of intersections. Building on this analysis, we develop a parallel algorithm that implements this construction in the MPC model.

The section is organised as follows. We first describe the MPX-based partition of~\cite{Fil24scattering} for a single point set (\Cref{sec:sequential}) and establish an expected bound on the number of cluster intersections per ball. We then present a \emph{parallel construction} (\Cref{sec:parallel}) that applies this partition across all clusters simultaneously in each MPC round, iterating until cluster diameters reach the target~$\Delta$. 
Then we prove that using the sparse partition constructed by \Cref{alg:par}, we can construct consistent hash functions for metric spaces with bounded doubling dimension in MPC in $O(\ddim/\sigma^{3})$ rounds and $O(n \log n)$ total space. We also provide a construction for \EMPH{approximate neighborhood hash pairs} (see \Cref{def:hashpairs}).
For many applications including facility location, $k$-median and $k$-means clustering and minimum-size clustering, it suffices to use approximate neighborhood hash pairs. However, for stochastic tree embedding, we need a representation of a hash stronger than approximate neighborhood hash pairs.
We define this stronger representation and show how to obtain the representation in Section~\ref{sec:neighborhood-oracle}.

\subsection{Sparse partition for doubling metrics}\label{sec:sequential}

First, we describe the sparse partition construction of~\cite{Fil24scattering}, which uses the MPX clustering framework of Miller, Peng, and Xu~\cite{MPX13} for metrics of bounded doubling dimension~$\ddim$. 
Given a finite metric space $(X, d_X)$ of doubling dimension~$\ddim$ and diameter~$D$, together with a $(\Delta,\Delta/4)$-net $N$ of~$X$ (for $\Delta \le D$), we partition $X$ into clusters of diameter at most~$3\Delta$. 
We assign to each net point $q\in N$ a random shift \EMPH{$\delta_q$} drawn from a truncated exponential distribution with rate \EMPH{$\lambda$}$=c \cdot \ddim/\Delta$ 
(for a constant $c \ge 8$ to be determined later), truncated at \EMPH{$\lambda_\top$}$=\Delta/2$. 
Each point $x \in X$ is then assigned to the center $q$ that maximizes \EMPH{$f_x(q)$}$ = \delta_q - d_X(x,q)$, i.e., the center whose \emph{``spreading wave''}\footnote{ We can think of each center $q$ waking up at time $-\delta_q$ and spreading its wave, i.e., other points begin joining its cluster. See \cite{Fil24scattering} for further explanation and intuition.} reaches~$x$ first.  
The truncation enforces a diameter bound, while the exponential shifts ensure that nearby centers are unlikely to compete for the same region of space. 

\begin{lemma}[Sparse Partition for Doubling Metrics]\label{lem:seq} Let $(X, d_X)$ be a finite metric space with doubling dimension~$\ddim$ and diameter~$D$, and let $N$ be a $(\Delta,\Delta/4)$-net of~$X$ with $\Delta \le D$. We can construct a partition $\Pi$ such that
\begin{enumerate}
	\item[(i)] All clusters of $\Pi$ have a diameter at most~$3\Delta$.
	\item[(ii)] For any ball $B_x = B_X(x,r)$ (for $r\le\frac\Delta2$), the
	expected number of intersections between $\Pi$ clusters and $B_x$ is at most
	$e^{O(\ddim\cdot\frac{r}{\Delta})}$. In particular, for $r=O(\frac{\Delta}{\ddim})$, there are at most $1+O(\ddim\cdot\frac{r}{\Delta})$
	expected intersections.
\end{enumerate}
\end{lemma}

\begin{proof} 

Let $N$ be a $(\Delta,\Delta/4)$-net. A point $x$ joins the cluster of the center $q\in N$ that maximizes $f_x(q) = \delta_q - d_X(x,q)$.
All shifts satisfy $0 \le \delta_q \le \Delta/2$ (by truncation). For any point $x \in X$, there is a net point $q_x \in N$ such that $d_X(x,q_x) \le \Delta$. Hence for $x$ and $q_x$, $f_x(q_x) = \delta_{q_x} - d_X(x,q_x) \ge -\Delta$. Since $x$ joins center $q$, $f_x(q) \ge f_x(q_x) \ge -\Delta$. This gives us $d_X(x,q) \le \delta_q + \Delta \le \frac32\Delta$. 
By the triangle inequality, the cluster centered at $q$ has diameter at most $3\Delta$.
This proves guarantee (i).
Furthermore, any center $q'$ whose partition intersects $B_X=B_X(x,r)$ at a point $z\in B_X$ must satisfy $d_{X}(x,q')\le d_{X}(x,z)+d_{X}(z,q')\le r+\frac{3}{2}\Delta\le2\Delta$. Define $\brick{N_x} = N \cap B_X(x, 2\Delta)$. 
Note that only centers from $N_x$ have a nonzero chance to intersect $B_x$.
By the packing property, since $N_x$ is a $(\Delta,\Delta/4)$-net with minimum interpoint distance~$>\Delta/4$, contained in a ball of radius~$2\Delta$:
\begin{equation}\label{eq:Nx}
	|N_{x}|\le\left(\frac{2\cdot2\Delta}{\Delta/4}\right)^{\ddim}=16^{\ddim}~.
\end{equation}

Now we analyze the number of intersections. Denote by $\Lambda$ the number of clusters intersecting~$B_x$ minus~$1$. That is, if the entire ball~$B_x$ is contained in a single cluster, then $\Lambda = 0$; in general the number of intersections is $\Lambda + 1$.
In the rest of the analysis we consider only the centers in $N_x$, as the clusters centered in all the other centers will not intersect $B_x$.
First, suppose that the shifts of the centers in $N_x$ were sampled using a classical (memoryless) exponential distribution $\Exp(\lambda)$.
We will analyze $\Lambda$ based on this assumption. 
Later, we will incorporate truncation.

Order the centers in $N_x$ w.r.t. their $f_x$ values. 
That is $N_x=\left\{q_{(1)},q_{(2)},\dots,q_{(i)},\dots\right\}$ where $f_x(q_{(1)})\ge f_x(q_{(2)})\ge\cdots\ge f_x(q_{(i)})\ge\cdots$.
Note that $q_{(1)}$ is a random variable.
Recall that by \Cref{prop:ball} a center $q_{(i)}$ can intersect $B_x$ only if $f_x(q_{(i)})\ge f_x(q_{(1)})-2r$.
It follows that for at least $k+1$ intersections to occur, it must be the case that $f_x(q_{(k+1)})\ge f_x(q_{(1)})-2r$.
We think of the process of sampling the shifts as a gradual process, where we grow the shifts continuously, and occasionally some centers ``die out''.
Consider the point in time where there are exactly $k$ active centers remaining. At this point the identity of $\{q_{(k+1)},q_{(k+2)},\dots\}$ is already determined, while $\{q_{(1)},\dots,q_{(k)}\}$ are yet undetermined.
Denote by $\cN\subseteq N_x$ this set of yet active centers.
For the event $\Lambda\ge k$, it must happen that every active center $t\in\cN$ will grow additionally by at most $2r$.
Fix $t\in\cN$. By memorylessness, conditioned on $t$ still being active at this point, the additional amount by which it grows is distributed as an independent exponential random variable. 
The probability that it will grow by at most $2r$ is

\begin{align*}
	& \Pr\left[f_{x}(t)-f_{x}(q_{(k+1)})\le2r\mid f_{x}(t)\ge f_{x}(q_{(k+1)})\right]\\
	& \quad=\Pr\left[\delta_{t}\le2r+\delta_{q_{(k+1)}}-d_{X}(x,q_{(k+1)})+d_{X}(x,t)\mid\delta_{t}\ge\delta_{q_{(k+1)}}-d_{X}(x,q_{(k+1)})+d_{X}(x,t)\right]\\
	& \quad\le\Pr[\Exp(\lambda)\le2r]=1-e^{-2r\lambda}~.
\end{align*}
As the samples of the shifts are completely independent, we conclude
\begin{align}
	\Pr[\Lambda\ge k] & \le\Pr[f_{x}(q_{(1)})-f_{x}(q_{(k+1)})\le2r]\nonumber \\
	& =\Pr[\forall t\in\mathcal{N},~f_{x}(t)-f_{x}(q_{(k+1)})\le2r]\le\left(1-e^{-2r\lambda}\right)^{k}~.\label{eq:tail}
\end{align}
It follows that,
\begin{equation}
	\E[\Lambda]=\sum_{k\ge1}\Prb[\Lambda\ge k]\le\sum_{k\ge1}\left(1-e^{-2r\lambda}\right)^{k}=\frac{1-e^{-2r\lambda}}{e^{-2r\lambda}}=(1-e^{-2r\lambda})\cdot e^{2r\lambda}~.\label{eq:Ef-before}
\end{equation}

Next, we incorporate the truncation. 
Let $\Psi$ denote the event that some center $q \in N_x$ sampled a shift ${\delta}_q > \Delta/2$.
Equivalently, the truncated shifts used by the algorithm are obtained by conditioning the auxiliary exponential shifts on $\overline{\Psi}$.

By a union bound: $\Prb[\Psi] \le |N_x| \cdot e^{-\lambda \cdot \lambda_{\top}} = 16^\ddim \cdot e^{(-c \cdot \ddim)/2} \le 1/2$ for $c \ge 8$, and so
\begin{equation}
	\E[\Lambda\mid\bar{\Psi}]\le\frac{\E[\Lambda]}{\Prb[\overline{\Psi}]}\le\frac{\E[\Lambda]}{1-1/2}=2\cdot(1-e^{-2r\lambda})\cdot e^{2r\lambda}=2\cdot(e^{2r\lambda}-1)~,\label{eq:ExpectedIntersections}
\end{equation}
Since the number of clusters intersecting $B_x$ is $\Lambda+1$, we conclude
\[
\E\left[\#\text{intersections}\right]=1+\E[\Lambda\mid\bar{\Psi}]\le1+2\cdot(e^{2r\lambda}-1)=2\cdot e^{2r\lambda}-1\overset{(*)}{\le}e^{4r\lambda}=e^{O(\ddim\cdot\frac{r}{\Delta})}~.
\]
Here $(*)$ holds as for $a\ge 0$, $e^{2a}-2e^a+1=(e^a-1)^2\ge0$.
\end{proof}

\begin{remark}[Comparison with \cite{Fil24scattering}]
\label{rem:ComparisonFil24}
\Cref{lem:seq} is closely related to the MPX-based construction of
\cite[Theorem~10]{Fil24scattering}. The crucial difference is that
\cite{Fil24scattering} proves a worst-case bound of
$e^{O(\ddim\cdot\frac{r}{\Delta})}\cdot\tilde{O}(\ddim)$ on the number of
intersections, while \Cref{lem:seq} proves an expected bound with the
\EMPH{vanishing-overlap property} (\EMPH{VOP}): as the radius of the ball goes to $0$,
the expected number of intersected clusters goes to $1$.
\cite{Fil24scattering} lacks VOP.
Note that VOP cannot be guaranteed with respect to a worst-case bound:
some infinitesimally small ball must intersect several clusters.
VOP is crucial for our application, as we apply \Cref{lem:seq} repeatedly
over multiple recursion levels. Accordingly, without VOP, the expected
number of intersections would be exponential in the recursion depth.
Moreover, \textbf{\Cref{lem:seq} and its analysis are arguably superior},
as they can be used to prove \cite{Fil24scattering} in a simpler and more
straightforward way, and with better (hidden) constants.
See \appref{apx:comparison-fil24} for additional details.
\end{remark}

\subsection{MPC construction of sparse partition}\label{sec:parallel}
In this section we construct the algorithm (\Cref{alg:par}) creating
consistent hashing in MPC (see \Cref{thm:main_doubling}).
\Cref{alg:par} lifts the sequential subroutine (\Cref{lem:seq}) to the MPC
model via a recursive strategy.
At a high level, we start with a single cluster $\{X\}$.
At each recursion step, we are given a consistent hash $\phiin$ and a set
$\Pin$ of approximate neighborhood hash pairs.
The role of $\Pin$ is to represent all the clusters intersecting the ball
$B_x=B_X(x,\frac{\Delta}{\Gamma})$ around each point $x$.
That is, $\Pin$ induces a cover $\Pi$, which is a collection of possibly
overlapping clusters.
Morally, we iteratively execute the MPX algorithm with ever-decreasing diameter
parameters.
In each iteration, each point $x$, rather than joining only the cluster
corresponding to the winner of the MPX experiment, joins all clusters whose
centers $q$ almost won the competition. Specifically, these are centers whose
$f_x$-gap from the winner is less than $\frac{2\Delta}{\Gamma}$.
Note that this includes all the clusters intersecting $B_x$.
This new overlapping collection of clusters is represented by the new set of
approximate neighborhood hash pairs $\Pout$, on which we continue recursively.

\begin{algorithm}[t]
	\caption{\texttt{MPC-Sparse-Partition}$((X,d_X),\phiin,\Pin,D,\ddim,\Gamma,\Delta,s)$}
	\label{alg:par}
	\DontPrintSemicolon
	\SetKwInOut{Input}{input}\SetKwInOut{Output}{output}
	\Input{Metric space $(X,d_X)$ with doubling dimension $\ddim$ distributed among machines with local space $s$. 
    Consistent hash $\phiin$, and the corresponding approximate neighborhood hash pairs $\Pin$,
    where every cluster induced by $\Pin$ has diameter $\le D$.
    Target scale $\Delta$, and padding parameter $\Gamma\geq 16$.}
	\Output{
		Consistent hash $\phiout$, Approximate neighborhood hash pairs $\Pout$.}
	\BlankLine
	\If{$D\le\Delta$}{ \nllabel{line:base}
		\Return{$(\phiin,\Pin)$} \tcp*[r]{base case}
	}
	$\Pout\leftarrow\emptyset$, 
	$t\gets\lfloor\tfrac{\log s}{3\ddim}\rfloor$;\quad
	$\tilde{\Delta}\gets\max\{8D/2^t,\Delta/4\}$\;\nllabel{line:scale}
	$\Pi\gets\left\{C_q=\left\{ x\mid(x,q)\in\Pin\right\}\mid q\in X \right\}$\tcp*[r]{the collection of clusters induced by $\Pin$}
	\ForEach(\tcp*[f]{in parallel}){cluster $C\in\Pi$ (defined w.r.t. center $q_C$)}{\nllabel{line:clusters}
		construct a $(\tilde{\Delta},\tilde{\Delta}/4)$-net $N_C$ of $C$ via \Cref{lem:net-mpc}\nllabel{line:net}\;
		sample a shift $\delta_q\sim\Texp\bigl(\tfrac{8\ddim}{\tilde{\Delta}},\tfrac{\tilde{\Delta}}{2}\bigr)$ for every $q\in N_C$\nllabel{line:shifts}\;
		broadcast $\{(q,\delta_q)\}_{q\in N_C}$ to all machines holding neighborhood pairs corresponding to $C$\nllabel{line:broadcast}\;
		\For{$x\in C$}{
			let $M_C(x) := \max_{q\in N_C} \delta_q-d_X(x,q)$\; 
			let $q_C^\star(x) := \argmax_{q\in N_C} \delta_q-d_X(x,q)$\nllabel{line:assign}\tcp*[r]{the winner of $x$ w.r.t. $C$}
			\textbf{if} $\phiin(x)=q_C$ \textbf{set} $\phiout(x)\leftarrow q_C^\star(x)$\tcp*[r]{definition of the new consistent hash}
			\For{$q\in N_C$}{
			\textbf{If} $\delta_{q}-d_X(x,q)\ge M_C(x)-\frac{2\Delta}
			{\Gamma}$ \textbf{then} add $(x,q)$ to $\Pout$\nllabel{line:addpairs} \;
			}
		}
        
	}
	\Return{$\texttt{MPC-Sparse-Partition}((X,d_X),\phiout,\Pout,4\tilde{\Delta},\ddim,\Gamma,\Delta,s)$}\;\nllabel{line:recurse}
\end{algorithm}

\paragraph*{Detailed description of \Cref{alg:par}.}
The \texttt{MPC-Sparse-Partition} algorithm 
(\Cref{alg:par}) constructs the hash and the approximate neighborhood hash pairs via a recursive coarse-to-fine clustering of $X$.
The procedure takes as input a consistent hash $\phiin:X\rightarrow X$ and approximate neighborhood hash pairs $\Pin\subseteq X\times X$.
Given $\Pin$, we associate with it the set of clusters $\Pi\gets\left\{C_q=\left\{ x\mid(x,q)\in\Pin\right\}\mid q\in X \right\}$.
That is, the cluster $C_q$ consists of all the points $x$ such that $(x,q)$ is an approximate neighborhood hash pair in $\Pin$. Notice that the clusters $C_q$ might overlap with each other.
In addition, we receive as input a parameter $D$, where it is guaranteed that every cluster in $\Pi$ has diameter at most $D$.
The algorithm is initialized with $\phiin(x)=q'$ for all $x\in X$ and $\Pin=\{(x,q')\}_{x\in X}$,
meaning that all the points are mapped into an arbitrary point $q' \in X$.
Accordingly, $\Pi=\{X\}$.
We also provide a
diameter $D_0$ which is a $2$-approximation of $\diam(X)$ and can be computed in $O(1)$ rounds by a standard algorithm (compute the maximum distance from an arbitrary point and multiply it by $2$).
The base case of the algorithm is when $D\le\Delta$ (\algline{line:base}). Here we terminate and simply return the input $(\phiin,\Pin)$, which already satisfies the diameter guarantee. 

For the recursive case, we process all clusters $C\in\Pi$ in parallel.
For this purpose, we rearrange the pairs in $\Pin$ such that for each cluster $C\in \Pi$, all pairs $(\cdot, q_C)\in \Pin$ are stored in a block of consecutive machines, where $q_C$ is the center of $C$.
We can ensure this layout by sorting them in $O(\log_s n)$ rounds using $O(|\Pin|)$ total space via \Cref{lem:sort}.

Set $t=\lfloor \log s/(3\ddim)\rfloor$ and
$\tilde{\Delta}=\max\{8D/2^t,\Delta/4\}$ (\algline{line:scale}).
For every cluster $C\in\Pi$, we construct a
$(\tilde{\Delta},\tilde{\Delta}/4)$-net $N_C$ of $C$ via \Cref{lem:net-mpc} (\algline{line:net}).
The choice of $\tilde{\Delta}$ makes the net small enough to fit on one machine, while the lower bound
$\tilde{\Delta}\ge\Delta/4$ ensures that the recursion stops with diameter at most $\Delta$.
Indeed, we show below that every cluster created in this step has diameter at most $4\tilde{\Delta}$.
By the packing property and the choice of $\tilde{\Delta}$,
$|N_C|\le (8D/\tilde{\Delta})^{\ddim}\le 2^{t\cdot\ddim}\le s^{1/3}$,
and hence $N_C$ fits on a single machine.
Next, we implement \Cref{lem:seq} in a white-box fashion. 
Specifically, we sample $\delta_q$ for each $q\in N_C$ in that machine (\algline{line:shifts}), and broadcast $\{(q,\delta_q)\}_{q\in N_C}$ to all machines storing pairs from $C$ (\algline{line:broadcast}).
Then, we apply the MPX rule locally: each point $x\in C$ finds the center $q=q_C^*(x)$ maximizing $f_x(q)=\delta_q-d_X(x,q)$ (\algline{line:assign}).
Also, let $M_C(x)=f_x(q)=\delta_q-d_X(x,q)$ with $q=q_C^*(x)$.
As $x$ might belong to more than one cluster of $\Pi$, we set $\phiout(x)=q_C^*(x)$ only when
$\phiin(x)=q_C$.
In this way, every point $x$ of $X$ has its new hash value $\phiout(x)$.

However, we still need to compute a set of approximate neighborhood hash pairs. 
We say that $q'\in N_C$ is an \brick{almost-winner} for $x$ if 
\begin{equation}
\delta_{q'}-d_X(x,q')\ge M_C(x)-\frac{2\Delta}{\Gamma}\label{eq:competitive}
\end{equation}
For every $x\in X$ and $q'\in N_C$ such that
$q'$ is an almost-winner of $x$, 
we add $(x,q')$ to $\Pout$, the newly created approximate neighborhood hash pairs (\algline{line:addpairs}).
Note that the definition of almost-winner is w.r.t. the target scale $\Delta$, and not to the current scale $\tilde{\Delta}$.
This implies that in all but the last level, in the most likely scenario $x$ will join a single pair $(x,\phiout(x))$.
Finally, after processing (in parallel) all the clusters $C\in\Pi$, we make a recursive call with all the collected pairs $\Pout$ and the new hash $\phiout$.

\sloppy
\paragraph*{Diameter bound.}
We prove by induction that
every cluster induced by $\Pin$ has diameter at most $D$
at every recursive call
\texttt{MPC-Sparse-Partition}$((X,d_X),\phiin,\Pin,D,\ddim,\Gamma,\Delta,s)$.
The initial call satisfies this by the choice of $D_0$.

Consider one recursive call, and fix a pair $(x,q)\in\Pout$.
Let $C\in\Pi$ be the cluster in which this pair was added.
Since $N_C$ is a $\tilde{\Delta}$-net of $C$, there is a net point
$q_x\in N_C$ with $d_X(x,q_x)\le\tilde{\Delta}$.
Thus $M_C(x)\ge \delta_{q_x}-d_X(x,q_x)\ge -\tilde{\Delta}$,
as all shifts lie in $[0,\tilde{\Delta}/2]$.
Moreover, $(x,q)$ is added to $\Pout$ only if $q$ is an almost-winner for $x$.
Hence
\[
d_X(x,q)
\le \delta_q-M_C(x)+\frac{2\Delta}{\Gamma}
\le \frac{\tilde{\Delta}}2+\tilde{\Delta}+\frac{2\Delta}{\Gamma}
\le 2\tilde{\Delta},
\]
where the last inequality uses $\Gamma\ge16$ and
$\tilde{\Delta}\ge\Delta/4$.

Now fix a cluster $C_q=\{x\mid (x,q)\in\Pout\}$ induced by $\Pout$.
The above inequality shows that every point of $C_q$ lies within distance
$2\tilde{\Delta}$ from $q$. Therefore $\diam(C_q)\le4\tilde{\Delta}$.
Since the next recursive call uses diameter parameter $4\tilde{\Delta}$,
the induction continues.

\paragraph*{Round and space complexities.}
We first analyze the recursion depth of the algorithm. 
Ignoring the final capped scale, the next diameter parameter is
$D'=4\tilde{\Delta}=\Theta(D/2^t)$, where
$t=\lfloor\tfrac{\log s}{3\ddim}\rfloor$.
Thus the diameter parameter decreases by a factor
$b=\Theta(2^t)$ in each recursive step. 
With local space $s=n^\sigma$ and assuming $\sigma\log n\ge\Omega(\ddim)$,
we have $\log b=\Theta(\frac{\sigma\log n}{\ddim})$.
At the beginning of the algorithm, the diameter parameter is at most
$D_0\le 2\diam(X)$. Since the aspect ratio of $X$ is $n^{O(1)}$,
$\log(D_0/\Delta)=O(\log n)$. Hence the recursion depth is
\[
O\left(\frac{\log(D_0/\Delta)}{\log b}\right)
=
O\left(\frac{\ddim}{\sigma}\right).
\]

Now we analyze the number of rounds used in each recursion step. 
The number of rounds depends on the size of $\Pout$ and $\Pin$. Let $n_\varphi:=|\Pin|+|\Pout|+|X|$. 
We first rearrange the pairs of $\Pin$ to process the clusters of $\Pi$ in parallel. This takes $O(\log_s n_\varphi)$ by \Cref{lem:sort}.
The next step is to construct a net for each cluster. Since we process all clusters in parallel, we can construct a net for every cluster in $O(\log_s^2 n_\varphi)$ rounds in total by \Cref{lem:net-mpc}.
For the broadcasting step, observe that the size of $N_C$ is at most $s^{1/3}$. Thus we can broadcast the shifts for all points of $N_C$ in $O(\log_s n_\varphi)$ rounds in total for all clusters by \Cref{lem:broadcast}.
All remaining steps are local.
Therefore, each recursion step uses $O(\log_s^2 n_\varphi)$ rounds, and the total round complexity
is $O((\ddim/\sigma)\cdot \log_s^2 n_\varphi)$.
The total space complexity is clearly $O(n_\varphi)$.

\begin{theorem}[Consistent Hashing in MPC]\label{thm:main_doubling}
Let $\Delta > 0$, $\sigma\in(0,1]$ and $\Gamma\geq 16$ be fixed, and let $(X,d_X)$ be a metric space with polynomially bounded aspect ratio and doubling dimension~$\ddim$ where $\sigma\log n \ge \Omega(\ddim)$. 
There is an MPC algorithm that, given an $n$-point set $P\subseteq X$ distributed across machines with local memory of size $s = n^\sigma$, and access to a distance oracle for $(X,d_X)$, constructs with high probability both:
\begin{enumerate}
	\item[\textup{(i)}] a $(\Gamma, 2^{O(\ddim/\Gamma)})$-hash $\varphi$ with diameter bound~$\Delta$.
	\item[\textup{(ii)}] a set of approximate neighborhood hash pairs $\tilde{\cP}_\varphi$.
\end{enumerate}
The algorithm uses $O(\ddim/\sigma^3)$ MPC rounds and $\widetilde O(n)\cdot2^{O(\ddim/\Gamma)}$ total space.
\end{theorem}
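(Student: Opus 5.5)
The plan is to run \Cref{alg:par} from the trivial initial hash $\phiin\equiv q'$, $\Pin=\{(x,q')\}_{x\in X}$, with $D_0$ a $2$-approximation of $\diam(X)$, and to output the final $(\phiout,\Pout)$. Four things need checking: the diameter bound, that $\Pout$ is a set of approximate neighborhood hash pairs, the expected consistency $2^{O(\ddim/\Gamma)}$, and the round and space bounds.

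\textbf{Diameter.} The diameter induction given above shows that every cluster induced by the current pairs has diameter at most the current parameter $D$. Since $\tilde\Delta\ge\Delta/4$, the final call has $D\le\Delta$. Each hash class $\varphi^{-1}(q)$ is contained in $C_q$, because $\phiout(x)=q^\star_C(x)$ is always an almost-winner. Hence $\diam(\varphi^{-1}(q))\le\Delta$.

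\textbf{Approximate neighborhood hash pairs.} Let $y\in B_X(x,\Delta/\Gamma)$. I prove by induction on the levels that whenever $(y,q_C)$ lies in the input pairs, so does $(x,q_C)$, so both $x$ and $y$ are processed inside the cluster $C$ that defines $\phiout(y)$. Take $q=q^\star_C(y)$. Then
\[
f_x(q)\ge f_y(q)-\Delta/\Gamma\ge M_C(y)-\Delta/\Gamma\ge M_C(x)-2\Delta/\Gamma,
\]
so $(x,q)$ is added at \algline{line:addpairs}. For the converse containment, every pair $(x,q)$ satisfies $d_X(x,q)\le2\tilde\Delta$ at the final level, and every point of $\varphi^{-1}(q)$ is within $2\tilde\Delta\le 2D'$ of $q$. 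This gives $\varphi^{-1}(q)\subseteq B_X(x,4\tilde\Delta)$, and $4\tilde\Delta\le 2\Delta$ has to be arranged through the final capped scale. If the constants do not quite work, I would rescale the target $\Delta$ by a constant, which costs only constants in $\Gamma$.

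\textbf{Expected consistency (the main obstacle).} Fix $x$ and let $B=B_X(x,\Delta/\Gamma)$. Let $Z_\ell$ be the number of pairs $(x,\cdot)$ after level $\ell$; this is an upper bound on $|\varphi(S)|$ for any $S\ni x$ of diameter at most $\Delta/\Gamma$. At a level with scale $\tilde\Delta$, for a single cluster containing $x$, the almost-winners are exactly the centers within $f_x$-gap $2\Delta/\Gamma$ of the maximum. The memoryless argument in \Cref{lem:seq} then applies with $r=\Delta/\Gamma$ and $\lambda=8\ddim/\tilde\Delta$: it counts centers within gap $2r$ of the winner and gives an expectation of at most $2e^{2r\lambda}-1\le e^{4r\lambda}$. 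The shifts of different clusters are independent and are fresh at each level. Conditioning on the history, this yields
\[
\E[Z_{\ell+1}]\le \E[Z_\ell]\cdot e^{32\ddim\Delta/(\Gamma\tilde\Delta_\ell)}.
\]
Multiplying over the levels gives $\exp\bigl(O(\tfrac{\ddim}{\Gamma})\sum_\ell \Delta/\tilde\Delta_\ell\bigr)$. The scales $\tilde\Delta_\ell$ decrease geometrically by a factor $b\ge2$, down to the last scale, which is at least $\Delta/4$. Therefore the sum $\sum_\ell\Delta/\tilde\Delta_\ell$ is dominated by its last term and is $O(1)$. This yields $\Lambda=2^{O(\ddim/\Gamma)}$. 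This vanishing-overlap step is exactly where the expected bound, rather than a worst-case one, is essential. The delicate point is the truncation: the conditioning on $\bar\Psi$ in \Cref{lem:seq} needs $|N_x|\le16^{\ddim}$ centers within $2\tilde\Delta$, which holds by packing.

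\textbf{Rounds and space.} Since $\E[Z_\ell]\le2^{O(\ddim/\Gamma)}$ at every level, $\E[n_\varphi]=n\cdot2^{O(\ddim/\Gamma)}$. By Markov's inequality with independent repetitions (or a high-probability concentration bound), the total space is $\tilde O(n)2^{O(\ddim/\Gamma)}$ with high probability. Then $\log_s n_\varphi=O(1/\sigma)$ because $\ddim\le O(\sigma\log n)$. Each level costs $O(\log_s^2 n_\varphi)=O(1/\sigma^2)$ rounds, and there are $O(\ddim/\sigma)$ levels, for $O(\ddim/\sigma^3)$ rounds in total. The high-probability net construction is handled by a union bound over levels and clusters using $\gamma=1/\poly(n)$ in \Cref{lem:net-mpc}.
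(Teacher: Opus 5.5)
Your proposal is correct and follows the paper's proof essentially step for step. It uses the same diameter induction and the same level-by-level argument that $(x,\varphi(y))$ is retained via the $1$-Lipschitz property of $f_x$ and $M_C$ (the paper phrases this step through Proposition~\ref{prop:ball}), the same per-level bound $\E[Z_i]\le e^{O(\ddim r/\Delta_i)}\E[Z_{i-1}]$ summed over geometrically decreasing scales, and the same Markov-plus-repetition space amplification.

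One correction: the invariant you state, ``whenever $(y,q_C)$ is an input pair, so is $(x,q_C)$'', is false in general, since an almost-winner of $y$ need only be within gap $4\Delta/\Gamma$ for $x$. The invariant your computation actually proves, and all you need, is $(x,\varphi(y))\in\tilde{\cP}_\varphi$ for all $y\in B_X(x,\Delta/\Gamma)$, exactly as in the paper. Also, your hedged constant check does go through: the last level has $\tilde\Delta=\Delta/4$, so $\varphi^{-1}(q)\subseteq B_X(x,\Delta)\subseteq B_X(x,2\Delta)$ and no rescaling is needed.
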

\begin{proof}

\Cref{alg:par} is a recursive algorithm.
Let $(\brick{\varphi_i},\brick{\Ptilde^i},\brick{D_i})$ be the parameters used for the $i$th recursive call (the other parameters, $(X,d_X),\ddim,\Gamma,\Delta, s$, are fixed).
That is, $\varphi_i$ is the current consistent hash, and $\Ptilde^i$ is the corresponding approximate neighborhood hash pairs. Denote also by \EMPH{$\Pi_i$} the collection of clusters induced by $\Ptilde^i$. $D_i$ is an upper bound on the diameter of every cluster in $\Pi_i$.
Additionally, at initialization $\varphi_0=\Ptilde^0=\{(x,q')\}$, and $D_0$ is an upper bound on $\diam(X)$.
Finally, after $L$ recursive calls $(\varphi_L,\Ptilde^L,D_L)$ is the output of the algorithm (and the result of \Cref{thm:main_doubling}), where $D_L=\Delta$.

Recall that we care here about the intersections of balls with radius $\frac{\Delta}{\Gamma}$ (regardless of the diameter in each scale).
We begin by arguing by induction on $i$ that $\Ptilde^i$ is indeed a set of approximate neighborhood hash pairs for $\varphi_i$, w.r.t. the parameter $\Gamma$.
That is, for every point $z\in B_X(x,\frac{\Delta}{\Gamma})$, it holds that $(x,\varphi_i(z))\in\Ptilde^i$.
The fact that the diameter of every cluster in $\Pi_i$ is bounded by $D_i$ was already argued for.
The base of the induction, $i=0$, holds trivially as there is only a single cluster.
Suppose that the induction holds for $i-1$, and we will prove it for $i$.
To correspond with the notation of \Cref{alg:par}, we denote
$(\phiin,\Pin,D,\Pi)\leftarrow(\varphi_{i-1},\Ptilde^{i-1},D_{i-1},\Pi_{i-1})$ and 
$(\phiout,\Pout,4\tilde{\Delta},\Pi_i)\leftarrow(\varphi_{i},\Ptilde^i,D_i,\Pi_i)$.
Suppose that a cluster centered at $q$ in $\phiout$ intersects the ball $B_X(x,\frac{\Delta}{\Gamma})$ at point $z$.
That is $\phiout(z)=q$.
Let $\phiin(z)=q_z$ be the previous center to which $z$ was mapped, and let $C_z=C_{q_z}\in\Pi$ be the $\phiin$ cluster centered at $q_z$.
Denote also by $N_{C_z}$ the set of net points chosen for $C_z$. In particular $q\in N_{C_z}$.
Denote by $\{\delta_{q'}\}_{q'\in N_{C_z}}$ the sampled shifts of the net points in $C_z$.
By our inductive hypothesis, as $C_z$ intersects $B_X(x,\frac{\Delta}{\Gamma})$ at $z$, it holds that $(x,q_z)\in \Pin$.
Let $M_{C_z}(x) = \max_{q'\in N_{C_z}} \delta_{q'}-d_X(x,q')$.
As $z$ chooses $q$, by \Cref{prop:ball} it follows that $\delta_{q}-d_X(x,q)\ge M_{C_z}(x)-\frac{2\Delta}{\Gamma}$.
In particular, $(x,q)$ was added to $\Pout$, as required.

We conclude that the output of our algorithm is valid. 
It remains to bound the expected number of approximate neighborhood hash pairs (per vertex).
Fix a point $x\in X$ and the ball $B = B_X(x,r)$ with radius $\brick{r} = \Delta/\Gamma$.
Denote by $Z_i=\left|\{q\in X:(x,q)\in\Ptilde^i\}\right|$ the number of pairs containing $x$ in $\Ptilde^i$. The coarsest level has $Z_{0} = 1$ (the entire space as a single cluster), and we want to bound $\E[Z_L]$, which corresponds to the final output of the algorithm.

Denote by $\Delta_i$ the scale used for the MPX algorithms in recursion depth $i$ (i.e., to create $\varphi_i,\Ptilde^i$). That is, $\Delta_i=D_i/4$.
At the last level of recursion, when $i = L$, we used $\Delta_L=\Delta / 4$.
For any scale other than the last scale, we got a significant progress. 
Specifically, the algorithm sets $\Delta_i=\max\{8D_{i-1}/2^t,\Delta/4\}=\max\{32\Delta_{i-1}/2^t,\Delta/4\}$.
Thus for every $i< L$, we have 
$\Delta_i=\frac{32\Delta_{i-1}}{2^t}$. We can assume that $t=\lfloor\tfrac{\log s}{3\ddim}\rfloor\ge 6$ (as $\ddim \le O(\sigma \log n)$ implies $\ddim \le c \cdot \sigma \log n$; by bounding the hidden constant $c \le 1/18$, we strictly guarantee the required geometric reduction $\frac{32}{2^t} \le \frac{1}{2}$). 
We bound $Z_i$ based on $Z_{i-1}$ using conditional expectation.
Fix $Z_{i-1}$.
$x$ belongs to $Z_{i-1}$ different clusters in $\Pi$.
Fix cluster $C\in\Pi_{i-1}$.
The algorithm chooses a net and runs the MPX algorithm using scale $\Delta_i$.
Following the proof of \Cref{lem:seq}, the expected number of centers
$q\in N_C$ satisfying
$\delta_q-d_X(x,q)\ge M_C(x)-\frac{2\Delta}{\Gamma}$
is bounded by $e^{O(\ddim\cdot\frac{r}{\Delta_i})}$.
Indeed, the proof of \Cref{lem:seq} bounds precisely the number of centers whose $f_x$-value is within $2r$ of the maximum, before translating this into a bound on the number of intersected clusters. This is exactly the criterion used
by \Cref{alg:par} to add pairs to $\Ptilde^i$ from the cluster $C$.
We conclude: 
\[
\mathbb{E}\left[Z_{i}\mid Z_{i-1}\right]\leq \sum_{C\in\Pi_{i-1}:C\mbox{ contains x}}e^{O(\ddim\cdot\frac{r}{\Delta_{i}})}=e^{O(\ddim\cdot\frac{r}{\Delta_{i}})}\cdot|Z_{i-1}|~.
\]
Accordingly, by the law of total expectation, 
\[
\mathbb{E}\left[Z_{i}\right]=e^{O(\ddim\cdot\frac{r}{\Delta_{i}})}\cdot\mathbb{E}\left[Z_{i-1}\right]~.
\]
Finally, by induction, 
\[
\mathbb{E}\left[Z_{L}\right]=\prod_{i=1}^{L}e^{O(\ddim\cdot\frac{r}{\Delta_{i}})}\cdot\mathbb{E}\left[Z_{0}\right]=e^{\sum_{i=0}^{L}O(\ddim\cdot\frac{r}{\Delta_{i}})}=e^{O(\ddim\cdot\frac{r}{\Delta})}~,
\]
where in the last inequality we use the fact that $\{\Delta_i\}_{i=0}^{L-1}$ is a geometric progression with jumps $\frac{32}{2^t}\le\frac12$.

It remains to obtain the high-probability bound on the total space claimed in the theorem.
By linearity of expectation, the expected number of stored pairs is at most
$M:=n\cdot 2^{O(\ddim/\Gamma)}$.
To amplify this expectation bound, run
$O(\log n)$ independent trials in parallel, aborting any trial as soon as it stores more than $2M$ pairs.
By Markov's inequality, each trial is aborted with probability at most $1/2$.
Thus, by independence, the probability that all trials are aborted is polynomially small.
Consequently, at least one trial succeeds with high probability.
Since every trial uses $O(M)$ space, the total space is
$M\cdot O(\log n)=\tilde{O}(n)\cdot 2^{O(\ddim/\Gamma)}$.
\end{proof}

\subsection{Neighborhood oracles for consistent hashing}\label{sec:neighborhood-oracle}
Throughout this subsection, $(X,d_X)$ denotes the finite input metric,
with $n=|X|$.
The construction in \Cref{sec:parallel} produces a
$(\Gamma,2^{O(\ddim/\Gamma)})$-hash with diameter bound $\Delta$
and approximate neighborhood hash pairs $\tilde{\cP}_\varphi$.
We now strengthen this representation to support \emph{neighborhood queries}
for every radius $r\in[\Delta/(4\Gamma),\Delta/(2\Gamma)]$.
Here, the underlying metric for neighborhood queries is an \emph{extension} of $(X,d_X)$.

A metric space $(X',d_{X'})$ is called an
\EMPH{extension} of $(X,d_X)$ if $X\subseteq X'$ and
$d_{X'}(x,y)=d_X(x,y)$ for all $x,y\in X$.
For a hash $\varphi:X\to I$, a partition map
$\psi:X'\to I'$ is an \EMPH{extension} of $\varphi$ if
$I\subseteq I'$ and $\psi(x)=\varphi(x)$ for every $x\in X$.
Then we use $\psi|_X$ to denote the restriction of $\psi$ on $X$.
By definition, $\psi|_X=\varphi$.
To define this representation, let $(X',d_{X'})$ be an extension of $(X,d_X)$, and let $\psi:X'\to I'$ be an extension of $\varphi$.
Fix a radius scale $R>0$.

\begin{definition}[Neighborhood oracle]
\label{def:neighborhood-oracle}
A \EMPH{neighborhood oracle} for $\psi$ on $X$ at scale $R$
is a table $\mathcal O\subseteq X\times\psi(X)\times[0,R]$
with at most one record per pair $(x,q)$, such that
for every $x\in X$ and $r\in[R/4,R/2]$,
\[
    \mathcal O(x,r)
    :=\{q:\exists a\le r,\ (x,q,a)\in\mathcal O\}
    =\{q\in\psi(X):B_{X'}(x,r)\cap\psi^{-1}(q)\neq\varnothing\}.
\]
\end{definition}
We call $\mathcal O$ an oracle because it answers the following query:
given $x\in X$ and $r\in[R/4,R/2]$, report the IDs of all clusters
that contain a point of $X$ and intersect $B_{X'}(x,r)$.
The answer is obtained by reporting $q$ for each record
$(x,q,a)\in\mathcal O$ with $a\le r$.

\begin{theorem}\label{thm:main_doubling_with_exact_pairs}
Under the assumptions of \Cref{thm:main_doubling} for a finite input
metric $(X,d_X)$, there is an MPC algorithm that computes $\mathcal O_\Delta$ and $\psi_\Delta|_X$
simultaneously for all dyadic scales $\Delta$ in the distance range of $X$
satisfying the following.
There is a common metric extension $(X',d_{X'})$ of $(X,d_X)$
and randomized maps $\psi_\Delta:X'\to X'$ such that:
\begin{enumerate}
    \item[\textup{(i)}]
    Each $\psi_\Delta$ is a
    $(\Gamma+2,2^{O(\ddim/\Gamma)})$-hash of $(X',d_{X'})$
    with diameter bound $(1+2/\Gamma)\Delta$.
    \item[\textup{(ii)}]
    Each $\mathcal O_\Delta$ is a neighborhood oracle for $\psi_\Delta$
    on $X$ at scale $\Delta/\Gamma$.
\end{enumerate}
The algorithm uses $O(\ddim/\sigma^3)$ MPC rounds and
$\widetilde O(n)\cdot2^{O(\ddim/\Gamma)}$ total space,
as in \Cref{thm:main_doubling}.
\end{theorem}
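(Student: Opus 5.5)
The plan is to take the output of \Cref{alg:par} and, in its last recursion level, replace the boolean almost-winner test by a numerical certificate of the distance from $x$ to each candidate cluster. We then add virtual points to the metric so that this certificate becomes an exact distance. The final extension $(X',d_{X'})$ is the union of these constructions over all dyadic scales.

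\emph{The certificate and its Lipschitz property.} Fix a scale $\Delta$ and set $R=\Delta/\Gamma$. Consider the last level of \Cref{alg:par}. For a point $x$, a cluster $C\in\Pi$ containing $x$, and a center $q\in N_C$, define the gap
\[
g(x,q):=\tfrac12\bigl(M_C(x)-(\delta_q-d_X(x,q))\bigr)\ \ge 0 .
\]
By \Cref{prop:ball}, every point $z$ of the final cluster $\varphi^{-1}(q)$ satisfies $d_X(x,z)\ge g(x,q)$. The key observation is that $x\mapsto g(x,q)$ is $1$-Lipschitz: both $M_C(\cdot)$ and $f_\cdot(q)$ are $1$-Lipschitz, so their difference is $2$-Lipschitz, and we halve it. Moreover $g(z,q)=0$ for every $z$ with $\varphi(z)=q$.

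\emph{The records.} For every pair $(x,q)\in\tilde{\cP}_\varphi$ with $g(x,q)\le R/2$, we store the record $(x,q,a_{x,q})$ with $a_{x,q}:=g(x,q)$. Since $g\le R/2$ means the $f_x$-gap is at most $R=\Delta/\Gamma<2\Delta/\Gamma$, every such $q$ is an almost-winner of $x$. Hence these pairs are already present in $\tilde{\cP}_\varphi$, and computing $a_{x,q}$ is purely local once the shifts have been broadcast.

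\emph{The extension.} For each record we add a virtual point $p_{x,q}$ and set $\psi(p_{x,q})=q$. Its distances are given by the shortest-path closure: $p_{x,q}$ is attached to $x$ by an edge of length $a_{x,q}$, with no other edges. Then $d_{X'}$ restricted to $X$ equals $d_X$, because a path through a virtual point is a detour through a pendant vertex. On $X$ we set $\psi=\varphi$.

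\emph{Correctness of the oracle.} Fix $r\in[R/4,R/2]$.
\begin{itemize}
\item If $a_{x,q}\le r$, then $p_{x,q}\in B_{X'}(x,r)\cap\psi^{-1}(q)$.
\item Conversely, suppose $B_{X'}(x,r)$ meets $\psi^{-1}(q)$. If it meets a real point $z$, then $g(x,q)\le d(x,z)\le r$. If it meets a virtual point $p_{x',q}$, then $d_{X'}(x,p_{x',q})=d(x,x')+a_{x',q}\ge g(x,q)$ by the Lipschitz property. In both cases $a_{x,q}=g(x,q)\le r\le R/2$, so a record for $(x,q)$ exists. It lies in $\tilde{\cP}_\varphi$ by the almost-winner argument above.
\end{itemize}
The one subtlety is that $x$ may lie in several clusters of the penultimate level. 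Here I would use only the cluster $C$ with $\varphi_{L-1}$-label that produced $q$; the proof of \Cref{thm:main_doubling} shows that this is the unique $C$ with $q\in N_C$.

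\emph{Diameter and padding of $\psi_\Delta$.} Virtual points lie within $R/2\le\Delta/\Gamma$ of a point $x$ whose real cluster has diameter at most $\Delta$. I therefore expect each cluster of $\psi$ to have diameter at most $(1+2/\Gamma)\Delta$.

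For the padding $(\Gamma+2)$, take a set $S\subseteq X'$ with $\diam(S)\le(1+2/\Gamma)\Delta/(\Gamma+2)=\Delta/\Gamma$. If $S$ contains a real point $x$, then $\psi(S)\subseteq\mathcal O(x,\cdot)$ at radius $\le R$, which by \Cref{thm:main_doubling} has expected size $2^{O(\ddim/\Gamma)}$. If $S$ contains only virtual points, we replace $S$ by the anchors of its virtual points; this set has diameter at most $\Delta/\Gamma+R$. The bound of \Cref{lem:seq} still gives $e^{O(\ddim/\Gamma)}$ for this set.

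\emph{The main obstacle: a common extension for all scales.} I would first try the following. Tag the virtual points by their scale, so that $p^{\Delta}_{x,q}$ for different $\Delta$ are distinct pendant vertices. Let $\psi_{\Delta}$ send a virtual point of a different scale $\Delta'$ to $\psi_\Delta$ of its anchor $x$.

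Pendant attachment keeps $d_{X'}|_X=d_X$. However, foreign virtual points could enlarge $B_{X'}(x,r)\cap\psi_\Delta^{-1}(q)$. A foreign point lies at distance at least its attachment length from its anchor, and it is assigned exactly its anchor's cluster. So such a point is reachable within $r$ only if its anchor itself is within $r$, which is already accounted for. The diameter could grow by its attachment length $a\le\Delta'/(2\Gamma)$. To handle this, I would cap the foreign attachments at $\min\{a,\Delta/\Gamma\}$, or else restrict $\psi_\Delta$ to scales $\Delta'\le\Delta$ and map larger-scale virtual points to the nearest real point. I expect this bookkeeping to be the delicate part.

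\emph{Complexity.} Algorithmically, we run the $O(\log n)$ dyadic scales in parallel; by the polynomial aspect ratio there are only $O(\log n)$ of them. Each run is the algorithm of \Cref{thm:main_doubling} with one extra local computation of $a_{x,q}$. This preserves $O(\ddim/\sigma^3)$ rounds and costs a further $O(\log n)$ factor in space, which is absorbed in $\widetilde O(n)\cdot2^{O(\ddim/\Gamma)}$.
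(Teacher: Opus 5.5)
There is a genuine gap. Your overall mechanism is the right one: a $1$-Lipschitz lower bound on the distance from $x$ to cluster $q$, realized exactly by pendant vertices. The problem is that your certificate $g$ only sees the last recursion level, and because of this the converse direction of the oracle fails for virtual points.

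Take $p_{x',q}$ with $d_X(x,x')+a_{x',q}\le r$. Lipschitzness gives $\tfrac12\bigl(M_C(x)-(\delta_q-d_X(x,q))\bigr)\le r$. But this puts $(x,q)$ into $\tilde{\cP}_\varphi$ only if $x\in C$, i.e.\ $(x,q_C)\in\tilde{\cP}_{\varphi_{L-1}}$. Membership in $C$ is itself decided by a threshold test at level $L-1$ (and recursively before), which $g$ ignores.

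This failure actually occurs:
\begin{enumerate}
\item $x'$ can enter $C$ with level-$(L-1)$ gap $0.9R$, while a point $x$ at distance $R/4$ has gap $1.15R>R$, since the gap is only $1$-Lipschitz. So $x\notin C$.
\item Take $q=q^\star_C(x')$. Then $a_{x',q}=0$, so $p_{x',q}\in B_{X'}(x,r)\cap\psi^{-1}(q)$ for every $r\ge R/4$ (provided $\varphi^{-1}(q)\neq\varnothing$).
\item Yet no record $(x,q,\cdot)$ exists, unless $q$ also lies in the net of another parent cluster containing $x$. The algorithm never even evaluates $g$ at such $x$.
\end{enumerate}
Recording $g$ at all $x$ is not an option, since that is neither MPC-computable nor sparse.

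The same defect breaks your padding argument for sets containing virtual points: a virtual point within $R$ of $x$ need not yield a pair of $x$. Moreover, \Cref{lem:seq} is a single-level statement and does not control these labels across the recursion. Separately, your uniqueness claim for $C$ is unjustified. Parent clusters overlap, so their nets may share centers, and then $\varphi^{-1}(q)$ collects points from several parents.

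The missing idea is to propagate the certificate through every level, as the paper does. The new threshold of $(x,q)$ is $\min_C\max\{b_C(x),h_{C,q}(x)\}$ over the parent clusters $C$ with $q\in N_C$, where $b_C(x)$ is the previous threshold of $(x,q_C)$. For the analysis this is defined for all $x\in X$, including $x\notin C$. This threshold has three properties:
\begin{enumerate}
\item It is $1$-Lipschitz, since pointwise min and max preserve this.
\item It vanishes on the real cluster, because the actual winner has $h=0$ at every level.
\item Crucially, its sublevel set $\{\le R\}$ is exactly $\tilde{\cP}_\varphi$, because $b_C(x)\le R$ iff $x\in C$.
\end{enumerate}
The third property is what closes the converse direction and reduces padding to the expected number of pairs of a single point.

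The cross-scale issue then resolves simply. Attach pendant vertices $v_{k,x,q,t}$ for all $t\in[R_k/4,R_k]$, which also avoids your zero-length edges. Map $v_{i,x,q,t}$ to $q$ iff $q\in\varphi_i(X)$ and the scale-$i$ threshold of $(x,q)$ is at most $t$. Map every other pendant vertex to its anchor's cluster when $t\le R_i$, and make it a singleton otherwise.
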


\paragraph*{MPC Construction.}
For each dyadic scale $\Delta$, put $R=\Delta/\Gamma$ and run
\Cref{alg:par}, executing the runs for different scales in parallel.
Initially, all points belong to a single cluster with center $q_0\in X$,
and we set $\Oin=\{(x,q_0,0):x\in X\}$.

At the beginning of each iteration, we are given a table $\Oin$ along with $\Pin$ and $\phiin$, and we initialize $\Oout=\varnothing$.
Recall that $\Pi$ is a set of clusters obtained by $\Pin$.
For each cluster $C\in\Pi$ with center $q_C$ and each point $x\in C$,
let $b$ denote the radius threshold with $(x,q_C,b)\in\Oin$.
Then for each $q$ in the net $N_C$, let the threshold be
\[
    h_{C,q}(x):=\frac{M_C(x)-(\delta_q-d_X(x,q))}{2}.
\]
Here $\delta_q$ is the shift for $q$ in $C$, and
$M_C(x)=\max_{p\in N_C}\{\delta_p-d_X(x,p)\}$.
Recall that we add $(x,q)$ to $\Pout$ if and only if $h_{C,q}(x)\le R$.
In this case, we add $(x,q,\max\{b,h_{C,q}(x)\})$ to $\Oout$.
Note that the clusters in $\Pi$ may overlap.
Thus, a pair $(x,q)$ might have more than one threshold in the same iteration. In this case, we retain only the record with the smallest threshold.

If this is the last iteration, we set $\varphi_\Delta=\phiout$,
remove from $\Oout$ all records for empty clusters
(those with $q\notin\varphi_\Delta(X)$), and return the remaining table $\mathcal O_\Delta$.
Otherwise, we pass $\Oout$ to the next iteration together with $\phiout$ and $\Pout$.
In the base case, return the current table.

\paragraph*{Analysis.}
The following lemma establishes properties~\textup{(i)} and~\textup{(ii)}.
Each approximate neighborhood hash pair contributes one table record.
Its threshold candidates can be computed locally, and then the minimum is taken in $O(1)$ MPC rounds using Lemma~\ref{lem:sort}.
Thus, the asymptotic round and total-space complexities remain unchanged.

\begin{lemma}\label{lem:neighborhood}
    For the hashes $\varphi_\Delta$ and tables $\mathcal O_\Delta$
    computed by the algorithm, there exists a common metric extension
    $(X',d_{X'})$ and extensions $\psi_\Delta$ of $\varphi_\Delta$ satisfying
    \textup{(i)} and \textup{(ii)} of
    Theorem~\ref{thm:main_doubling_with_exact_pairs}.
\end{lemma}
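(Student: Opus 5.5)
The plan is to build the extension $X'$ explicitly from the oracle thresholds, so that distances from $x$ to the clusters of $\psi_\Delta$ are \emph{exactly} encoded by the table. For each scale $\Delta$ (with $R=\Delta/\Gamma$) and each final record $(x,q,a)\in\mathcal O_\Delta$ with $q\notin$ the cluster of $x$ (or more generally with $a>0$), we will add a new "witness" point $w_{\Delta,x,q}$ to $X'$, assign $\psi_\Delta(w_{\Delta,x,q})=q$, and define its distance to $x$ to be $a$ (or $\max\{a,R/4\}$ as needed). All other distances in $X'$ are the shortest-path closure of the graph whose edges are the pairs of $X$ with weight $d_X$ together with these new edges. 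A point $w$ at another scale $\Delta'$ is placed in $\psi_\Delta$'s partition by mapping it to $\psi_\Delta$ of its attachment point $x$. One common $X'$ serves all scales because a single metric closure is taken over all witnesses.

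The first step is to check that $X'$ extends $d_X$. A path through a witness $w$ attached to $x$ returns to $x$, so it cannot shortcut distances in $X$. Next comes the key invariant, proved by induction over iterations. Every record $(x,q,a)$ has $a\ge$ half the $f_x$-gap to the winner, and hence, by \Cref{prop:ball} applied in every ancestor level, $a\le d_X(x,\psi_\Delta^{-1}(q)\cap X)$. Conversely, whenever $\psi^{-1}(q)\cap X$ meets $B_X(x,r)$, some record with $a\le r$ exists; this repeats the inductive argument of \Cref{thm:main_doubling}. The $\max\{b,\cdot\}$ propagation handles nested levels, and keeping the minimum handles overlapping clusters. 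Together these give (ii): for $r\in[R/4,R/2]$, $q\in\mathcal O(x,r)$ iff $B_{X'}(x,r)$ meets $\psi^{-1}(q)$. Inclusion in one direction comes from the witness at distance $a\le r$. The other direction requires that any point of $\psi^{-1}(q)$ within $X'$-distance $r$ of $x$ yields a record with threshold $\le r$. This holds for points of $X$ by the invariant, and for a witness by construction, or via a path $x\to y\to w$ together with the bound $a\le$ true distance at $y$ plus $d(x,y)$, using Lipschitzness of $h_{C,q}(\cdot)$ (which is $1$-Lipschitz in $x$ since $f_x$ is).

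For (i), the diameter of a cluster grows by at most $2\cdot R$ because witnesses lie within $a\le R$ of a point $x$ that is itself near the cluster. This gives the diameter bound $\Delta+2\Delta/\Gamma=(1+2/\Gamma)\Delta$. For consistency with parameter $\Gamma+2$, any set $S\subseteq X'$ of diameter $\le(1+2/\Gamma)\Delta/(\Gamma+2)=\Delta/\Gamma$ either lies near some $x\in X$, in which case the clusters it meets are among $\mathcal O(x,\cdot)$ at radius $\le R/2$-ish, or after moving each witness to its attachment point it maps into a ball of radius $O(R)$ in $X$. Either way, the expected count is bounded by the expected number of records at $x$, which is $2^{O(\ddim/\Gamma)}$ by the proof of \Cref{thm:main_doubling}.

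The main obstacle is showing that adding witnesses for many $(x,q)$ pairs and many scales creates no unintended shortcuts: a path $x\to w\to y\to w'$ might bring some cluster closer to $x$ than its recorded threshold. I would handle this first by proving the Lipschitz property $a(y,q)\le a(x,q)+d(x,y)$ for the recorded thresholds (taking $a=\infty$ for absent records), which makes the closure consistent. This Lipschitz property should follow from the $1$-Lipschitzness of $h_{C,q}$ and $M_C$, and from the fact that max/min operations preserve it.
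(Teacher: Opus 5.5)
Your overall architecture matches the paper's. You hang witness leaves on anchors at the threshold distance, map them to $q$, and use $1$-Lipschitzness of the thresholds to rule out shortcuts. As written, however, the argument for (i) fails. You put every witness of another scale $\Delta'$ into the $\psi_\Delta$-cluster of its anchor, and then claim that witnesses lie within $R=\Delta/\Gamma$ of their anchor. That claim holds only for $\Delta'\le\Delta$. A witness of a coarser scale $\Delta'$ sits at distance up to $\Delta'/\Gamma$ from its anchor, and at least $\Delta'/(4\Gamma)$ if you use $\max\{a,R'/4\}$. So once $\Delta'\gg\Gamma\Delta$, the cluster $\psi_\Delta^{-1}(\varphi_\Delta(x))$ has diameter far above $(1+2/\Gamma)\Delta$. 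The diameter bound must hold for every outcome, so this is a genuine failure. The paper adds a third case: a leaf whose edge is longer than $R_i$ becomes a singleton cluster of $\psi_i$. This costs nothing, because such a leaf is at distance $>R_i$ from every other point of $X'$. Hence it lies in no non-singleton set of diameter $\le R_i$, and in no ball $B_{X'}(x,r)$ with $x\in X$ and $r\le R_i/2$.

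Two further steps need repair.

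First, the inequality you plan to prove, $a(y,q)\le a(x,q)+d(x,y)$ with $a=\infty$ for absent records, is false. Take $a(x,q)$ just below $R$ and $y$ very close to $x$ with true threshold just above $R$; then the left side is $\infty$. Moreover, the recursion over stored records only evaluates $b$ for $x\in C$, so ``min/max preserve Lipschitzness'' cannot be applied to it directly. The paper's key device is to define the untruncated threshold $A_{i,q}(x)$ for \emph{every} $x\in X$. It runs the same $\min_C\max\{b,h_{C,q}(x)\}$ recursion also for points outside the parent cluster, keeping values above $R_i$. This function is $1$-Lipschitz, and the stored records are exactly its sublevel set $\{A_{i,q}\le R_i\}$. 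That yields the restricted inequality (valid whenever the right-hand side is at most $R_i$), which is all your argument actually uses.

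Second, your $X'$ depends on the random outcome: which records exist and what their thresholds are. But expected consistency in Definition~\ref{def:consistent-hashing} quantifies over subsets of a \emph{fixed} space. The paper therefore attaches a leaf $v_{k,x,q,t}$ deterministically for every scale $k$, every $x,q\in X$, and every $t\in[R_k/4,R_k]$. Only $\psi_i$ depends on the randomness, via $\psi_i(v_{i,x,q,t})=q$ iff $q\in\varphi_i(X)$ and $A_{i,q}(x)\le t$. This is why the paper's extension is infinite.
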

\begin{proof}
Let $\Delta_1>\cdots>\Delta_m$ be the dyadic scales considered by the
algorithm, and write $R_i=\Delta_i/\Gamma$,
$\varphi_i=\varphi_{\Delta_i}$, and $\mathcal O_i=\mathcal O_{\Delta_i}$.
Recall that each approximate pair $(x,q)$ contributes at most one
record $(x,q,a)$ to $\mathcal O_i$ for each $i$.

For analysis, we define thresholds anew at each iteration, including
for point--hash pairs absent from the stored tables.
Fix a scale $i$ and use the parent clusters, nets, and shifts from the
actual execution of Algorithm~\ref{alg:par}.
Initially, assign threshold $0$ to $(x,q_0)$ for every $x\in X$.
At each iteration, start with an empty set of candidate thresholds.
Consider every nonempty parent cluster $C$ with center $q_C$
and every point $x\in X$, including points outside $C$.
Let $b$ be the threshold of $(x,q_C)$ from the previous iteration.
For each $q\in N_C$, evaluate $h_{C,q}(x)$ by the same formula as above
and generate the candidate threshold $\max\{b,h_{C,q}(x)\}$ for $(x,q)$.
For each pair generated in this iteration, define its new threshold
as the minimum of its candidates from this iteration only, retaining
values greater than $R_i$ as well.
Thus each iteration defines a unique threshold for every $x\in X$
and every cluster center generated in that iteration.
The threshold at termination is denoted by $A_{i,q}(x)$.

At every iteration, the pairs with threshold at most $R_i$ are exactly
those computed by Algorithm~\ref{alg:par}.
This holds initially. Inductively, a new threshold is at most $R_i$
if and only if some parent cluster $C$ with $q\in N_C$ contributes a candidate with
$b\le R_i$ and $h_{C,q}(x)\le R_i$.
By the inductive hypothesis, $b\le R_i$ is equivalent to
$(x,q_C)\in\Pin$, or $x\in C$, so these are precisely the conditions
under which the algorithm adds $(x,q)$ to $\Pout$.
The stored records use the same threshold updates, so the output table is
\[
    \mathcal O_i
    =\bigl\{(x,q,A_{i,q}(x)):
    x\in X,\ q\in\varphi_i(X),\ A_{i,q}(x)\le R_i\bigr\}.
\]

We now define a common metric extension $(X',d_{X'})$.
Start with the complete graph on $X$, with edge lengths given by $d_X$.
For every scale $k\in\{1,2,\ldots,m\}$, points $x,q\in X$, and $t\in[R_k/4,R_k]$,
attach a distinct new vertex $v_{k,x,q,t}$ to $x$ by a single edge
of length $t$; we call $x$ its anchor.
Let $X'$ be the set of all vertices, equipped with the shortest-path metric.
This space may be infinite, but it is used only for analysis.
All added edges have positive length and end at leaves, so distances
within $X$ remain unchanged.

We now extend the hashes $\varphi_i$ to the metric space
$(X',d_{X'})$ constructed above.
For each scale $i$, define $\psi_i:X'\to X'$ by
$\psi_i(x)=\varphi_i(x)$ for $x\in X$ and, for $v=v_{k,x,q,t}$, by
\[
\psi_i(v)=
\begin{cases}
q, & k=i,\ q\in\varphi_i(X),\ A_{i,q}(x)\le t,\\
\varphi_i(x), & \text{otherwise, if }t\le R_i,\\
v, & \text{otherwise}.
\end{cases}
\]
The following claims establish the required properties of this construction.

\begin{claim}\label{claim:diameter}
For every scale $i$, each cluster of $\psi_i$ has diameter at most
$\Delta_i+2R_i$.
\end{claim}
\begin{proof}
For any two points $x, q\in X$ with $q=\psi_i(x)$,
we have $d'(x,q)=d(x,q)\leq \Delta_i/2$ by  \Cref{sec:parallel}.
Now consider a vertex $v=v_{k,x,q,t}$. If it forms a singleton, its diameter is zero. Thus we focus on the first two cases: $\psi_i(v)=q$ and $\psi_i(v)=\varphi_i(x)$.
For the second case, $d'(v,\psi_i(v))=t+d'(x,\psi_i(v))\leq R_i+\Delta_i/2$.
For the first case, $(x,q)$ is an approximate neighborhood pair, and thus  $d'(x,q)\leq \Delta_i/2$ by \Cref{sec:parallel}. Since $d'(x,v)=t\leq R_i$, we have $d'(v,q)\leq R_i+\Delta_i/2$.
Therefore, the diameter of each cluster induced by $\psi_i$ is at most $\Delta_i+2R_i$.
\end{proof}

We say a function $f$ is \EMPH{1-Lipschitz} on $(X,d_X)$ if
$|f(x)-f(y)|\leq d_X(x,y)$ for any two points $x,y\in X$.
The following is a technical claim which will be used to prove property (i).

\begin{claim}
For every scale $i$ and cluster center $q\in\varphi_i(X)$,
the function $A_{i,q}$ is $1$-Lipschitz on $(X,d_X)$.
\end{claim}
\begin{proof}
For each $q\in N_C$, the function $x\mapsto\delta_q-d_X(x,q)$
is $1$-Lipschitz. Taking the pointwise maximum of these functions over $q\in N_C$
preserves this property, so $M_C$ is also $1$-Lipschitz.
Since $h_{C,q}$ is half the difference of these two $1$-Lipschitz
functions, it is $1$-Lipschitz as well.
Starting from the constant zero function, each threshold update takes
pointwise minima and maxima of $1$-Lipschitz functions, preserving
this property.
\end{proof}

\begin{claim}\label{claim:neighborhood}
For each scale $i$, $\mathcal O_i$ is a neighborhood oracle for
$\psi_i$ on $X$ at scale $R_i$.
\end{claim}
\begin{proof}
Fix $x\in X, q\in\psi_i(X)$ and $r\in [R_i/4,R_i/2]$. 
Recall that $A_{i,q}(x)$ is the threshold we defined above.
Now it is sufficient to prove that 
$B_{X'}(x,r)\cap\psi_i^{-1}(q)\neq\varnothing$ if and only if $A_{i,q}(x)\le r$.

Let $z\in B_{X'}(x,r)\cap \psi_i^{-1}(q)$. If $z\in X$, 
we have $A_{i,q}(z)=0$. This is because
the initial threshold is zero, and at each iteration, the actual winner
has $h_{C,q}(z)=0$, so the update preserves a zero threshold.
By the Lipschitz property, 
\[
A_{i,q}(x)\leq d_{X'}(x,z)=d_X(x,z)\leq r.
\]
If $z\notin X$, let $y$ be the anchor of $z$ and
$t:=d_{X'}(z,y)$.
Since $q=\psi_i(z)\in\psi_i(X)\subseteq X$, the third case
in the definition of $\psi_i$ cannot apply to $z$.
In the first case, we have $A_{i,q}(y)\le t$ immediately, and 
in the second case, we have $q=\varphi_i(y)$ and hence $A_{i,q}(y)=0$.
Therefore, by the Lipschitz property,
\[
A_{i,q}(x)\le A_{i,q}(y)+d_X(x,y)
\le t+d_X(x,y)=d_{X'}(x,z).
\]
Thus, $A_{i,q}(x)\le r$ in any case.

Conversely, suppose $A_{i,q}(x)\le r$ and put
$t=\max\{A_{i,q}(x),R_i/4\}$.
Since $r\in[R_i/4,R_i/2]$, we have $t\in[R_i/4,R_i/2]$ and $t\le r$.
In this case, $\psi_i(v)=q$ for $v:=v_{i,x,q,t}$ by the definition of $\psi_i(\cdot)$. Moreover, $v\in B_{X'}(x,r)\cap \psi_i^{-1}(q)$. Therefore,
the converse also holds.
\end{proof}

\begin{claim}
For each scale $i$, $\psi_i$ is a
$(\Gamma+2,2^{O(\ddim/\Gamma)})$-hash of $(X',d_{X'})$.
\end{claim}
\begin{proof}
By \Cref{claim:diameter}, the diameter of each cluster is at most $\Delta_i+2R_i$.
Since 
 $(\Delta_i+2R_i)/(\Gamma+2)=R_i$, it suffices to show that
 the expected number of clusters intersected by 
a nonempty set $S\subseteq X'$ of diameter at most $R_i$ is at most $2^{O(\ddim/\Gamma)}$.
If $S$ contains a leaf of edge length greater than $R_i$, then $S$ is a
singleton, and the bound is immediate.
Otherwise, fix a point $x$ of $S$. If $x\notin X$, we replace it with its anchor.
Every point of $S$ lies in $B_{X'}(x,R_i)$, and
$\psi_i(S)\subseteq\varphi_i(X)$.
For each $q\in\psi_i(S)$, choose $z\in S$ with $\psi_i(z)=q$.
The inequality in the proof of \Cref{claim:neighborhood} gives
$A_{i,q}(x)\le d_{X'}(x,z)\le R_i$.
In the beginning of the proof of \Cref{lem:neighborhood}, we observed that $(x,q)$ belongs to
$\tilde{\cP}_{\varphi_i}$, the set of approximate neighborhood hash pairs
computed by Algorithm~\ref{alg:par} at scale $\Delta_i$.
By the analysis in \Cref{sec:parallel}, the expected number of pairs $(x,\cdot)$ 
in $\tilde{\cP}_{\varphi_i}$ is at most $2^{O(\ddim/\Gamma)}$.
This bounds $\E[|\psi_i(S)|]$ as required.
\end{proof}
These claims establish properties~\textup{(i)} and~\textup{(ii)}
of Theorem~\ref{thm:main_doubling_with_exact_pairs}.
\end{proof}

\section{Applications}
In this section, we provide several applications of \Cref{thm:main_doubling}. 
We observed that recent advances in the MPC model for Euclidean spaces are based solely on consistent hashing (actually on neighborhood hash pairs), and not on other Euclidean properties. Thus they can be extended to other metric spaces, provided that a consistent hashing scheme is available.
Using \Cref{thm:main_doubling}, we observe that for doubling metrics we indeed can provide the required neighborhood hash pairs.

\begin{observation}\label{obs:multiscale}
    Let $\sigma\in(0,1)$ be fixed.
    Then for $\Gamma\in[16,O(\ddim)]$,
    there is an $O(\ddim/\sigma^3)$-round MPC algorithm that, given
    (the labels of) an $n$-point metric space $(X,d_X)$ with doubling dimension $\ddim\leq O(\sigma\log n)$, distributed among machines with local space $s=n^\sigma$, computes approximate neighborhood hash pairs for $(\Gamma, 2^{O(\frac{\ddim}{\Gamma})})$-hashes simultaneously for all distance scales. The total space is $\tilde{O}(n)\cdot 2^{O(\frac{\ddim}{\Gamma})}$, and the
    success probability is at least $1-1/\poly(n)$.
\end{observation}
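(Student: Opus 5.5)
The plan is to derive the observation directly from \Cref{thm:main_doubling}, running one independent copy of \Cref{alg:par} for every relevant distance scale in parallel. Since the aspect ratio is polynomially bounded, I first compute in $O(1)$ rounds a $2$-approximation $D_0$ of $\diam(X)$. I also broadcast a lower bound on the minimum distance, or equivalently fix the scale range from the known polynomial aspect ratio. The relevant scales are then the dyadic values $\Delta\in\{D_0/2^j\}$ for $j=0,1,\dots,O(\log n)$. For scales below the minimum interpoint distance, singletons already form a valid hash, and the trivial pairs $(x,x)$ are valid approximate neighborhood pairs. For non-dyadic $\Delta$, rounding $\Delta$ down to a dyadic value changes $\Gamma$ only by a factor of $2$, which is absorbed into the $2^{O(\ddim/\Gamma)}$ bound.

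For each fixed scale, \Cref{thm:main_doubling} gives correctness: $\varphi_\Delta$ is a $(\Gamma,2^{O(\ddim/\Gamma)})$-hash with diameter bound $\Delta$, and $\tilde{\cP}_{\varphi_\Delta}$ is a set of approximate neighborhood hash pairs. The diameter argument in \Cref{sec:parallel} shows $\varphi^{-1}(q)\subseteq B_X(x,2\Delta)$ for every $(x,q)$ in the output. All of this holds per scale, so it transfers unchanged to the parallel execution. The work is to bound the resources when $O(\log n)$ instances share the machines. The rounds do not add up: each recursion step of the runs at all scales is carried out simultaneously with one sort (\Cref{lem:sort}) over the union of all pairs, tagged by their scale, plus the net construction and broadcasts. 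Each of these primitives takes $O(\log_s^2 N)$ rounds with $N=\tilde{O}(n)\cdot 2^{O(\ddim/\Gamma)}$. Because $\Gamma\ge16$ and $\ddim\le O(\sigma\log n)$, we have $N\le n^{O(1)}$, so $\log_s N=O(1/\sigma)$. The recursion depth at every scale is at most $O(\ddim/\sigma)$, so the total is $O(\ddim/\sigma^3)$ rounds, matching the stated bound.

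The space bound is the delicate step. Summing the per-scale expectation $n\cdot 2^{O(\ddim/\Gamma)}$ over $O(\log n)$ scales gives expected total $\tilde{O}(n)\cdot 2^{O(\ddim/\Gamma)}$. For the high-probability statement I reuse the amplification trick from the proof of \Cref{thm:main_doubling}, but independently per scale. For each scale, run $O(\log n)$ independent trials and abort a trial once it exceeds twice its expected size $M$. Each scale then fails with probability at most $1/\poly(n)$. A union bound over the $O(\log n)$ scales and over the net-construction failure probabilities of \Cref{lem:net-mpc}, with $\gamma=1/\poly(n)$, gives overall success probability $1-1/\poly(n)$. The total space is $O(\log n)\cdot O(\log n)\cdot M=\tilde{O}(n)\cdot 2^{O(\ddim/\Gamma)}$.

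I expect the main obstacle to be scheduling the parallel runs so that machines are not overloaded. Abort decisions require knowing the count of pairs per trial, which needs a prefix-sum or aggregation step taking $O(\log_s N)$ rounds at every recursion level. Nets must also fit on one machine per cluster, and this holds because $|N_C|\le s^{1/3}$ regardless of the number of scales.

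The upper restriction $\Gamma\le O(\ddim)$ appears only so that $2^{O(\ddim/\Gamma)}$ is a meaningful constant-type bound and the almost-winner threshold $2\Delta/\Gamma$ stays compatible with the $\tilde\Delta\ge\Delta/4$ analysis; I would note this rather than use it essentially.
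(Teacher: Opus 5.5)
Your proposal is correct and follows essentially the same route as the paper, which treats this observation as an immediate consequence of \Cref{thm:main_doubling} run in parallel over the $O(\log n)$ dyadic scales (as it also does in \Cref{sec:neighborhood-oracle}), so rounds do not accumulate and the space picks up only an extra $O(\log n)$ factor absorbed into $\tilde{O}(\cdot)$. One small correction to your closing remark: the diameter step that uses $\tilde{\Delta}\ge\Delta/4$ relies only on the lower bound $\Gamma\ge 16$, so the upper restriction $\Gamma\le O(\ddim)$ plays no role anywhere in the argument.
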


\subsection{Stochastic tree embeddings}
In this subsection we obtain a stochastic embedding into trees (in fact HSTs) with optimal $O(\log n)$ expected distortion.
A metric embedding is a function $f:X\rightarrow Y$ between the points of two metric spaces $(X,d_X)$ and $(Y,d_Y)$.
A metric embedding $f$ is said to be \EMPH{dominating} if for every pair of points $x,y\in X$, it holds that $d_X(x,y)\le d_Y(f(x),f(y))$. 
The \EMPH{distortion} of a dominating embedding $f$ is $\max_{x \not= y\in X}\frac{d_Y(f(x),f(y))}{d_X(x,y)}$.	
A \EMPH{stochastic embedding} from metric space $(X,d_X)$ into a metric family $\cF$ is a distribution $\cD$ over dominating embeddings from $X$ to $Y\in\cF$. We say that the stochastic embedding $\cD$ has expected distortion $t$ if for every $x,y\in X$, $\E_{(f:X\rightarrow Y)\sim\cD}\left[d_Y(f(x),f(y))\right]\le t\cdot d_X(x,y)$.

An \EMPH{ultrametric} $\left(Z,d\right)$ is a metric space satisfying a
strong form of the triangle inequality, that is, for all $x,y,z\in Z$,
$d(x,z)\le\max\left\{ d(x,y),d(y,z)\right\} $. 
An equivalent definition \cite{BLMN05} is that of an \EMPH{HST} (hierarchically separated tree). HST is a metric space $\left(Z,d\right)$ whose elements
are the leaves of a rooted labeled tree $T$. Each $z\in T$ is associated
with a label $\ell\left(z\right)\ge0$ such that if $q\in T$ is a
descendant of $z$ then $\ell\left(q\right)\le\ell\left(z\right)$
and $\ell\left(q\right)=0$ iff $q$ is a leaf. The distance between
leaves $z,q\in Z$ is defined as $d_{U}(z,q)=\ell\left(\mbox{lca}\left(z,q\right)\right)$
where $\mbox{lca}\left(z,q\right)$ is the least common ancestor of
$z$ and $q$ in $T$.
Every HST embeds into a tree metric isometrically (i.e. with distortion $1$).
As it turns out, HSTs are in fact much more structured than general trees.
Indeed, every HST embeds isometrically into $\ell_2$ (folklore), while general trees require distortion $\Omega(\sqrt{\log\log n})$ \cite{bourgaintrees}.

Every $n$-point metric space stochastically embeds into a distribution over HSTs with expected distortion $O(\log n)$ \cite{FRT04,Bartal04}, which is also tight \cite{Bar96}.
In fact, there is an $\Omega(\log n)$ lower bound already for the grid graph \cite{AKPW95}, which is both planar and has constant doubling dimension.
Stochastic embeddings into trees/HSTs have numerous applications, as they transform an arbitrary metric into a tree (at the cost of expected logarithmic distortion).

Recently, Goranci \etal\cite{goranci2026tree} showed that the algorithm of \cite{FRT04} can be executed on top of a consistent hash (instead of exact distances). In particular, they (implicitly) showed that given approximate neighborhood hash pairs for all distance scales, one can sample a stochastic embedding into HSTs.


\begin{theorem}[Implicit in~\cite{goranci2026tree}]\label{thm:HSTEmbeddingBlackBoxCGJKV24}
Fix $\Gamma\ge2, \Lambda \ge 1$ and
$c\in[1,2]$. 
Let $P$ be a set of $n$ points in a metric space $(X,d_X)$ 
with distances between distinct points in $[1,\Phi]$,
where $\Phi=n^{O(1)}$.
Suppose machines with local memory
$\polylog n\le s\le n$ store $P$ along with 
the hash values on $P$ and a neighborhood oracle $\mathcal O_i$ on $P$ 
at scale $c2^i/\Gamma$ for a $(\Gamma,\Lambda)$-hash of $(X,d_X)$
with diameter bound $c2^i$
 for each $i=0,\ldots,\lceil\log_2\Phi\rceil$.
Then an MPC algorithm can sample a stochastic
embedding of $P$ into HSTs with expected distortion
$O(\Gamma\log\Gamma\cdot\log n)$ in $O(\log_s n)$ rounds,
using total space near-linear in the input size.
\end{theorem}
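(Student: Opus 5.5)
We prove this by running the FRT construction on top of the given consistent hashes, replacing exact balls by clusters retrieved through the neighborhood oracles. The tree is built top-down. Sample a uniformly random permutation $\pi$ of $P$ and a random radius multiplier $\beta\in[1/4,1/2]$; the density is proportional to $1/\beta$, as in FRT. These can be generated by one machine and broadcast via \Cref{lem:broadcast}, or derived from shared seeds. At level $i$, set $r_i=\beta\cdot c2^i/\Gamma\in[R_i/4,R_i/2]$, where $R_i=c2^i/\Gamma$. By \Cref{def:neighborhood-oracle}, each point $x$ obtains from $\mathcal O_i(x,r_i)$ exactly the IDs of the $\psi_i$-clusters meeting $B(x,r_i)$.

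\textbf{Level-$i$ assignment.} Each cluster $q$ computes its $\pi$-minimal member $\mathrm{rep}_i(q)$. This is a sort and aggregate over pairs $(x,\psi_i(x))$, which takes $O(\log_s n)$ rounds by \Cref{lem:sort}. Each $x$ is then assigned $\mathrm{lab}_i(x)=\min_\pi\{\mathrm{rep}_i(q): q\in\mathcal O_i(x,r_i)\}$. This is a join of oracle records with cluster representatives, also done by sorting. The HST node of $x$ at level $i$ is the tuple $(\mathrm{lab}_i(x),\mathrm{lab}_{i+1}(x),\dots)$ with label $\Theta(2^i)$. To avoid depending on earlier levels, we hash the tuple prefix using random fingerprints. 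All $O(\log\Phi)=O(\log n)$ levels run in parallel, so the total is $O(\log_s n)$ rounds. The space is the total oracle size, $\tilde O(n\Lambda)$ in expectation.

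\textbf{Domination.} If $x,y$ separate only at level below $i$, they share a label at level $i$, say the representative $p$. Then $x$ and $p$ satisfy $d\le r_i+c2^i(1+\dots)$ by the cluster diameter bound, and similarly for $y$. Hence $d(x,y)\le O(2^i)$, and scaling the node labels by a constant gives a dominating HST.

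\textbf{Expected distortion (main obstacle).} Fix $x,y$ with $d(x,y)=\rho$. They separate at level $i$ only if the $\pi$-first representative covering one of them does not cover the other. We follow the FRT charging argument. Order the relevant clusters by the $\pi$-rank of their representatives. The cluster $q$ ``settles'' $x$ if it is the first meeting $B(x,r_i)$. It ``cuts'' the pair if it meets exactly one of $B(x,r_i)$, $B(y,r_i)$. Since each cluster has diameter $\le c2^i$, a cut needs $r_i$ to lie in an interval of length $\rho$ around the distance from $x$ to the cluster. This has probability $O(\rho\Gamma/2^i)$ under the $1/\beta$ density, which contributes the factor $\Gamma$. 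The first-settler condition contributes $1/k$ for the $k$-th closest cluster, and summing over clusters and levels gives a harmonic sum.

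The difficulty is that clusters, unlike points, are not linearly ordered by distance, and several points can share a cluster. I would charge each cluster to its representative point and, for each $k$, bound the number of levels at which a point's cluster can rank $k$-th for the pair. Only $O(\log\Gamma)$ levels are relevant, namely those with $\rho\Gamma/2^i\lesssim1$ and $2^i\lesssim\rho\Gamma\cdot\Gamma$. This yields $\sum_k O(\Gamma\log\Gamma/k)=O(\Gamma\log\Gamma\log n)$.

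To finish, we take expectation over the hashes. This step uses only the diameter bound of the clusters, not $\Lambda$, so $\Lambda$ affects the space bound but not the distortion.
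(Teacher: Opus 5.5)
The paper does not prove this statement; it imports it as a black box from \cite{goranci2026tree}, so your proposal has to stand on its own. Most of it is sound. Write $K_i(w)$ for the $\psi_i$-cluster of $w$. Then $\mathrm{lab}_i(x)$ is the $\pi$-first $w\in P$ with $d(x,K_i(w))\le r_i$, and domination follows from the diameter bound. Since $x\mapsto d(x,K_i(w))$ is $1$-Lipschitz, a cut by $w$ requires $r_i$ to land in an interval of length $\rho$, which has probability $O(\rho\Gamma/2^i)$. You are also right that $\Lambda$ plays no role in the distortion. The gap is in the step meant to turn the naive $O(\Gamma\log\Phi\cdot\log n)$ into $O(\Gamma\log\Gamma\cdot\log n)$.

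\textbf{Harmonic factor over clusters.} The $1/k$ must be taken over points, not clusters. The $k$-th closest cluster contains the $\pi$-first covering point with probability proportional to its size, not $1/k$. The fix is to order the points $w\in P$ at level $i$ by $\min\{d(x,K_i(w)),d(y,K_i(w))\}$.

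\textbf{Window of relevant levels.} More seriously, your claim that only levels with $\rho\Gamma\lesssim 2^i\lesssim\rho\Gamma^2$ matter is false. Take $P=\{0,\dots,n-1\}\subset\R$, $x=0$, $y=1$, and $R_i=c2^i/\Gamma$. At each level use a partition of the line into intervals of length $c2^i$ with a boundary about $3R_i/8$ to the right of $y$. The right cluster has $\Theta(2^i)$ points versus $O(2^i/\Gamma)$ on the left, so it contains the first covering point with probability $1-O(1/\Gamma)$. Over $\beta$, it covers $y$ but not $x$ with probability $\Theta(\Gamma/2^i)$. Hence $2^i\Pr[\text{cut at level } i]=\Theta(\Gamma)$ at every level with $\Gamma\lesssim 2^i\lesssim n$, far beyond $\rho\Gamma^2$.

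\textbf{What fixes it.} The window of relevant levels depends on the rank $k$, not on $\rho$. Two facts pin it down, and your proposal has neither.
\begin{itemize}
\item[(a)] Since $w\in K_i(w)$, every $w$ with $d(x,w)\le R_i/4-\rho$ covers both $x$ and $y$ for every $\beta$. So all of $B(x,R_i/4-\rho)\cap P$ enters the $\pi$-race that any potential cutter must win.
\item[(b)] By the diameter bound, every potential cutter, and every point ranked before it, lies in $B(x,3\cdot 2^i)$.
\end{itemize}
Hence, on levels with $\rho\le R_i/8$, the point of rank $k$ can cut only if $|B(x,R_i/8)\cap P|<k\le|B(x,3\cdot 2^i)\cap P|$. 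For fixed $k$ this holds at only $O(\log\Gamma)$ levels, because $R_{i+j}/8\ge 3\cdot 2^i$ once $2^j\ge 24\Gamma/c$. Summing $O(2^i)\cdot O(\rho\Gamma/2^i)\cdot\frac1k$ over these (rank, level) pairs gives $O(\rho\Gamma\log\Gamma\log n)$. The levels with $\rho>R_i/8$ contribute only $O(\rho\Gamma)$ in total. Equivalently, bound the cut probability at level $i$ by $O(\rho\Gamma/2^i)\cdot\ln\frac{|B(x,3\cdot 2^i)\cap P|}{|B(x,R_i/8)\cap P|}$ and telescope the sum over levels, as in FRT.
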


By plugging \Cref{thm:main_doubling_with_exact_pairs} into \Cref{thm:HSTEmbeddingBlackBoxCGJKV24} above, we conclude that given a metric space with constant doubling dimension distributed among machines with local space $n^\sigma$ for a constant $\sigma$, in $O(1)$ rounds we can sample an embedding into an HST with optimal distortion $O(\log n)$, while using near-linear total space.
More generally, our algorithm works for any metric space with doubling dimension $\ddim\leq O(\sigma\log n)$ as stated below.

\begin{corollary}\label{cor:HST}
    Let $\sigma\in(0,1)$ be fixed.
    Then for every $\Gamma\ge 16$, there is an MPC algorithm that, given (the labels of) an $n$-point metric space $(X,d_X)$ with doubling dimension $\ddim\leq O(\sigma \log n)$ distributed among machines with local space $s=n^\sigma$, 
    samples an embedding into an HST with expected distortion $O(\Gamma\log \Gamma\cdot \log n)$ 
    in $O(\ddim/\sigma^3)$ rounds. The total space used is $\tilde{O}(n)\cdot 2^{O(\frac{\ddim}{\Gamma})}$, and
    the success probability is at least $1-1/\poly(n)$. 
\end{corollary}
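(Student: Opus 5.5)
The plan is to plug \Cref{thm:main_doubling_with_exact_pairs} into \Cref{thm:HSTEmbeddingBlackBoxCGJKV24} as a black box. The one real subtlety is that the hashes of \Cref{thm:main_doubling_with_exact_pairs} live on a metric extension $(X',d_{X'})$ rather than on $X$ itself.

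First I normalize the input. Using the $2$-approximation of the minimum and maximum distances (computable in $O(1)$ rounds by sorting, \Cref{lem:sort}), I rescale so that distinct distances lie in $[1,\Phi]$ with $\Phi=n^{O(1)}$. This uses the polynomially bounded aspect ratio.

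Next I run the algorithm of \Cref{thm:main_doubling_with_exact_pairs} with padding parameter $\Gamma$. It runs simultaneously for all dyadic scales $\Delta_i=2^i$, $i=0,\ldots,\lceil\log_2\Phi\rceil$, and costs $O(\ddim/\sigma^3)$ rounds and $\tilde O(n)\cdot 2^{O(\ddim/\Gamma)}$ space. It outputs, for every scale, the restricted hash $\psi_i|_X$ and an oracle $\mathcal O_i$. Each $\psi_i$ is a $(\Gamma+2,2^{O(\ddim/\Gamma)})$-hash of the common extension $(X',d_{X'})$ with diameter bound $(1+2/\Gamma)2^i$, and $\mathcal O_i$ is a neighborhood oracle for $\psi_i$ on $X$ at scale $2^i/\Gamma$.

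To match the hypotheses of \Cref{thm:HSTEmbeddingBlackBoxCGJKV24}, I set $\Gamma'=\Gamma+2$ and $c=1+2/\Gamma\in[1,2]$. The diameter bound is then $c2^i$. The required oracle scale is $c2^i/\Gamma'=(1+2/\Gamma)2^i/(\Gamma+2)=2^i/\Gamma$, exactly the scale that $\mathcal O_i$ provides.

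The black box is stated for a $(\Gamma',\Lambda)$-hash of the ambient metric $(X,d_X)$ containing $P$. I apply it with ambient metric $X'$ and $P=X$. This is legitimate because $X'$ is an extension, so distances among points of $P$ are unchanged and still lie in $[1,\Phi]$. The hash values on $P$ and the oracles on $P$ are exactly what we computed. The fact that $X'$ may be infinite should be harmless, since the black box only queries hash values and oracles on $P$, and the extension is used purely in the analysis.

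The black box then samples, in $O(\log_s n)=O(1/\sigma)$ additional rounds, an HST embedding of $P$. Its expected distortion is $O(\Gamma'\log\Gamma'\cdot\log n)=O(\Gamma\log\Gamma\cdot\log n)$, and it uses total space near-linear in the input size. The input size is the number of oracle records, $\tilde O(n)\cdot2^{O(\ddim/\Gamma)}$ over all $O(\log n)$ scales.

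Adding up, the round count is $O(\ddim/\sigma^3)+O(1/\sigma)=O(\ddim/\sigma^3)$. The success probability follows by a union bound over the high-probability events of sorting, net construction, and the space amplification in \Cref{thm:main_doubling}.

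The main obstacle is justifying that \Cref{thm:HSTEmbeddingBlackBoxCGJKV24} accepts a hash defined on an extension. If this needs care, I would check that its analysis uses only two facts: the padding/consistency guarantee for balls around points of $P$, which holds in $X'$ because those balls are sets of diameter at most $c2^i/\Gamma'$; and the diameter bound, which upper-bounds distances of $P$-points sharing a cluster. Both transfer, because $d_{X'}$ agrees with $d_X$ on $P$.
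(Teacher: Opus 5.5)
Your proposal is correct and is the paper's argument: the paper's proof is the single sentence plugging \Cref{thm:main_doubling_with_exact_pairs} into \Cref{thm:HSTEmbeddingBlackBoxCGJKV24}, and your explicit parameter matching ($\Gamma'=\Gamma+2$, $c=1+2/\Gamma$, so that $c2^i/\Gamma'=2^i/\Gamma$ is exactly the oracle scale, with the extension $X'$ as ambient space and $P=X$) spells out what the paper leaves implicit. One minor correction: the rescaling into $[1,\Phi]$ does not need an approximate minimum distance, which is a closest-pair computation that sorting does not give you; a $2$-approximation of the diameter together with the assumed polynomial aspect-ratio bound already suffices.
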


\paragraph*{Applications via stochastic tree embeddings.}
A stochastic tree embedding yields $O(\log n)$-approximation MPC algorithms for many optimization problems, including Earth Mover's Distance (EMD). 
Here, the standard approach is to embed the metric into an HST, and solve the problem on the HST.
In particular, 
computing EMD on the HST yields an $O(\log n)$-approximation in the original metric, and can be performed in $O(1)$ rounds in the fully scalable MPC model~\cite{goranci2026tree}.
For completeness, we include details in~\appref{apd:EMD}.

\begin{corollary}\label{cor:emd}
    Let $\sigma\in(0,1)$ be fixed.
    For every $\Gamma\geq 16$, there is an MPC algorithm such that, given two $n$-point sets $A,B$ in a point metric space with doubling dimension $\ddim\leq O(\sigma\log n)$ distributed across machines with local space $s=n^\sigma$, computes in $O(\ddim/\sigma^3)$ rounds an $O(\Gamma\log\Gamma \cdot \log n)$-approximation to the Earth Mover's Distance between $A$ and $B$.
    The total space used is $\tilde{O}(n)\cdot 2^{O(\frac{\ddim}{\Gamma})}$, and the success probability is at least $1-1/\poly(n)$.
\end{corollary}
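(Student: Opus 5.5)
The approach is to compose \Cref{cor:HST} with the standard fact that EMD on an HST has a closed form computable by aggregation. First, run the algorithm of \Cref{cor:HST} on the point set $A\cup B$, which has $2n$ points and doubling dimension at most $\ddim$. With probability $1-1/\poly(n)$ this samples a dominating embedding $f$ into an HST $T$ with $\E[d_T(f(x),f(y))]\le O(\Gamma\log\Gamma\cdot\log n)\,d_X(x,y)$ for all pairs. It uses $O(\ddim/\sigma^3)$ rounds and $\tilde O(n)\cdot 2^{O(\ddim/\Gamma)}$ total space. Domination gives $\mathrm{EMD}_X(A,B)\le \mathrm{EMD}_T(A,B)$. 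Fix an optimal matching $\pi$ in $X$; then $\E[\mathrm{EMD}_T]\le \E[\sum_a d_T(a,\pi(a))]\le O(\Gamma\log\Gamma\log n)\,\mathrm{EMD}_X(A,B)$ by linearity. This gives the approximation in expectation.

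Second, I would compute $\mathrm{EMD}_T(A,B)$ exactly. On an HST, the optimal transport cost is $\sum_{v}(\ell(p(v)))\cdot |\,|A\cap T_v|-|B\cap T_v|\,|$ (up to a factor $2$ depending on the convention for edge weights), where $v$ ranges over non-root nodes, $p(v)$ is the parent of $v$, and $T_v$ is the subtree of $v$. This holds because the unmatched surplus in each subtree must leave through its parent, and greedily matching inside subtrees achieves this bound. In the construction of \cite{goranci2026tree}, each node of $T$ corresponds to a cluster identified by the sequence of hash values along levels $i=0,\dots,\lceil\log_2\Phi\rceil$. Hence each point knows its $O(\log n)$ ancestor IDs. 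Emitting pairs (node ID, $\pm1$), sorting them via \Cref{lem:sort}, and summing per node takes $O(\log_s n)$ rounds and $O(n\log n)$ space. A final global sum of the per-node terms is also a sort/aggregate in $O(\log_s n)$ rounds.

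Third, I would convert the expectation bound into a high-probability bound, which I expect to be the main obstacle. To do this, run $O(\log n)$ independent copies in parallel and output the minimum of the computed values. Each value is at least $\mathrm{EMD}_X$, and by Markov each is at most $2\cdot O(\Gamma\log\Gamma\log n)\mathrm{EMD}_X$ with probability $\ge 1/2$. So the minimum is within the stated factor with probability $1-1/\poly(n)$, at the cost of only an $O(\log n)$ factor in space, which is absorbed in $\tilde O(n)$. A union bound with the failure probability of \Cref{cor:HST} finishes the argument. The only care needed is confirming that the tree representation from \Cref{thm:HSTEmbeddingBlackBoxCGJKV24} indeed exposes per-point ancestor IDs and labels; otherwise I would recompute them from the per-scale hash values given by \Cref{thm:main_doubling_with_exact_pairs}.
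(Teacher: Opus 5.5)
Your proposal is correct and follows the paper's proof in \appref{apd:EMD} essentially step for step: the HST comes from \Cref{cor:HST} applied to $A\cup B$, the per-node excesses $|b(v)|$ are aggregated from the per-level labels, domination is combined with linearity of expectation, and the output is the minimum over $O(\log n)$ parallel repetitions. The one fix needed is in your closed form. The exact edge weight for $v$ is $(\ell(p(v))-\ell(v))/2$, which telescopes along root-to-leaf paths to $\ell(\mathrm{lca})$ and is what the paper uses. Weighting by $\ell(p(v))$ is not correct ``up to a factor $2$'' for a general HST (a chain of nodes with nearly equal labels breaks it), and is within a constant factor only when labels decrease geometrically down the tree, so use the exact weights.
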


Although this approach follows a standard reduction, it already yields the first MPC algorithm for EMD in doubling metrics. 
In contrast, there are several previous works on EMD in Euclidean spaces. 
In a constant-dimensional Euclidean space, a $(1+\eps)$-approximation to the EMD cost can be computed in $O(1)$ rounds in the fully-scalable MPC model~\cite{andoni2014}.
In high-dimensional Euclidean spaces, the best-known result for estimating the EMD cost in $O(1)$ rounds relies on tree embeddings and incurs an $O(\log n)$-approximation factor~\cite{goranci2026tree}.

\subsection{Facility location}
In this subsection we obtain a constant-factor approximation for the facility location problem (with uniform facility cost) in a constant number of rounds in the sublinear MPC model.

Given a metric space $(X,d_X)$, a uniform opening cost $\wopen > 0$ and some constant $z \geq 1$, 
the objective function of  \EMPH{Power-$z$ (Uniform) Facility Location} for a set of facilities $F \subset X$ is defined as 
\[
    \fl_z(P, F) \eqdef |F| \cdot \wopen + \sum_{x \in P} d^z_X(x,F)~,
\]
where $d_X(x,F)=\min_{y\in F}d_X(x,y)$. We denote the minimum value of a solution for the power-$z$ (uniform) facility location by  
 $\OPTfl_z(P) \eqdef \min_{F\subseteq X}\fl_z(X,F)$.
Czumaj \etal \cite{CGJKV24} implicitly proved the following meta theorem.

\begin{theorem}[Implicit in \cite{CGJKV24}]\label{thm:facilityBlackBoxCGJKV24}
    Consider an $n$-point metric space $(X,d_X)$ where all the distances are between $1$ and $\Phi$, and suppose that there are machines with local memory $s\ge\polylog(n)$ that for every $i\in[\log\Phi]$ store approximate neighborhood hash pairs $\tilde{\cP}_{\varphi_i}$ for $(\Gamma,\Lambda)$-hash with diameter bound $2^i$. 
    Then there is an MPC algorithm in $O(\log_{s}n)$ rounds that with probability at least $1-1/\poly(n)$ computes a $\Gamma^{O(z)}$-approximate solution to the power-$z$ (uniform) facility location problem. The total space used is $\tilde{O}(\sum_{i=0}^{\log\Phi}|\tilde{\cP}_{\varphi_i}|)$.
\end{theorem}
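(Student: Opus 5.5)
The statement is an implicit meta-theorem extracted from \cite{CGJKV24}. The plan is therefore to re-trace their facility location algorithm and check that the only geometric ingredient it uses is access, at every scale $2^i$, to approximate neighborhood hash pairs of a $(\Gamma,\Lambda)$-hash. Concretely, we follow the Mettu--Plaxton / B\u{a}doiu et al.\ framework. Every point $x$ has a radius $r_x$ defined by $\sum_{y\in B(x,r_x)}(r_x-d(x,y))^z\approx\wopen$ (for power $z$: $\sum_y \max(0,r_x^z - d(x,y)^z)$ or a scale-based proxy). The quantity $\sum_x r_x^z$ is within constant factors of $\OPTfl_z$, and a solution is obtained by opening a ``sparse'' set of points with small radii, using a greedy/random priority rule among nearby points.

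\textbf{Step 1 (estimating radii).} For each scale $i$ we replace the ball $B(x,2^i/\Gamma)$ by the union of the clusters $\varphi_i^{-1}(q)$ over all $(x,q)\in\tilde\cP_{\varphi_i}$. By the definition of approximate neighborhood hash pairs, this union contains $B(x,2^i/\Gamma)$ and is contained in $B(x,2\cdot 2^i)$. We compute the cluster sizes $|\varphi_i^{-1}(q)|$ by sorting (\Cref{lem:sort}) and join them back to the pairs. Summing over the pairs of $x$ then gives a count $\tilde n_i(x)$ sandwiched between $|B(x,2^i/\Gamma)|$ and $|B(x,2^{i+1})|$. Let $\hat r_x$ be the smallest scale $2^i$ with $\tilde n_i(x)\cdot 2^{iz}\gtrsim\wopen$. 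The sandwich shows that $\hat r_x$ and $r_x$ agree up to a $\Gamma^{O(1)}$ factor. All scales are handled in parallel, in $O(\log_s n)$ rounds and $\tilde O(\sum_i|\tilde\cP_{\varphi_i}|)$ space.

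\textbf{Step 2 (opening facilities).} At scale $2^i=\hat r_x$, each point samples a random rank. A point $x$ opens a facility if it has the minimum rank among the points $y$ sharing a hash pair with it at that scale and having $\hat r_y\le \hat r_x$. Equivalently, at each cluster $q$ we compute the minimum rank of its relevant members and then take a minimum over the $\le$ pairs of $x$; both are aggregations via sorting. In the alternative, $x$ opens with probability $\propto \hat r_x^z$-density, as done in \cite{CGJKV24}.

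\textbf{Step 3 (analysis).} Every point $x$ either opens or has an open facility within $O(\Gamma\hat r_x)$, by the containment in $B(x,2^{i+1})$ and induction over ranks. This bounds the connection cost by $\Gamma^{O(z)}\sum_x r_x^z$. The opening cost is charged by showing that open facilities are separated at their own scale, or that the expected number opened within a ball of radius $r_x/\Gamma$ is $O(\Lambda)$, so each opening is paid by $\wopen\lesssim \sum_{y\text{ nearby}} r_y^z$. Combined with $\sum_x r_x^z=\Theta(\OPTfl_z)$, this gives a $\Gamma^{O(z)}$ approximation; the $\Lambda$ factor affects only the space bound and is absorbed if the opening rule is per-cluster. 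High probability follows by repetition, as in the proof of \Cref{thm:main_doubling}.

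\textbf{Main obstacle.} The hard part is Step 3's opening-cost charge: the hashed neighborhoods are not symmetric balls, so two points may each see the other's cluster without being at comparable distances. I expect to handle this by making the opening rule depend only on membership in a common hash cluster $\varphi_i^{-1}(q)$, which has diameter at most $2^i$. Separation between facilities opened from the same cluster then becomes automatic, and the $2^i/\Gamma$ padding provides the matching lower bound in the charging, with the resulting distortion losses polynomial in $\Gamma$ and raised to the power $z$.
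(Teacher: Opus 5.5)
The paper gives no argument of its own for this statement. It imports it from \cite{CGJKV24}, on the grounds that their facility-location algorithm accesses the metric only through the pairs $\tilde\cP_{\varphi_i}$.

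Your Step~1 is fine. The clusters $\varphi_i^{-1}(q)$ are disjoint, so summing their sizes over the pairs of $x$ counts a set sandwiched between $B(x,2^i/\Gamma)$ and $B(x,2^{i+1})$. This pins $\hat r_x$ to $[r_x/4,\,O(\Gamma)r_x]$.

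\textbf{Connection cost.} The gap is in Steps~2--3, which carry all the content of the theorem. At a fixed scale, your opening rule is a single round of Luby's algorithm: $x$ opens iff its random rank is a local minimum among its same-scale hashed neighbors. Local minima form an independent set, not a dominating one.
\begin{itemize}
\item Your ``induction over ranks'' only gives a descending chain $x=y_0,y_1,y_2,\dots$. Here $y_{k+1}$ is a lower-rank point in the hashed neighborhood of $y_k$, and the chain ends at an open point.
\item Across scales the hops shrink geometrically. Within one scale, each hop can have length $\Theta(2^i)$, and the number of hops is not bounded by any function of $\Gamma$.
\item So you only obtain $d(x,F)=O(L_x\cdot 2^i)$, where $L_x$ is the chain length. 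Bounding $\E[L_x^z]$ depends on how neighborhood sizes grow along the chain, and nothing in the hypotheses controls this in terms of $\Gamma$ alone. This is why Luby's algorithm must be iterated.
\end{itemize}
Your sampling alternative fails outright without a coordinated fallback. Take $m$ co-located points far from everything else. Each has $r^z=\wopen/m$, so the total opening probability in that group is $\Theta(1)$. With constant probability nothing opens there, and the connection cost is unbounded.

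\textbf{Opening cost.} This side has the mirror-image problem. The Mettu--Plaxton charge of $\wopen$ to $B(f,r_f)$ needs open facilities to be $\Omega(r_f)$-separated. Charging to $B(f,r_f/\Gamma)$ instead does not work, because that ball need not carry mass $\wopen$.
\begin{itemize}
\item Mutual visibility in the hash at scale $\hat r_x$ only separates facilities at distance $\hat r_x/\Gamma$. At a minimum, the hash must be taken at scale $\Theta(\Gamma\hat r_x)$.
\item Your rule also lets a point with larger $\hat r$ open regardless of whether nearby smaller-radius points open. So separation across scales needs its own argument.
\item Your proposed fix, one facility per hash cluster, does restore one-hop domination per scale. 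But it destroys separation: representatives of adjacent clusters can be arbitrarily close. The number of facilities charging a given point is then the number of clusters meeting a padding-size ball, which the hash bounds only by $\Lambda$ in expectation.
\end{itemize}
The per-cluster rule therefore yields a $\Gamma^{O(z)}\cdot\Lambda$ approximation. Your claim that $\Lambda$ affects only the space bound is backwards: per-cluster rules are exactly where $\Lambda$ enters the cost.

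What is missing is an $O(\log_s n)$-round selection rule that is both dominating within $O(\Gamma\hat r_x)$ and sparse, with overlap independent of $\Lambda$ and $\ddim$. That is precisely what would have to be extracted from \cite{CGJKV24}, and the proposal does not supply it.
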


By plugging \Cref{obs:multiscale} into \Cref{thm:facilityBlackBoxCGJKV24} above, we conclude that given a metric space with constant doubling dimension distributed among machines with local space $n^\sigma$ for a constant $\sigma$, in $O(1)$ rounds we can compute an $O(1)$-approximation to facility location, while using near-linear total space.

\begin{corollary}\label{cor:facilityLocation}
    Let $\sigma\in(0,1)$ be fixed. 
    Then for every $\Gamma\ge 16$, there is an MPC algorithm that,
    given (the labels of) an $n$-point metric space $(X,d_X)$ with doubling dimension $\ddim\leq O(\sigma\log n)$ distributed among machines with local space $s=n^\sigma$, 
    computes a $\Gamma^{O(z)}$-approximate solution to the power-$z$ (uniform) facility location problem
    in $O(\ddim/\sigma^3)$ rounds. The total space used is $\tilde{O}(n)\cdot 2^{O(\frac{\ddim}{\Gamma})}$, and the 
    success probability is at least $1-1/\poly(n)$.
\end{corollary}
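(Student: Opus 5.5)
The plan is to derive \Cref{cor:facilityLocation} by composing \Cref{obs:multiscale} with the black-box reduction \Cref{thm:facilityBlackBoxCGJKV24}.

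First we put the input in the form \Cref{thm:facilityBlackBoxCGJKV24} expects. Distances should lie in $[1,\Phi]$ with $\Phi=n^{O(1)}$. Since the aspect ratio is polynomially bounded, we compute the minimum interpoint distance (or a constant-factor estimate of it) and rescale. This uses a constant number of sorting/broadcast rounds via \Cref{lem:sort} and \Cref{lem:broadcast}. We then compute $2$-approximate diameter bounds as in \Cref{sec:parallel}. Rescaling all distances by a common factor changes $\fl_z$ only by rescaling the opening cost $\wopen$ accordingly, so approximation ratios are preserved.

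Second, we fix $\Gamma\ge16$ and invoke \Cref{obs:multiscale}. It gives, simultaneously for all dyadic scales $2^i$ with $i\in[\log\Phi]$, approximate neighborhood hash pairs $\tilde{\cP}_{\varphi_i}$ for $(\Gamma,2^{O(\ddim/\Gamma)})$-hashes with diameter bound $2^i$. This takes $O(\ddim/\sigma^3)$ rounds and total space $\tilde O(n)\cdot2^{O(\ddim/\Gamma)}$, with success probability $1-1/\poly(n)$.

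Two points need care here.
\begin{itemize}
\item \Cref{obs:multiscale} is stated for $\Gamma\le O(\ddim)$. For larger $\Gamma$ we can simply run it with $\Gamma'=\min\{\Gamma,\Theta(\ddim)\}$, or observe that the construction of \Cref{thm:main_doubling} works for any $\Gamma\ge16$, since the bound $2^{O(\ddim/\Gamma)}$ only improves. I expect this boundary case to be the only subtle point, and I would handle it by citing \Cref{thm:main_doubling} directly for each of the $O(\log n)$ scales in parallel.
\item Running all $O(\log\Phi)=O(\log n)$ scales in parallel multiplies the space by only $O(\log n)$, which is absorbed in $\tilde O(n)$.
\end{itemize}
The hash pairs are sampled with space bounded with high probability via the amplification at the end of the proof of \Cref{thm:main_doubling}. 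So $\sum_i|\tilde{\cP}_{\varphi_i}|\le \tilde O(n)\cdot 2^{O(\ddim/\Gamma)}$ with high probability.

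Third, we apply \Cref{thm:facilityBlackBoxCGJKV24}, which requires $s\ge\polylog(n)$. This holds because $s=n^\sigma$ and $\ddim\le O(\sigma\log n)$; in particular the assumptions of \Cref{lem:net-mpc} already imply $s\ge C\log^3 n$. The reduction runs in $O(\log_s n)=O(1/\sigma)$ further rounds and outputs a $\Gamma^{O(z)}$-approximation using total space $\tilde O(\sum_i|\tilde{\cP}_{\varphi_i}|)=\tilde O(n)\cdot2^{O(\ddim/\Gamma)}$.

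Finally, we sum the round counts, $O(\ddim/\sigma^3)+O(1/\sigma)=O(\ddim/\sigma^3)$, and take a union bound over the two failure events, each of probability $1/\poly(n)$, to get the stated success probability.
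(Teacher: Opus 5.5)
Your proposal matches the paper's argument. The paper proves \Cref{cor:facilityLocation} in one line by plugging \Cref{obs:multiscale} into \Cref{thm:facilityBlackBoxCGJKV24}. Your extra bookkeeping is sound and fills in details the paper leaves implicit:
\begin{itemize}
\item running the $O(\log n)$ scales in parallel;
\item the high-probability space bound via the amplification step in \Cref{thm:main_doubling};
\item handling $\Gamma>\Theta(\ddim)$ by capping $\Gamma$, or by calling \Cref{thm:main_doubling} directly, which holds for every $\Gamma\ge16$.
\end{itemize}

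One sub-step is wrong as justified, though it is inessential. In the distance-label model you cannot compute the minimum interpoint distance, or a constant-factor estimate of it, with a constant number of sorting and broadcast rounds. That is the (approximate) closest-pair problem, not a sorting or broadcast primitive; \Cref{lem:lb-closest} shows it needs $\Omega(n2^{\ddim})$ distance queries.

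You do not need it. Compute the $2$-approximate diameter $D_0$ in $O(\log_s n)$ rounds, as in \Cref{sec:parallel}. With the assumed polynomial aspect-ratio bound $\Phi$, divide all distances by $D_0/(2\Phi)$. This places them in $[1,2\Phi]$, because $d_{\min}\ge \mathrm{diam}(X)/\Phi$ and $D_0\in[\mathrm{diam}(X),2\,\mathrm{diam}(X)]$. Rescale $\wopen$ by the $z$-th power of the same factor to preserve approximation ratios. Alternatively, as the paper implicitly does, treat the normalization to $[1,\Phi]$ as part of the standing assumptions.
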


\subsection{\texorpdfstring{$k$-median and $k$-means}{k-median and k-means}}
In this subsection we obtain a bicriteria approximation for $k$-median and $k$-means in a constant number of rounds in the sublinear MPC model.
In fact, the approximation is for a more general $(k,z)$-objective.
In the context of the sublinear MPC model, the case where $k$ is smaller than local memory $s$ can be solved using core-set techniques (construct core-sets locally in every machine, and gradually compose them into a single machine).
The case of large $k$ is much more challenging. Thus here, we consider the case where $k$ is larger than the local memory space.

Given a metric space $(X,d_X)$, an integer $k \geq 1$ and some $z \geq 1$,
the objective function of \EMPH{$(k,z)$-Clustering}
for a center set $C \subset X$ with $|C| \leq k$ is defined as
\[
  \cL_z(X, C) \eqdef \sum_{x \in X} d^z(x, C).
\]
Notice that the special cases $z = 1$ and $z = 2$ are called \EMPH{$k$-Median} and \EMPH{$k$-Means}, respectively.
We denote the minimum value of a solution for $(k,z)$-clustering by
$\OPTkz(X) := \min_{C\subseteq X: |C|\le k}\cL_z(X, C)$. 
Czumaj \etal \cite{CGJKV24} implicitly proved the following meta theorem.

\begin{theorem}[Implicit in \cite{CGJKV24}]\label{thm:kmedianBlackBoxCGJKV24}
    Consider a metric space $(X,d_X)$ where all the distances are between $1$ and $\Phi$, and suppose that there are machines with local memory $s\ge\polylog(n)$ that for every $i\in[\log\Phi]$ 
    store approximate neighborhood hash pairs $\tilde{\cP}_{\varphi_i}$ for $(\Gamma,\Lambda)$-hash with diameter bound $2^i$. \hfill
    Then for every $k\in\N,z\ge1,\eps\in(0,1)$, there is an MPC algorithm in $O(\log_{s}n)$ rounds that with probability at least $1-1/\poly(n)$ computes a set $C\subseteq X$ of $|C|\le(1+\eps)k$ centers whose connection cost is 
    $\cL_z(X, C) \le \Gamma^{O(z)}\eps^{-2}\cdot \OPTkz(X)$,
    where $\OPTkz(X)$ is the optimal $(k,z)$-clustering cost.
\end{theorem}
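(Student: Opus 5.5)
The statement is \Cref{thm:kmedianBlackBoxCGJKV24}, labelled ``implicit in \cite{CGJKV24}''. So the proof should be a reduction: check that the $k$-median/$k$-means bicriteria algorithm of \cite{CGJKV24}, designed for high-dimensional $\R^d$, uses Euclidean structure only through consistent hashing, and specifically only through the neighborhood hash pairs at each scale. I would re-derive their pipeline with every Euclidean-specific call replaced by a lookup in $\tilde{\cP}_{\varphi_i}$. The pipeline has two stages.

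First, I would invoke \Cref{thm:facilityBlackBoxCGJKV24}. The facility cost $\wopen$ is chosen, with a guess of $\wopen$ over $O(\log(n\Phi))$ geometric values run in parallel, so that the $\Gamma^{O(z)}$-approximate facility location solution opens roughly $k$ facilities. By Lagrangian relaxation this gives $O(k)$ (or $k/\eps$-type) centers whose cost is $\Gamma^{O(z)}\cdot\OPTkz(X)$. The justification is that an optimal $k$-clustering is a feasible facility location solution of cost $\OPTkz+k\wopen$.

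Second, I would reduce the number of centers to $(1+\eps)k$ while losing only an $\eps^{-2}$ factor. I would follow \cite{CGJKV24}'s scheme of computing, for every tentative center, a ``neighborhood weight'' at the appropriate scale. The natural scale is $2^i\approx(\text{its local cost})^{1/z}$. Centers are then merged greedily using consistent hashing: within each hash bucket at scale $2^i$, at most one representative is kept, and a merged center moves at most $\Gamma$ times its scale.

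The charging argument has three parts, and each step uses only three facts:
\begin{itemize}
\item clusters have diameter at most $2^i$;
\item each ball of radius $2^i/\Gamma$ meets few clusters in expectation, which is exactly what the approximate pairs certify;
\item the containment $\varphi^{-1}(q)\subseteq B(x,2\cdot 2^i)$.
\end{itemize}

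The analysis has two pieces:
\begin{itemize}
\item \emph{Count of removed centers.} An averaging (Markov-type) argument over the $\Lambda$-sparse buckets bounds the number of surviving centers by $(1+\eps)k$. It uses that each of the $k$ optimal centers' balls of radius $2^i/\Gamma$ meets few buckets.
\item \emph{Cost of reassignment.} Each moved point pays at most $(\Gamma 2^i)^z$. This is charged against its original cost, scaled by $\eps^{-2}$, giving the stated $\Gamma^{O(z)}\eps^{-2}$ bound.
\end{itemize}

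The MPC implementation consists of sorting and aggregating over the pairs $\tilde{\cP}_{\varphi_i}$ by hash id, together with prefix sums per scale. By \Cref{lem:sort} and \Cref{lem:broadcast} this takes $O(\log_s n)$ rounds and $\tilde O(\sum_i|\tilde{\cP}_{\varphi_i}|)$ space. Success with probability $1-1/\poly(n)$ follows by repeating $O(\log n)$ times and keeping the best run, which requires evaluating costs, again via sorting.

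The main obstacle is the audit itself. I must confirm that nowhere does \cite{CGJKV24} compute $\varphi(B(x,r))$ exactly or use a data-oblivious hash that all machines can evaluate on arbitrary points. In $\R^d$ any machine can hash any point, whereas here only the stored pairs are available. The remedy is the superset slack in \Cref{def:hashpairs}: the approximate pairs contain all true neighborhood pairs, and every reported bucket lies within distance $2\cdot 2^i$. This slack only inflates constants in the $\Gamma^{O(z)}$ factor. I expect to need to re-verify the sparsity count there, using expected rather than worst-case $\Lambda$.
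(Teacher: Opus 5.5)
Your framing matches the paper's. The paper offers no proof of this theorem beyond attributing it to \cite{CGJKV24}, plus the remark at the start of the Applications section that those algorithms use the geometry only through neighborhood hash pairs. Your Stage~1 is also sound in its honest form: guess $\wopen\approx\OPTkz(X)/k$ and compare with the facility-location solution given by an optimal $k$-clustering, of cost $\OPTkz(X)+k\wopen$. This yields $\Gamma^{O(z)}k$ centers of cost $\Gamma^{O(z)}\OPTkz(X)$.

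The gap is in Stage~2, which is where all the content of the statement lies. Keeping one representative per bucket bounds the survivors only by the number of distinct buckets hit. Even if every tentative center were within $2^i/\Gamma$ of an optimal center, each such ball meets up to $\Lambda$ buckets in expectation, so you get order $\Lambda k$ survivors. Tentative centers far from every optimal center at their own scale are never charged at all. But the theorem asserts $|C|\le(1+\eps)k$ for an arbitrary $(\Gamma,\Lambda)$-hash, and $\Lambda$ does not appear in the conclusion. Here $\Lambda=2^{O(\ddim/\Gamma)}$ is typically far above $1+\eps$. So the center count cannot come from sparsity at all; sparsity can only affect space. A Markov bound on an expected count does not rescue this, since that expectation would already have to be $(1+O(\eps))k$.

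What is missing is a deterministic, $\Lambda$-independent capture argument. For instance, let $O$ be an optimal center set and set $\wopen\approx\OPTkz(X)/(\eps k)$. Let $r_p$ solve $\sum_{q\in B(p,r_p)}(r_p^z-d(p,q)^z)=\wopen$, and select centers that are pairwise separated, $d(p,p')>4\max\{r_p,r_{p'}\}$. Then:
\begin{itemize}
\item Each $o\in O$ lies within $2r_p$ of at most one selected $p$, so at most $k$ selected centers are captured.
\item Every other selected $p$ has $d(q,O)>r_p$ for all $q\in B(p,r_p)$. Hence $\sum_{q\in B(p,r_p)}d(q,O)^z>\wopen$.
\item These balls are pairwise disjoint, so there are fewer than $\OPTkz(X)/\wopen=\eps k$ uncaptured centers.
\end{itemize}
The $\eps$-dependence of the cost then comes from the facility-location benchmark $\OPTkz(X)+k\wopen=(1+\eps^{-1})\OPTkz(X)$. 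This assumes the selected set also covers each point $p$ within $O(r_p)$, as in Mettu--Plaxton. Consistent hashing enters only to estimate the $r_p$ and enforce the separation up to $\Gamma^{O(1)}$ factors, which is where $\Gamma^{O(z)}$ comes from.

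Your cost paragraph (``each moved point pays $(\Gamma 2^i)^z$ \ldots\ scaled by $\eps^{-2}$'') names no source for the $\eps$-dependence. As written, the reconstruction does not establish the statement. You need either to genuinely carry out the audit of \cite{CGJKV24}'s reduction, which is all the paper asserts, or to replace Stage~2 by an argument of the above type.
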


By plugging \Cref{obs:multiscale} into \Cref{thm:kmedianBlackBoxCGJKV24} above, we conclude that given a metric space with constant doubling dimension distributed among machines with local space $n^\sigma$ for a constant $\sigma$, in $O(1)$ rounds we can compute a $(1+\eps, O(\eps^{-2}))$-bicriteria approximation to the $k$-median and $k$-means, while using near-linear total space.

\begin{corollary}\label{cor:kmedian}
    Let $\sigma\in(0,1)$ be fixed.
    Then for every
    $k\in\N,z\ge1,\Gamma\ge 16,\eps\in(0,1)$, there is an MPC algorithm that,
    given (the labels of) an $n$-point metric space $(X,d_X)$ with doubling dimension $\ddim\leq O(\sigma\log n)$ distributed among machines with local space $s=n^\sigma$, 
    computes a set $C\subseteq X$ of $|C|\le(1+\eps)k$ centers whose connection cost is 
    $\cL_z(X, C) \le \Gamma^{O(z)}\eps^{-2}\cdot \OPTkz(X)$ in $O(\ddim/\sigma^3)$ rounds,  
    where $\OPTkz(X)$ is the optimal $(k,z)$-clustering cost.
    The total space used is $\tilde{O}(n)\cdot 2^{O(\frac{\ddim}{\Gamma})}$, and 
    the success probability is at least $1-1/\poly(n)$.
\end{corollary}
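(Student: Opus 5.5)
The plan is to derive \Cref{cor:kmedian} by directly composing \Cref{obs:multiscale} with the meta theorem \Cref{thm:kmedianBlackBoxCGJKV24}, in the same way \Cref{cor:facilityLocation} is obtained.

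\textbf{Normalization.} Because the aspect ratio is polynomially bounded, I first rescale all distances so they lie in $[1,\Phi]$ with $\Phi=n^{O(1)}$. To do this, compute an approximate minimum and maximum distance in $O(\log_s n)$ rounds using standard aggregation. Alternatively, simply run all $O(\log n)$ dyadic scales and discard the irrelevant ones; this adds only $O(\log n)$ overhead to the space.

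\textbf{Computing the hash pairs.} Fix $\Gamma\ge16$. Invoke \Cref{obs:multiscale}, which applies \Cref{thm:main_doubling} at every dyadic scale $2^i$, $i\in[\log\Phi]$, with all runs in parallel. This yields approximate neighborhood hash pairs $\tilde{\cP}_{\varphi_i}$ for a $(\Gamma,2^{O(\ddim/\Gamma)})$-hash with diameter bound $2^i$. The cost is $O(\ddim/\sigma^3)$ rounds and total space $\tilde O(n)\cdot2^{O(\ddim/\Gamma)}$, where the $\log\Phi=O(\log n)$ scales are absorbed into the $\tilde O$. The success probability is $1-1/\poly(n)$.

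One point needs checking. \Cref{obs:multiscale} is stated for $\Gamma\in[16,O(\ddim)]$, while the corollary allows any $\Gamma\ge16$. For $\Gamma\gg\ddim$ I apply \Cref{thm:main_doubling} directly, which has no upper restriction on $\Gamma$. Its sparsity $2^{O(\ddim/\Gamma)}=O(1)$ stays fine, so the range restriction causes no issue.

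\textbf{Applying the meta theorem.} With the pairs stored on machines of local memory $s=n^\sigma\ge\polylog n$, apply \Cref{thm:kmedianBlackBoxCGJKV24}. In $O(\log_s n)=O(1/\sigma)$ further rounds it outputs $C$ with $|C|\le(1+\eps)k$ and $\cL_z(X,C)\le\Gamma^{O(z)}\eps^{-2}\OPTkz(X)$. Its total space is near-linear in $\sum_i|\tilde{\cP}_{\varphi_i}|$, which again is $\tilde O(n)\cdot2^{O(\ddim/\Gamma)}$.

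\textbf{Accounting.} Rounds total $O(\ddim/\sigma^3)+O(1/\sigma)=O(\ddim/\sigma^3)$. A union bound over the two failure events keeps the success probability at $1-1/\poly(n)$.

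\textbf{Main obstacle.} The hardest step is confirming that the meta theorem only consumes approximate neighborhood hash pairs with the guarantee of \Cref{def:hashpairs}, and not exact pairs or Euclidean structure. It also needs to tolerate a sparsity that holds only in expectation, with the high-probability space bound supplied separately by \Cref{thm:main_doubling}. I would verify this by checking that the \cite{CGJKV24} analysis uses only two facts: the containment $\tilde{\cP}_\varphi\supseteq\cP_\varphi$, and $\varphi^{-1}(q)\subseteq B_X(x,2\Delta)$. Both are provided here.
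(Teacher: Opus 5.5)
Your proposal is correct and matches the paper, which obtains the corollary in one line by plugging \Cref{obs:multiscale} into \Cref{thm:kmedianBlackBoxCGJKV24}; your extra bookkeeping (the $\Gamma$-range caveat, and the round, space and success-probability accounting) is consistent with that argument. One small aside: the minimum interpoint distance is not obtainable by plain aggregation, but you do not need it, since a $2$-approximate diameter plus the polynomial aspect-ratio bound already pins down the $O(\log n)$ relevant dyadic scales, as in your alternative.
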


\subsection{Minimum-size clustering}
In this subsection, we introduce a fully scalable $O(\log\log\log n)$-approximate algorithm for the $r$-gather problem, also known as minimum-size clustering, in $O((\log\log n)^2)$ rounds,
as well as more general trade-offs between the approximation ratio and the round complexity.
The $r$-gather problem can be viewed as the $k$-center problem with a minimum cluster-size constraint.  
More formally, this problem is defined as follows.
Given a metric space $(X, d_X)$ and an integer $r \ge 1$, the \EMPH{$r$-gather} problem asks for a partition of $X$ into clusters, each of size at least~$r$, minimizing the maximum cluster radius. More precisely, for a partition $P_1, \ldots, P_t$ of $X$ with centers $c_1, \ldots, c_t$, the cost is $\max_{i} \max_{p \in P_i} d_X(p, c_i)$. We denote the optimal $r$-gather cost by $\hat{\rho}$. 
A minimum cluster-size constraint naturally arises in clustering applications for privacy and anonymization, where extremely small clusters may reveal sensitive information about individuals~\cite{aggarwal2010achieving,byun2007efficient}. 
From this motivation, this problem is introduced independently by \cite{karget2000building} and \cite{guha2000hierarchical}.
This problem is NP-hard in general metric spaces.
Armon~\cite{armon2011min} gave 
a tight $3$-approximation in a general metric space.
Recently, this problem has also been studied in several settings including the MPC and dynamic setting~\cite{epasto2022massively}. More specifically, 
\cite{epasto2022massively} gave a fully scalable MPC algorithm computing an $O(1)$-approximate $r$-gather solution for points in $\mathbb{R}^d$ in $O(\log^\varepsilon n)$ rounds and $n^{1+\gamma+o(1)}$ total space. Their construction is confined to Euclidean space, as it builds the near-neighbor graph via Euclidean locality sensitive hashing.

Our strategy here is to lift this algorithm to general doubling metrics 
using the neighborhood hash pairs of \Cref{obs:multiscale}.
The MPC algorithm of \cite{epasto2022massively}
consists of two steps: constructing an $C$-approximate $(R,r)$-near neighbor graph $G$ and computing a $\beta$-ruling set\footnote{
For a graph $G = (V, E)$ and an integer parameter $\beta \ge 2$ be an integer, a subset $S \subseteq V$ is a \EMPH{$\beta$-ruling set} of $G$ if: (1) for every $u, v \in S$ with $u \ne v$, $\mathrm{dist}_G(u, v) \ge 2$, and (2) for every $v \in V$, $\mathrm{dist}_G(v, S) \le \beta$.} on the square graph of $G^2$. 
For an integer $k$, we let $G^{k}$ denote the \EMPH{$k$-th power graph} of $G$ that has the same set of vertices as $G$, but 
in which two vertices are connected when their hop-distance in $G$ is at most $k$.
In particular, $G^2$ is called the \EMPH{square graph} of $G$.

\begin{definition}[Near-neighbor graph {\cite[Definition 3.2]{epasto2022massively}}]\label{def:near-neighbor}
 Let $(X, d_X)$ be a metric space and $R, r \ge 1$, $C \ge 1$. A graph $G' = (X, E')$ is a \EMPH{$C$-approximate $(R, r)$-near neighbor graph} if:
  \begin{enumerate}
    \item For every edge $(u, v) \in E'$, $d_X(u, v) \le C \cdot R$.
    \item For every vertex $v \in X$, either $|N_{G'}[v]| \ge r$, or $B_X(v, R) \subseteq N_{G'}[v]$,
  \end{enumerate}
  where $N_G[v]=\{u\in V(G)\mid (v,u)\in E(G)\}\cup \{v\}$ is the closed set of neighbors of $v$.
\end{definition}

More specifically, this algorithm first computes a $C$-approximate $(R,r)$-near-neighbor graph $G$ with $O(\log n)$ different scales $R$. Here, $r$ is given as an input,
while $C$
controls the approximation factor of the final $r$-gather solution.
As once these near neighbor graphs $G$ are constructed 
from the metric space,
the subsequent step of computing a ruling set on $G^2$ 
does not require further access to the metric space.
Therefore, the original problem reduces to computing an approximate near-neighbor graph $G$.

In our setting, however, the size of the approximate near-neighbor graph $G(R)$ we compute for each scale $R$ has quadratic size. Thus we cannot store it explicitly. 
Instead, we construct a compact representation of $G(R)$:
we have a sparse graph $H(R)$ and its square graph $(H(R))^2$ is a near-neighbor graph $G(R)$. 
This introduces one additional issue: 
since the original algorithm computes a ruling set on $G(R)^2$,
we need to compute a ruling set for the 4th power graph of $H(R)$. Fortunately, the algorithm of \cite{epasto2022massively} for computing a ruling set is sufficiently general for our purpose: 
It can be used for computing a ruling set of the $k$-th power graph of an input graph for any fixed $k$.
The round complexity depends on the maximum degree $\Delta_k$  of $G^k$,
but even the trivial bound $\Delta_k\leq n$ gives a space complexity sufficient for our purpose. 
This reduction is summarized in the following lemma.

\begin{lemma}[Implicit in \cite{epasto2022massively}]\label{lemma:epasto2022massively}
    Let $\sigma \in (0, 1)$ be a fixed constant. Then for every $r \ge 1$, every $\Gamma \in [4, \ddim]$, and arbitrarily small parameters $\varepsilon, \gamma > 0$.
    Consider an $n$-point metric space $(X,d_X)$ of doubling dimension $\ddim$ and all distances are between $1$ and  $\Phi=n^{O(1)}$.
    For each scale parameter $R\in\{1,2,4,\ldots,\Phi\}$,
    a graph $H(R)$ such that $(H(R))^2$ is an $O(\Gamma)$-approximate $(R,r)$-near neighbor graph of $X$ is initially distributed across different machines.
    Then there is an MPC algorithm that 
    computes an $O(\Gamma \log(1/\varepsilon))$-approximate $r$-gather solution in $O\!\left(\frac{\log(1/\varepsilon)}{\gamma}\log^\varepsilon n \log\log n\right)$ rounds, using $O(n^{\sigma})$ local space and 
    $\tilde O\big((|E_H|+n)\,n^\gamma\big)$
    total space. The success probability is at least $1 - 1/\mathrm{poly}(n)$.     
\end{lemma}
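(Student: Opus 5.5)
The plan is to reduce \Cref{lemma:epasto2022massively} to the two-stage algorithm of \cite{epasto2022massively}, with one change. Where that algorithm uses a near-neighbor graph $G(R)$, we use $(H(R))^2$. Where it computes a ruling set on $G(R)^2$, we compute one on $(H(R))^4$. First we recall the structure of the algorithm in \cite{epasto2022massively}. For each scale $R\in\{1,2,\dots,\Phi\}$ it takes a $C$-approximate $(R,r)$-near neighbor graph $G(R)$ and computes a $\beta$-ruling set $S(R)$ of $G(R)^2$, where $\beta=O(\log(1/\varepsilon))$. It then checks whether every $v$ with $|N_{G(R)}[v]|<r$ behaves as required, and selects the smallest scale $R^\star$ at which the resulting clustering is feasible. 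Each vertex is assigned to a ruler of $S(R^\star)$ within $\beta$ hops in $G(R^\star)^2$. Ruler clusters are seeded with their $G(R^\star)$-neighborhoods, which are disjoint because rulers are at hop distance $\ge 3$ in $G(R^\star)$ (they are independent in $G^2$). These seeds have size $\ge r$ whenever $R^\star\ge\hat\rho$.

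For the correctness and approximation ratio, we follow their analysis verbatim with $G(R)=(H(R))^2$ and $C=O(\Gamma)$. Definition~\ref{def:near-neighbor} is exactly what their analysis consumes, and it never inspects how $G$ is stored. At scale $R\ge 2\hat\rho$ the following holds. If $|N_G[v]|<r$, then $B_X(v,R)\subseteq N_G[v]$, so the optimal cluster of $v$, of size $\ge r$, lies in $N_G[v]$, a contradiction. Hence every seed has size $\ge r$. Every vertex is within $O(\beta)$ hops of its ruler in $G$, and each hop has length $\le CR$. So the radius is $O(\beta\cdot\Gamma\cdot R)=O(\Gamma\log(1/\varepsilon))\hat\rho$, which gives the claimed ratio.

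For the implementation, every hop in $G(R)^2$ is two hops in $H(R)^2$, that is, four hops in $H(R)$. The ruling-set routine of \cite{epasto2022massively} (sampling-based degree reduction followed by exponentiation) only uses neighborhood exploration in the power graph. A $k$-hop exploration of $G^2$ is therefore implemented as a $2k$-hop exploration in $H$ through the same graph-exponentiation steps. The cost grows only by a constant factor, since the power is a fixed constant. The remaining steps also reduce to explorations of a constant number of hops in $H$: computing $N_G[v]$ sizes (capped at $r$, or truncated by sampling), assigning vertices to rulers, and picking $R^\star$ by a min over scales via \Cref{lem:sort}. The $O(\log \Phi)=O(\log n)$ scales run in parallel, and binary search over scales contributes the $\log\log n$ factor.

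The main obstacle is the space bound. Explicit exponentiation of $H^4$ could create $\Delta_4\cdot n$ edges, which is up to $n^2$. The plan is to follow their degree-reduction phase. It samples a vertex subset at rate about $n^{-\gamma}$ per round and only ever materializes bounded-radius balls, truncated at $n^{\gamma}$ vertices per node. The per-node memory is then $n^\gamma$ times the local edge count. This gives total space $\tilde O((|E_H|+n)n^\gamma)$ even with the trivial bound $\Delta_4\le n$, with $O(\log(1/\varepsilon)/\gamma)$ phases of $O(\log^\varepsilon n)$ rounds each. I expect verifying that this truncation argument is insensitive to the power $k=4$, changing only constants, to be the step requiring the most care.
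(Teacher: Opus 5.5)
Your proposal matches the paper's sketch: substitute $G(R)=(H(R))^2$ into the two-stage algorithm of \cite{epasto2022massively}, run its ruling-set routine on $(H(R))^4$ by exploring along $H$-edges under the trivial bound $\Delta_4\le n$, and your approximation argument (rulers independent in $H^4$ have disjoint $H^2$-seeds, each of size $\ge r$ once $R\ge 2\hat\rho$, not $R\ge\hat\rho$ as you first wrote) fills in what the paper leaves implicit. Two small cleanups: since all scales run in parallel, no binary search is needed and the round bound is simply inherited; and a scale's feasibility only requires the sizes of the disjoint ruler seeds, which a two-hop propagation from the rulers counts exactly in $O(|E_H|+n)$ space, so you never need capped counts $\min\{|N_G[v]|,r\}$ for every $v$.
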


Now in the following lemma, we show how to compute a sparse representation of an $O(\Gamma)$-approximate $(R,r)$-near neighbor graph using the neighborhood hash pairs  obtained from \Cref{obs:multiscale}.
\begin{lemma}\label{lem:nng}
      Let $\sigma \in (0, 1)$, $r \ge 1$, $\Gamma \in [4, \ddim], R\in\{1,2,4,\ldots,\Phi\}$ and $\varepsilon, \gamma > 0$.
      Let $(X,d_X)$ be an $n$-point metric space of doubling dimension $\ddim$.      
      There is an MPC algorithm that computes a graph $H(R)$
      of size $O(n)\cdot 2^{O(\mathrm{ddim}/\Gamma)}$
      such that $(H(R))^2$ is     
      an $O(\Gamma)$-approximate $(R,r)$-near neighbor graph
      of $X$ 
      in $O\!\left(\frac{\ddim}{\sigma^3}\right)$ rounds, using $O(n^{\sigma})$ local space and $n\cdot 2^{O(\ddim /\Gamma)}$ total space. The success probability is at least $1 - 1/\mathrm{poly}(n)$.
\end{lemma}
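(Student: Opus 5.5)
The plan is to build $H(R)$ as a bipartite-style ``star'' graph on $X$ whose vertices double as cluster centers, using the approximate neighborhood hash pairs from \Cref{obs:multiscale}. Fix a scale $R$ and set the diameter bound to $\Delta:=\Gamma R$. Invoke \Cref{thm:main_doubling} (equivalently \Cref{obs:multiscale} restricted to this scale) to obtain a $(\Gamma,2^{O(\ddim/\Gamma)})$-hash $\varphi$ with diameter bound $\Delta$, together with a set $\tilde{\cP}_\varphi$ of approximate neighborhood hash pairs. Since hash indices are points of $X$ (cluster centers $q$), define $H(R)$ to have vertex set $X$ and edge set $\{(x,q):(x,q)\in\tilde{\cP}_\varphi,\ x\neq q\}$. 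The size bound $|E_H|\le|\tilde{\cP}_\varphi|=O(n)\cdot 2^{O(\ddim/\Gamma)}$ follows from the space bound of \Cref{thm:main_doubling}. Here I use the Markov-plus-repetition amplification from its proof, so the bound holds with high probability. The round and space bounds are inherited directly.

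For the distance condition (item 1 of \Cref{def:near-neighbor}), note that an edge of $H(R)^2$ joins $u,v$ only if they share a neighbor in $H(R)$. This can be two points $u,v$ with $(u,q),(v,q)\in\tilde{\cP}_\varphi$, or a path through a point that is itself a center. In either case \Cref{def:hashpairs} gives $\varphi^{-1}(q)\subseteq B_X(u,2\Delta)$, and the diameter analysis of \Cref{alg:par} gives $d_X(x,q)\le 2\tilde\Delta\le\Delta$ for every pair, where the final scale is $\tilde\Delta=\Delta/4$. Hence every $H(R)^2$ edge has length $O(\Delta)=O(\Gamma R)$, yielding $C=O(\Gamma)$.

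For the coverage condition (item 2), I will show the stronger statement $B_X(v,R)\subseteq N_{H(R)^2}[v]$, which makes the size alternative unnecessary. Take $u\in B_X(v,R)$ and let $q=\varphi(u)$. Since $d_X(u,v)\le R=\Delta/\Gamma$, the pair $(v,q)$ lies in $\cP_\varphi\subseteq\tilde{\cP}_\varphi$. Also $(u,q)\in\tilde{\cP}_\varphi$ trivially, because $u$ is within distance $0$ of itself. So $u$ and $v$ are both adjacent to $q$ in $H(R)$ (or one of them equals $q$), and therefore $u\in N_{H(R)^2}[v]$.

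The main obstacle I anticipate is a technical one: the cluster index $q$ must be an actual vertex of $H(R)$. To handle this, I will either check that \Cref{alg:par} always outputs centers that are net points of $X$, which it does, or else add auxiliary center vertices and argue that \Cref{lemma:epasto2022massively} tolerates them. The graph obtained from two hops through a center is exactly what \Cref{lemma:epasto2022massively} expects.
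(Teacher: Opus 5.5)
Your proposal is correct and follows essentially the paper's proof. The paper also builds $H(R)$ as a hub graph from $\tilde{\cP}_\varphi$ and shows $B_X(v,R)\subseteq N_{H^2}[v]$ via the pairs $(v,\varphi(u))$ and $(u,\varphi(u))$. The differences are minor: the paper takes $\Delta=2\Gamma R$ and joins $v$ to an arbitrary representative $\mathrm{rep}(q)\in\varphi^{-1}(q)$, bounding edge lengths through the black-box condition $\varphi^{-1}(q)\subseteq B_X(v,2\Delta)$ of \Cref{def:hashpairs}. You instead join $v$ to the center $q$ itself and use the white-box fact from \Cref{alg:par} that $d_X(x,q)\le 2\tilde\Delta\le\Delta$, which avoids the representative-lookup step but ties the argument to this particular construction.
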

\begin{proof}
Let $\varphi$ be a $(\Gamma, 2^{O\left(\frac{\ddim}{\Gamma}\right)})$-hash, and 
$\tilde{\mathcal{P}}_\varphi$ be the set of approximate neighborhood hash pairs obtained from
 \Cref{obs:multiscale} with diameter bound $\Delta = 2\Gamma R$. 
Recall that $\tilde{\mathcal{P}}_\varphi$ satisfies: (i) $\tilde{\mathcal{P}}_\varphi \supseteq \mathcal{P}_\varphi$, where $\mathcal{P}_\varphi = \{(v,q) : \exists\, y \in B_X(v, \Delta/\Gamma) \text{ with } \varphi(y) = q\}$; and (ii) for every $(v,q) \in \tilde{\mathcal{P}}_\varphi$, $\varphi^{-1}(q) \subseteq B_X(v, 2\Delta)$.
For each hash value $q$ in the image of~$\varphi$, fix an arbitrary representative $\mathrm{rep}(q) \in \varphi^{-1}(q)$. Define the graph $H=(X,E_H)$ by
\[
  E_H \;=\; \bigl\{(v, \mathrm{rep}(q)) : (v, q) \in
  \tilde{\mathcal{P}}_\varphi\bigr\}.
\]
The size of $E_H$ is $O(n)\cdot 2^{O(\mathrm{ddim}/\Gamma)}$.
This is because each pair $(v,q) \in \tilde{\mathcal{P}}_\varphi$ contributes at most one edge to $E_H$, and $|\tilde{\mathcal{P}}_\varphi| = O(n)\cdot 2^{O(\mathrm{ddim}/\Gamma)}$.

In the following, it suffices to show that $H^2$ is an $O(\Gamma)$-approximate $(R,r)$-near neighbor graph of $X$. 
First, we show that for every edge $(u,v) \in E(H^2)$,\;
          $\dist_{X}(u, v) \le 8 \Gamma R$.
    For every $(v, \mathrm{rep}(q)) \in E_H$, we have $(v, q) \in \tilde{\mathcal{P}}_\varphi$, so $\mathrm{rep}(q) \in \varphi^{-1}(q) \subseteq B_X(v, 2\Delta) = B_X(v, 4\Gamma R)$. Thus $\dist_{X}(v, \mathrm{rep}(q)) \le 4\Gamma R$. For an edge $(u,v)$ in $H^2$, there exists a vertex $w$ with $(u,w), (w,v) \in E_H$, so by the triangle inequality, $\dist_{X}(u,v) \le 8\Gamma R$.

Now we show that for every vertex $v\in X$,\;
          $B_{X}(v, R) \subseteq N_{H^2}[v]$.
Notice that in Definition~\ref{def:near-neighbor},
it is sufficient to satisfy one of the two conditions for every vertex $v\in X$. But we show that for every $x\in X$,
the second condition always holds in our case. 
    Let $v \in X$ and $u \in B_{X}(v,R)$. Since $\Delta/\Gamma = 2R > R$, we have $u \in B_X(v, \Delta/\Gamma)$, and therefore $(v, \varphi(u)) \in \mathcal{P}_\varphi \subseteq \tilde{\mathcal{P}}_\varphi$. Hence, $(v, \mathrm{rep}(\varphi(u))) \in E_H$. Similarly, $u \in B_X(u,0) \subseteq B_X(u, \Delta/\Gamma)$, so $(u, \varphi(u)) \in \mathcal{P}_\varphi \subseteq \tilde{\mathcal{P}}_\varphi$, giving $(u, \mathrm{rep}(\varphi(u))) \in E_H$. Both $v$ and $u$ are adjacent to $\mathrm{rep}(\varphi(u))$ in $H$, so $\mathrm{dist}_H(u, v) \le 2$, that is, $(u, v) \in E(H^2)$.
    Therefore, $H^2$ is an $O(\Gamma)$-approximate $(R,r)$-near neighbor graph of $X$. 

    Now we analyze the round and space complexities of the algorithm. 
    Using \Cref{obs:multiscale}, we compute $\tilde{\mathcal{P}}_\varphi$ in $O(\ddim/\sigma^3)$ rounds with $O(n^{\sigma})$ local space and $n\cdot 2^{O(\ddim/\Gamma)}$ total space. Selecting representatives requires one sorting step on the hash values, taking $O(1/\sigma)$ rounds. Forming the edges of $H$ requires, for each pair $(v,q) \in \tilde{\mathcal{P}}_\varphi$, looking up $\mathrm{rep}(q)$, which is another sorting step in $O(1/\sigma)$ rounds.
\end{proof}

By combining Lemma~\ref{lemma:epasto2022massively} with Lemma~\ref{lem:nng}, we have the following corollary.
\begin{corollary}\label{cor:r-gather}
    Let $\sigma \in (0, 1)$ be a fixed constant. Then for every $r \ge 1$, every $\Gamma \in [4, \ddim]$, and arbitrarily small parameters $\varepsilon, \gamma > 0$, given an $n$-point metric space $(X,d_X)$ of doubling dimension $\ddim$, there is an MPC algorithm that computes an $O(\Gamma \log(1/\varepsilon))$-approximate $r$-gather solution in $O\!\left(\frac{\ddim}{\sigma^3} + \frac{\log(1/\varepsilon)}{\gamma}\log^\varepsilon n \log\log n\right)$ rounds, using $O(n^{\sigma})$ local space and $n^{1+\gamma+o(1)}\cdot 2^{O(\ddim /\Gamma)}$ total space. The success probability is at least $1 - 1/\mathrm{poly}(n)$.
\end{corollary}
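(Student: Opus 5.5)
The plan is to compose \Cref{lem:nng} with \Cref{lemma:epasto2022massively}, running them one after the other and adding up the round and space bounds.

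\textbf{Step 1 (near-neighbor graphs for all scales).} For every scale $R\in\{1,2,4,\ldots,\Phi\}$ we invoke \Cref{lem:nng}, and we run the $O(\log\Phi)=O(\log n)$ instances in parallel. By \Cref{obs:multiscale}, the approximate neighborhood hash pairs for all distance scales are produced together in $O(\ddim/\sigma^3)$ rounds. For each scale, building $H(R)$ from them (choosing representatives and looking up $\mathrm{rep}(q)$) takes only $O(1/\sigma)$ sorting rounds. The output of this step is, for each $R$, a graph $H(R)$ with $|E_{H(R)}|=O(n)\cdot 2^{O(\ddim/\Gamma)}$ whose square is an $O(\Gamma)$-approximate $(R,r)$-near neighbor graph. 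Summed over all scales the total space is $\tilde O(n)\cdot 2^{O(\ddim/\Gamma)}$. To get this with high probability rather than in expectation, I would reuse the Markov plus $O(\log n)$ independent repetitions argument from the proof of \Cref{thm:main_doubling}, which costs only a logarithmic factor.

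\textbf{Step 2 (ruling set and $r$-gather).} The graphs $H(R)$ are now stored across the machines, which is exactly the hypothesis of \Cref{lemma:epasto2022massively}. That lemma gives an $O(\Gamma\log(1/\varepsilon))$-approximate $r$-gather solution in $O\bigl(\frac{\log(1/\varepsilon)}{\gamma}\log^\varepsilon n\log\log n\bigr)$ additional rounds. Its total space is $\tilde O((|E_H|+n)n^\gamma)$, which becomes $\tilde O(n^{1+\gamma})\cdot 2^{O(\ddim/\Gamma)}$ once we substitute the edge bound from Step 1 and sum over the $O(\log n)$ scales. The polylog factors are absorbed into $n^{o(1)}$.

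\textbf{Step 3 (accounting).} Adding the rounds of the two steps gives the stated bound. Both steps fail with probability at most $1/\poly(n)$, so a union bound keeps the overall success probability at $1-1/\poly(n)$.

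The main obstacle I expect is the accounting in Step 2. The total-space bound of \Cref{lemma:epasto2022massively} must be charged on the sparse representation $H(R)$, not on $H(R)^2$, and the per-scale bounds from Step 1 must hold simultaneously with high probability for all scales. I would settle this by quoting the space statement of \Cref{lemma:epasto2022massively} verbatim in terms of $|E_H|$, and by applying the high-probability amplification to the combined multiscale pair set once, rather than separately at each scale.
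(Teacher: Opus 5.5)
Your proposal is correct and matches the paper's proof, which composes \Cref{lem:nng} (built on \Cref{obs:multiscale}) with \Cref{lemma:epasto2022massively}. As you do, it adds the $O(\ddim/\sigma^3)$ construction rounds to the ruling-set rounds and substitutes $|E_H|=O(n)\cdot 2^{O(\ddim/\Gamma)}$ into the $\tilde O((|E_H|+n)n^\gamma)$ space bound, absorbing polylogarithmic factors into $n^{o(1)}$.
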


By setting $\Gamma=O(1)$ and $\varepsilon=1/\log\log n$,
we have the following corollary. 

\begin{corollary}\label{cor:r-gather-nice}
    Let $\sigma \in (0, 1)$ be a fixed constant. Then for every $r \ge 1$, and for every small constant $\gamma < 1$, given an $n$-point metric space $(X,d_X)$ of a constant doubling dimension, there is an MPC algorithm that computes an $O( \log\log\log n)$-approximate $r$-gather solution in $O\!\left( (\log\log n)^2\right)$ rounds, using $O(n^{\sigma})$ local space and $n^{1+\gamma}$ total space. The success probability is at least $1 - 1/\mathrm{poly}(n)$.
\end{corollary}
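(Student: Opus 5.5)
The plan is to instantiate \Cref{cor:r-gather} with $\Gamma$ equal to a fixed constant and $\varepsilon = 1/\log\log n$, and then simplify each of the resulting bounds. Since $\ddim = O(1)$ and $\sigma$ is constant, the admissible range $\Gamma\in[4,\ddim]$ still contains a constant value. If $\ddim<4$ the interval is empty, so we use $\Gamma=4$ directly. This is harmless because \Cref{lem:nng} only needs $\Gamma\ge 4$; the upper bound $\ddim$ is there only to keep $2^{O(\ddim/\Gamma)}$ meaningful. With this choice $2^{O(\ddim/\Gamma)}=O(1)$.

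\textbf{Approximation.} The guarantee of \Cref{cor:r-gather} is $O(\Gamma\log(1/\varepsilon))$. With $\Gamma=O(1)$ and $\log(1/\varepsilon)=\log\log\log n$, this becomes $O(\log\log\log n)$.

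\textbf{Rounds.} The round bound of \Cref{cor:r-gather} is $O(\ddim/\sigma^3) + O\!\left(\frac{\log(1/\varepsilon)}{\gamma}\log^{\varepsilon} n\,\log\log n\right)$. The first term is $O(1)$. For the second term, the key computation is
\[
\log^{\varepsilon} n = 2^{\varepsilon\log\log n} = 2^{1}=2.
\]
Since $\gamma$ is a constant, the second term is $O(\log\log\log n\cdot\log\log n)$. This is at most $O((\log\log n)^2)$, as the statement claims, and in fact slightly better.

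\textbf{Space.} Local space stays $O(n^\sigma)$. Total space is $n^{1+\gamma+o(1)}\cdot O(1)$. To remove the $o(1)$ in the exponent, apply \Cref{cor:r-gather} with a smaller constant $\gamma'$, say $\gamma'=\gamma/2$. Then $n^{1+\gamma'+o(1)}\le n^{1+\gamma}$ for large $n$, and the round bound changes only by the constant factor $2$.

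\textbf{Main obstacle.} The only delicate point is that $\varepsilon$ now depends on $n$, whereas \Cref{cor:r-gather} is stated for ``arbitrarily small'' $\varepsilon$. So we must check that the $o(1)$ term in the total-space exponent is uniform in $\varepsilon$, or at least tolerates $\varepsilon=1/\log\log n$. I would justify this by going back to \Cref{lemma:epasto2022massively}. There the total space is $\tilde O((|E_H|+n)n^{\gamma})$, with $|E_H|=O(n)$ by \Cref{lem:nng}. This gives $n^{1+\gamma}\polylog n$, and the polylogarithmic factor is absorbed by the $\gamma/2$ slack. The success probability $1-1/\poly(n)$ is inherited directly from \Cref{cor:r-gather}.
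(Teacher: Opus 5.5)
Your proposal is correct and follows the paper's route exactly: the paper simply plugs $\Gamma=O(1)$ and $\varepsilon=1/\log\log n$ into \Cref{cor:r-gather}, and your computations (for example $\log^{\varepsilon}n=2$, and the $\gamma/2$ trick to absorb the $n^{o(1)}$ factor) fill in details the paper leaves implicit. One minor nit: the hash construction behind \Cref{lem:nng} (via \Cref{obs:multiscale} and \Cref{thm:main_doubling}) actually requires $\Gamma\ge 16$, so you should take $\Gamma=16$ rather than $4$; this is still a constant and changes nothing else in your argument.
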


\section{Lower bounds}
In this section, we provide lower bounds on the number of rounds for stochastic tree embeddings and consistent hashing. Recall that we work in the distance-labeled MPC model, where the input is an $n$-point subset $P \subseteq X$ distributed across machines, and the algorithm can access the metric only via distance queries of the form $\dist(u,v)$. In particular, we assume that no additional structural information (e.g., labels, embeddings, or sketches of the points) is available, and hence all information about the metric must be obtained through such queries. 

\subsection{Stochastic tree embeddings}
As discussed above, stochastic tree embeddings are a central primitive in geometric MPC algorithms as they enable reducing many optimization problems on general metrics to tree metrics.
Consequently, a fully scalable MPC algorithm for stochastic embeddings with small distortion yields efficient MPC algorithms for a wide range of optimization problems. 
While such embeddings are efficiently computable in Euclidean spaces, it is natural to ask whether similar algorithms can be achieved for general metrics. 

In this subsection, we show that no such algorithm exists for general metrics in the fully scalable MPC model.
In particular, we show that any fully scalable MPC algorithm for sampling an embedding of a general metric into a tree with polynomially bounded distortion using $O(n^{1+\eps})$ total space and $O(n^{\sigma})$ local space requires a polynomial number of rounds.
More generally, we show that 
this lower bound can be extended to any metric space with doubling dimension $\ddim\geq \Omega(\sigma\log n)$.

\medskip 
Our lower bounds for stochastic tree embedding and consistent hashing come from the query complexity for the closest pair problem: Given a metric space $(X,\dist)$, the goal is to find two points $p,q\in X$ that minimize $\dist(p,q)$.
Indyk~\cite{indyk1999sublinear} showed that any $c(n)$-approximation algorithm for the closest pair problem requires $\Omega(n^2)$ queries for a metric space with doubling dimension $\Theta(\log n)$, where $c(n)$ is a polynomial function of $n$.
We extend this lower bound to any metric space with doubling dimension $\ddim$. 
\begin{lemma}\label{lem:lb-closest}
    Let $c(n)$ be a polynomial function of $n$
    with $c(\cdot)\geq 1$.
    Any randomized algorithm with access to a distance oracle of an $n$-point metric space of doubling dimension $\ddim$, where all distances are polynomial, that computes a $c(n)$-approximate closest pair with probability at least $2/3$ 
    requires $\Omega(n\cdot 2^ \ddim)$ queries in the worst case. 
\end{lemma}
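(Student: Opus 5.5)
We prove the bound with a hard distribution and Yao's minimax principle, reducing to a ``find the planted pair'' search problem. Set $k = 2^{\ddim}$; we may assume $k\le n$ (up to constant factors in $\ddim$). Partition the $n$ points into $m=n/k$ groups $G_1,\dots,G_m$ of size $k$ each. Place the groups far apart: any two points in different groups are at distance $W$ for a large polynomial $W$. Within a group, in the base instance all pairs are at distance $D = c(n)\cdot 2$ (a uniform metric on $k$ points). In a planted instance, pick a uniformly random group $G_j$ and a uniformly random pair $\{u,v\}\subseteq G_j$, and set $d(u,v)=1$, leaving all other distances unchanged.

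First I will check that every instance is a valid metric with doubling dimension $O(\ddim)$ and polynomial distances. The triangle inequality holds since $1\le D\le 2D$ and $D\le W$. For the doubling bound, a ball of radius $r$ either lies inside one group (at most $k=2^{\ddim}$ points, so it is trivially covered by $2^{\ddim}$ singleton balls) or has $r\ge W$. In the latter case it covers groups at mutual distance $W$, i.e.\ a uniform metric on the groups; to keep this doubling as well, I will instead arrange the group representatives hierarchically. Concretely, I will place the groups on a line or a constant-dimensional grid at distances $W\cdot i$, so that the inter-group structure has doubling dimension $O(1)$ and the whole metric has doubling dimension $O(\ddim)$. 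Rescaling $\ddim$ by a constant absorbs this.

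Next comes the query lower bound. In a planted instance the closest pair distance is $1$ and every other pair has distance at least $D>c(n)$. So any $c(n)$-approximate answer must output exactly $\{u,v\}$, since any other pair would have ratio greater than $c(n)$. Consider a deterministic algorithm run against the planted distribution. Until it queries the planted pair itself, every answer it receives coincides with the base instance: inter-group queries return $W$ and intra-group queries return $D$. Hence the sequence of queries is a fixed sequence $Q$ determined by the base instance, and the algorithm succeeds only if $\{u,v\}\in Q$ or it guesses correctly at the end. The number of candidate intra-group pairs is $m\binom{k}{2}=\Theta(nk)$, and the planted pair is uniform among them. Therefore with $|Q|=q$ queries the success probability is at most $(q+1)/\Theta(nk)$. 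Requiring $2/3$ forces $q=\Omega(nk)=\Omega(n2^{\ddim})$, and Yao's principle transfers this bound to randomized algorithms.

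The main obstacle is the metric-validity step, ensuring the doubling dimension really is $O(\ddim)$ rather than $\log(n/k)$ from the many far groups. The line or hierarchical placement of groups should handle it. An alternative is to nest the groups inside a well-separated HST of constant fan-out per level, which guarantees doubling dimension $\max(\ddim,O(1))$. I would verify the doubling claim carefully for balls that straddle the two scales, meaning a radius between $D$ and the inter-group spacing, where the ball consists of a bounded number of whole groups each of size $2^{\ddim}$.
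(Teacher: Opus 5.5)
Your proposal is correct and follows essentially the same route as the paper: groups of $2^{\ddim}$ points laid out along a line, a uniformly planted unit-distance pair, and Yao's principle. The paper uses distance $10c$ inside a group and $100c\,|i-j|$ across groups, which is exactly your corrected line placement. Your key observation, that a deterministic algorithm follows a fixed query sequence until it hits the planted pair, also matches the paper, and your $(q+1)/\Theta(nk)$ bound is a slightly cleaner version of its counting.

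One correction to your justification. A doubling dimension of only $O(\ddim)$ would not suffice: rescaling $\ddim$ by a multiplicative constant $C$ turns the bound $2^{\ddim}$ into $2^{\ddim/C}$. What saves you is that your line placement actually gives doubling constant at most $3\cdot 2^{\ddim}$, or at most $\max\{2^{\ddim},3\}$ if the spacing is at least twice the intra-group distance, as in the paper. That is dimension $\ddim+O(1)$, which costs only a constant factor.
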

\begin{proof}
We use Yao's minimax principle. For this purpose, we first define a distribution $\mathcal D$ over distance functions on a set $P$ of $n$ points. We simply let $c:=c(n)$.
We partition $P$ into subsets $P_1,P_2,\ldots,P_t$ of size $2^\ddim$ with $t=n/2^\ddim$.
A sample from $\mathcal D$ is obtained by choosing an index $i^*$ uniformly at random and then by choosing an unordered pair $e$ of points of $P_{i^*}$ uniformly at random.
Then the distance function $\dist_{e}$ is defined as follows. 

\[\dist_{e}(x,y)=
\begin{cases}
0  & x=y \\
1 & \text{if } (x,y)=e \\
10\cdot c  & \text{if } x,y\in P_i, \\
100\cdot c\cdot |i-j| & \text{if } x\in P_i \text{ and } y\in P_j \text{ with }i\neq j
\end{cases}
\]
Note that $(P,\dist_{e})$ is a metric.
Clearly, a $c$-approximate closest pair is $e$, and it is a unique solution.
Moreover, the doubling dimension of $(P,\dist_e)$ is at least $\ddim-1$ and at most $\ddim+1$. 
To see this, observe that any ball of radius  $10c$ contains exactly $2^\ddim$ points.
These points are contained in the same subset $P_i$. Thus to cover this ball, we need at least $2^\ddim-1$ balls of half radius.
Therefore, the doubling dimension is at least 
$\ddim-1$.
For an upper bound,
observe that a ball of radius less than $10c$
contains at most two points, thus it can be covered by two balls of half radius.
A ball of radius $R$ with $10c\leq R <100c$ can be covered by $2^\ddim$ balls of half radius.
A ball $B$ of radius at least $100c$ can be covered by at most three balls of half radius. More specifically, we choose three sets among $P_1,\ldots,P_t$: the set contained in $B$ with smallest index, the set containing the center of $B$, and the set contained in $B$ with largest index. Then we pick an arbitrary point from each of the three sets. The balls centered at these points with half radius contain $B$.
Therefore, in any case, a ball can  be covered by $\max\{2^\ddim,3\}$ balls of half radius,
and thus the doubling dimension is at most $\ddim+1$.

Consider a deterministic algorithm $\mathcal A$ that detects $e$ with probability at least $2/3$. 
We prove a stronger claim: this holds even if $\mathcal A$ knows the partition $(P_1,\ldots,P_t)$ in advance. 
Assume to the contrary that $\mathcal A$ uses  
$q=o(n\cdot 2^\ddim)$ oracle queries in the worst case. Since $\mathcal A$ knows the partition in advance, we may assume that for a query of $(x,y)$ made by $\mathcal A$,
$x,y$ are contained in the same subset $P_i$. 
We also slightly modify $\mathcal A$ so that if it queries $e$, then it returns $e$ immediately and terminates. 
Let $\mathcal S$ be the longest sequence of queries made by $\mathcal A$. 
Since $\mathcal A$ is deterministic, and it terminates immediately after it reveals $e$,
the sequence of queries made by $\mathcal A$ is a prefix of $\mathcal S$ for any choice of $e$. Note that the number of pairs contained in the same subset is $\Theta(n\cdot 2^\ddim)$.
Since $q=o(n\cdot 2^\ddim)$, there always exist $(1-o(1))n\cdot 2^\ddim$ pairs $e_1,\ldots,e_\ell$ of points with $e_1,\ldots,e_\ell\notin \mathcal S$ such that the two points in each of the pairs are contained in the same subset. Then $\mathcal A$ returns the same output for inputs $\dist_{e_i}$ for all $i\in\{1,\ldots,\ell\}$. Thus the failure probability of $\mathcal A$ is larger than $\frac{(1-o(1))n\cdot 2^\ddim-1}{n\cdot 2^\ddim} > 1/3$. Therefore, this implies that any deterministic algorithm that detects $e$ with probability at least $2/3$ uses $\Omega(n\cdot 2^\ddim)$ queries.

Therefore, by Yao's minimax principle, any randomized algorithm that computes a $c$-approximate closest pair with probability at least $2/3$ requires $\Omega(n\cdot 2^\ddim)$ queries in the worst case.
Notice that this lower bound holds even in the case that an input always comes from the support of $\mathcal D$.
\end{proof}

Now we are ready to prove a lower bound for stochastic embeddings. In a general metric space, a fully scalable MPC algorithm for this problem requires a polynomial number of rounds. 

\begin{theorem}\label{thm:LB-hst}
    Let $c(n)$ be any polynomial function of $n$, and $\sigma,\eps\in(0,1)$ be fixed.
    Assume that there is an MPC algorithm $\mathcal A$ with access to a distance oracle of an $n$-point metric space of doubling dimension $\ddim$ that  samples a stochastic embedding into a HST with expected distortion $c(n)$ using $O(n^{1+\eps})$ total space and $O(n^\sigma)$ local space. 
    Then the number of rounds used by $\mathcal A$ is $\Omega(2^\ddim/n^{\eps+\sigma})$.
    In particular, for a general metric space where $\ddim=\log n$, the number of rounds used by $\mathcal A$ is $\Omega(n^{1-\sigma-\eps})$.
\end{theorem}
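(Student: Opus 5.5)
We reduce from the closest pair lower bound of \Cref{lem:lb-closest}. The plan has two parts. First, we show that a sampled HST embedding with small expected distortion on the hard distribution $\mathcal D$ lets us read off a $c'(n)$-approximate closest pair. Second, we bound the total number of distance queries an MPC algorithm can make by (rounds)$\times$(total space)$\times$(local work per word). We then compare this with the $\Omega(n\cdot 2^{\ddim})$ query lower bound.

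\textbf{Step 1 (embedding $\Rightarrow$ closest pair).} Fix an input $\dist_e$ from the distribution $\mathcal D$ in the proof of \Cref{lem:lb-closest}, so $e=(u,v)$ has $\dist_e(u,v)=1$ and every other pair has distance at least $10c$. Let the HST $T$ be sampled with expected distortion $c(n)$. Then $\E[d_T(u,v)]\le c(n)$, so by Markov $d_T(u,v)\le 3c(n)$ with probability at least $2/3$. Since the embedding is dominating, every other pair has $d_T\ge 10c$. Choosing the constant in $\mathcal D$ as $10\cdot 3c(n)$ (still polynomial) makes $e$ the unique closest pair in $T$. Given the tree, the closest pair of an HST can be found without further distance queries in $O(\log_s n)$ extra rounds: take the lowest-labelled internal node with two leaf children, i.e.\ the minimum LCA label over sibling leaves, which is computable by sorting with \Cref{lem:sort}. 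Thus $\mathcal A$, followed by this postprocessing, solves $c'$-approximate closest pair with probability $2/3$. Here $c'$ is polynomial, since distortion at most polynomial implies the output pair has $\dist_e$ within a polynomial factor.

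\textbf{Step 2 (query counting).} This is the main obstacle, because local computation is unrestricted, so a machine could in principle query the oracle arbitrarily often. The point to fix, consistent with the distance-label model, is that a machine can only evaluate $\dist(x,y)$ for labels $x,y$ it currently holds. In one round a machine holds at most $O(n^\sigma)$ labels, so it can learn at most $O(n^{2\sigma})$ distances. More carefully, and aiming for the stated bound, a machine's memory of $n^\sigma$ words can hold at most $n^\sigma$ labels. With $O(n^{1+\eps}/n^{\sigma})$ machines, the total number of distinct pairs examined per round is at most $O(n^{1+\eps}/n^\sigma)\cdot n^{2\sigma}=O(n^{1+\eps+\sigma})$.

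Over $T$ rounds the queries total $O(T\cdot n^{1+\eps+\sigma})$. By \Cref{lem:lb-closest} this must be $\Omega(n\cdot 2^{\ddim})$, which gives $T=\Omega(2^{\ddim}/n^{\eps+\sigma})$. To apply \Cref{lem:lb-closest} as a query lower bound, simulate the MPC algorithm by a sequential algorithm that queries exactly the pairs whose distances the machines evaluate. This simulation is valid because all information about the metric flows only through these queries. For $\ddim=\log n$ the bound becomes $\Omega(n^{1-\sigma-\eps})$.

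I expect the delicate point to be justifying the per-round query bound $n^{2\sigma}$ per machine under unrestricted local computation. The resolution is that queries require both labels to be present, and each machine holds only $n^\sigma$ words.
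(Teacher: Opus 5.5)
Your proof is correct and follows essentially the same route as the paper: Markov's inequality plus dominance shows that the closest pair of the sampled HST is an $O(c(n))$-approximate closest pair. (The paper argues this directly via $\dist(p,q)\le \dist_T(p,q)\le \dist_T(p^*,q^*)\le O(c(n))\,\dist(p^*,q^*)$, where you instead re-tune the separation in $\mathcal D$.) The count of $O(n^{2\sigma})$ queries per machine per round over $O(n^{1+\eps-\sigma})$ machines then forces $\Omega(2^{\ddim}/n^{\eps+\sigma})$ rounds via \Cref{lem:lb-closest}. The extra MPC rounds you spend extracting the HST's closest pair are unnecessary, since only the query count matters and that step makes no queries. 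This also makes your somewhat loose ``internal node with two leaf children'' recipe harmless (it can fail when the tree has unary internal nodes).
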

\begin{proof}
    Let $(X,\dist)$ be an $n$-point metric space with doubling dimension $\ddim$. Let $T$ be a HST returned by $\mathcal A$ on $(X,\dist)$. 
    Let $R$ denote the number of rounds used by $\mathcal A$ to compute $T$. 
    Then the closest pair in $(X,\dist_T)$ can be found without making any distance queries for $(X,\dist)$ as follows.
    Let $(p,q)$ be the closest pair in $(X,\dist_T)$,
    and let $(p^*,q^*)$ be the closest pair in $(X,\dist)$.
    We have $\dist(p,q)\leq O(c(n))\cdot \dist(p^*,q^*)$ with probability at least $2/3$.
    This is because, with probability at least $2/3$,
    we have 
    $\dist_T(p^*,q^*)\leq O(c(n))\cdot \dist(p^*,q^*)$.
    Conditioned on this event, we have 
    \[
    \dist(p,q) \leq \dist_T(p,q) \leq \dist_T(p^*,q^*) \leq O(c(n))\cdot \dist(p^*,q^*),
    \]
    and thus $(p,q)$ is an $O(c(n))$-approximate closest pair of $(X,\dist)$. 
    
    Here, $\mathcal A$ uses $O(R\cdot n^{1+\varepsilon+\sigma})$ distance queries. 
    To see this, observe that, in each round, each machine makes at most $O(n^{2\sigma})$ distance queries as it stores at most $n^\sigma$ points. Since the number of machines is $O(n^{1+\varepsilon-\sigma})$ and the algorithm performs $O(R)$ rounds,
    the distance queries made by $\mathcal A$ are $O(R\cdot n^{1+\varepsilon+\sigma})$. 
    In other words, we can compute an $O(c(n))$-approximate closest pair by issuing $O(R\cdot n^{1+\varepsilon+\sigma})$ queries.
    By~\Cref{lem:lb-closest}, this quantity should be
    $\Omega(n\cdot 2^\ddim)$.
    This implies that
    $R\geq \Omega(2^\ddim / n^{\varepsilon+\sigma})$
\end{proof}

\paragraph*{Comparison with upper bounds in~\Cref{cor:HST}.}
If we set $\sigma$ as a constant and $\Gamma=\ddim$, then the algorithm in~\Cref{cor:HST} achieves distortion $O(\log^2 n \cdot \log\log n)$, the total space $\tilde O(n)$, local space $n^\sigma$, and the round complexity $O(\ddim)$. 
We show that this holds for all metrics with doubling dimension $\ddim \leq O(\sigma \log n)$.
\Cref{thm:LB-hst} implies that this restriction on $\ddim$ is tight. More specifically,~\Cref{thm:LB-hst} implies that 
to achieve $\tilde O(n)$ total space and $2^{o(\ddim)}$ round complexity simultaneously, we require $\ddim\leq c\cdot \sigma \log n$ for some constant $c$.

\subsection{Consistent hashing} 
We prove a lower bound on the number of rounds for sampling a consistent hash and computing its approximate neighborhood hash pairs. 
For a general metric, it requires a polynomial number of rounds in the fully scalable MPC model. 

\begin{theorem}\label{thm:LB-hashing}
Let $\Gamma,\Lambda>1$ and $\sigma,\eps\in(0,1)$ be fixed constants.
Assume that there is an MPC algorithm $\mathcal A$ with access to a distance oracle of an $n$-point metric space of doubling dimension $\ddim$ that samples a $(\Gamma,\Lambda)$-hash and computes the set of approximate neighborhood hash pairs with $O(n^{1+\eps})$ total space and $O(n^\sigma)$ local space.
Then the number of rounds used by $\mathcal A$ is $\Omega(2^\ddim/n^{\eps+\sigma})$.
In particular, for a general metric space where $\ddim=\log n$, the number of rounds used by $\mathcal A$ is $\Omega(n^{1-\sigma-\eps})$.
\end{theorem}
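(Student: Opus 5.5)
We follow the template of \Cref{thm:LB-hst}. We show that the output of $\mathcal A$ yields, with no further distance queries, a $c(n)$-approximate closest pair for the hard distribution $\mathcal D$ of \Cref{lem:lb-closest}. Then we compare query counts.

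Fix an instance $\dist_e$ from $\mathcal D$, so all points lie at distance $1$ on the pair $e=(a,b)$ and at distance at least $10c$ otherwise. Choose the diameter bound $\Delta:=\Gamma$, so that $\Delta/\Gamma=1$. The intended invariant is that every point $x$ has its own cluster index among its neighborhood pairs. By \Cref{def:hashpairs}, $(a,\varphi(b))\in\tilde{\cP}_\varphi$ and $(b,\varphi(a))\in\tilde{\cP}_\varphi$, since $\dist(a,b)=1\le\Delta/\Gamma$. The diameter bound also constrains the clustering. For $\Gamma<10c$, any cluster containing two distinct points $x,y$ with $\dist(x,y)\ge 10c>\Delta$ is forbidden. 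So every cluster is either a singleton or exactly $\{a,b\}$.

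The extraction step is as follows. Either $a,b$ share a hash value, or the pair set contains $(a,\varphi(b))$ with $\varphi(b)\neq\varphi(a)$. In both cases, some $x$ has a pair $(x,q)$ where the cluster $\varphi^{-1}(q)$ contains a point other than $x$. The cluster may also be a singleton at another point. For any such pair, the second condition of \Cref{def:hashpairs} gives $\varphi^{-1}(q)\subseteq B(x,2\Delta)=B(x,2\Gamma)$. Since $2\Gamma<10c$ (taking $c(n)$ a large enough polynomial), the only point within $2\Gamma$ of $x$ is its partner in $e$. So any pair $(x,q)$ with $\varphi^{-1}(q)\neq\{x\}$ identifies $e$ exactly.

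Locating such a pair costs only sorting and aggregation over the output, which is a constant number of extra MPC rounds by \Cref{lem:sort}. These rounds use only hash labels and no distance queries. This is where the approximate-pair guarantee, rather than the expected sparsity $\Lambda$, does the work. The success probability comes from $\mathcal A$'s own success, which we take to be at least $2/3$ (or amplify).

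Counting queries as in \Cref{thm:LB-hst}, each of the $O(n^{1+\eps-\sigma})$ machines makes at most $O(n^{2\sigma})$ queries per round. So $R$ rounds give $O(R\,n^{1+\eps+\sigma})$ queries, and \Cref{lem:lb-closest} forces this to be $\Omega(n2^{\ddim})$. Hence $R=\Omega(2^{\ddim}/n^{\eps+\sigma})$.

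The main obstacle is ensuring the extraction step is query-free and unambiguous. The argument above relies on the diameter constraint together with the $B(x,2\Delta)$ containment to rule out spurious clusters, using that $\Gamma$ is constant while $c$ is a large polynomial.
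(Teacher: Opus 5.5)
Your proposal is correct and follows essentially the same route as the paper: set $\Delta=\Gamma$, take the hard distribution of \Cref{lem:lb-closest}, read off $e$ from $\varphi$ and $\tilde{\cP}_\varphi$ with no distance queries (using the diameter bound and the containment $\varphi^{-1}(q)\subseteq B_X(x,2\Delta)$), and then bound the queries by $O(n^{1+\eps+\sigma})$ per round. The differences are cosmetic. The paper takes $c(n)=\Gamma$, which already gives $10\Gamma>2\Delta$, and splits into the cases $\varphi(u)=\varphi(v)$ and $\varphi(u)\neq\varphi(v)$; your single criterion, a pair $(x,q)$ whose cluster $\varphi^{-1}(q)$ contains a point other than $x$, covers both cases at once. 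State it in that form, not as $\varphi^{-1}(q)\neq\{x\}$, because the definition does not exclude pairs whose $q$ has empty preimage.
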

\begin{proof}
    We first show that sampling a $(\Gamma,\Lambda)$-hash and computing approximate hash pairs requires $\Omega(n\cdot 2^\ddim)$ queries. 
    We use the diameter bound $\Delta=\Gamma$.
    Let $c(n) = \Gamma$ for any $n$.
    Consider the distribution $\mathcal D$ over distance functions on a set $P$ of $n$ points defined in the proof of~\Cref{lem:lb-closest}. By~\Cref{lem:lb-closest}, any randomized algorithm that, given a distance function in the support of $\mathcal D$, computes a $c(n)$-approximate closest pair with probability at least $2/3$ requires $\Omega(n\cdot 2^\ddim)$ queries. 
    
    Assume to the contrary that there is a randomized algorithm $\mathcal A$ that samples 
a $(\Gamma,\Lambda)$-hash $\varphi$ and computes its approximate neighborhood hash pairs $\tilde {\mathcal P}_\varphi$
using $o(n\cdot 2^\ddim)$ queries in the worst case.
Let $e=(u,v)$ be the closest pair in $P$. 
We show that, regardless of the random choice made by $\mathcal A$, we can always detect $e$. 
If $\varphi(v)=\varphi(u)$, we immediately find $(u,v)$ by scanning $\varphi(x)$ for all points $x\in P$. More specifically, since the diameter bound is $\Gamma$ and $c(n)=\Gamma$, we have $\varphi(x)\neq \varphi(y)$ for any two distinct points $x,y$ with $x,y\notin e$. 
Therefore, if there is a pair $(x,y)$ with
$\varphi(x)=\varphi(y)$, we can immediately conclude that $(x,y)=e$. Note that scanning $\varphi(\cdot)$ does not make a distance query.
Then consider the case that $\varphi(v)\neq\varphi(u)$. In this case, the set $\tilde {\mathcal P}_\varphi$ of approximate neighborhood hash pairs is $\{(u,\varphi(v)), (v,\varphi(u))\}\cup \{(x,\varphi(x))\mid x\in P\}$ since we set $c=\Delta=\Gamma$; any other distinct point is at distance at least $10\Gamma>2\Delta$, so the locality condition excludes any additional  pairs. Therefore, one can find $e$ by scanning $\tilde{\mathcal P}_\varphi$ without making any distance query.
Therefore, there is a randomized algorithm that always finds $e$ using $o(n\cdot 2^\ddim)$ queries, which contradicts the proof of~\Cref{lem:lb-closest}.

  Now we prove the theorem. 
  Since each machine has at most $n^\sigma$ points and a machine can query only distances between pairs of points currently stored on it, the machine can make at most $n^{2\sigma}$ distance queries in each round. 
    Thus in each round, the number of distance queries made by all the machines is at most $O(n^{2\sigma} \cdot (n^{1+\eps}/n^{\sigma}))=O(n^{1+\sigma+\eps})$.
    Since the total number of queries must be at least $\Omega(n\cdot 2^\ddim)$, 
    the number of rounds is $\Omega(2^{\ddim}/n^{\sigma+\eps})$.  
\end{proof}

\paragraph*{Comparison with upper bounds in~\Cref{thm:main_doubling}.}
If we set $\sigma$ as a constant, then the algorithm in~\Cref{thm:main_doubling} achieves the total space $\tilde O(n)$, the local space $n^\sigma$, and the round complexity $O(\ddim)$. 
We showed in~\Cref{thm:main_doubling} that this holds for all metrics with doubling dimension $\ddim \leq c\cdot \sigma \log n$ for a constant $c$.
\Cref{thm:LB-hashing} implies that this restriction on $\ddim$ is tight. More specifically,~\Cref{thm:LB-hashing} implies that 
to achieve $\tilde O(n)$ total space and $2^{o(\ddim)}$ round complexity simultaneously, we require $\ddim\leq c'\cdot \sigma \log n$ for some constant $c'$.

\section*{Disclosure of AI assistance}
The authors used ChatGPT Astra for discussions that helped inspire the ideas presented in Section~\ref{sec:neighborhood-oracle}. All other research ideas and contributions were independently developed by the authors.
The authors take full responsibility for the content of this paper.

{\small
\bibliographystyle{alpha}
\bibliography{main}
}

\appendix
\section[Comparison with Filtser (2024)]{Comparison with \cite{Fil24scattering}}\label{apx:comparison-fil24}

In this appendix we expand on \Cref{rem:ComparisonFil24}. We will use the notation from this paper to describe \cite{Fil24scattering} as well (the notation changes are cosmetic).
\Cref{lem:seq} follows the same MPX-based framework as \cite[Theorem~10]{Fil24scattering}: choose a net $N$ of centers, assign a random shift $\delta_q$ to every center $q$, and assign each point $x$ to the center maximizing $f_x(q)=\delta_q-\dist_X(x,q)$.
The difference is in the distribution used to sample the shifts $\{\delta_q\}_{q\in N}$, and, more importantly, in the analysis.

\Cref{lem:seq} samples the shifts using (the more common) truncated exponential distribution, where each sample is conditioned to be in $[0,\lambda_\top]$ (recall $\lambda_\top=\Delta/2$).
In contrast, \cite{Fil24scattering} used a \EMPH{betailed exponential distribution}, where one samples from a classical exponential distribution with parameter $\lambda=O(\frac{\ddim}{\Delta})$, and every value above $\lambda_\top$ collapses to $\lambda_\top$. That is, $\Pr[\delta_q=\lambda_\top]=e^{-\lambda\cdot\lambda_\top}$.
Note that the betailed exponential distribution has a point mass (an atom) at the threshold $\lambda_\top$, while the truncated exponential distribution is smooth throughout. 
    
The analysis of \cite{Fil24scattering} proceeds in two steps. 
Fix a point $x$, and indices $s=\tilde{O}(\ddim)$, $m=\tilde{O}(\ddim)\cdot e^{O(\ddim\cdot\frac{r}{\Delta})}$.
First, ignoring the betailing, one orders the centers according to their values $f_x(q)$. It is shown that, with probability at least $1-e^{-\tilde{\Omega}(\ddim)}$, it holds that 
     $f_x(q_{(s)})- f_x(q_{(m)})>2r$.
Second, \cite{Fil24scattering} incorporates the ``betailing''. The probability that $s$ different shifts (among the nearby $N_x$ centers) are betailed (i.e. sampled the value $\lambda_\top$) is at most $e^{-\tilde{O}(\ddim)}$. Accordingly, at least one of $q_{(1)},\dots,q_{(s)}$ will not be betailed, and hence some center among $q_{(1)},\dots,q_{(s)}$ preserves a gap of more than $2r$ from $q_{(m)}$ (w.r.t. the actually used betailed shifts).
As a corollary, there will be at most $m$ intersections.
Finally, the Lov\'asz Local Lemma is used to turn these per-vertex high-probability events into a worst-case guarantee.

The analysis of \cite{Fil24scattering} is a global argument, based on the fact that it is highly unlikely that many of the shifts of the nearby centers are simultaneously betailed.
 In contrast, \Cref{lem:seq} uses a smooth, straightforward argument to bound the expected number of intersections, and then incorporates truncation using simple conditional expectation.
 In fact, \textbf{\Cref{lem:seq} and its analysis are arguably superior}, as they can be used to prove \cite{Fil24scattering} in a simpler and more straightforward way (without relying on the more complicated global argument), and with better hidden constants.
 Indeed, following the proof of \Cref{lem:seq}, recall that $\Lambda$ stands for the number of additional clusters intersecting $B_x$. 
By \cref{eq:tail} it holds that 
\[
\Pr\left[\Lambda\ge m=\tilde{O}(\ddim)\cdot e^{2r\lambda}\right]\le\left(1-e^{-2r\lambda}\right)^{e^{2r\lambda}\cdot\tilde{O}(\ddim)}\le e^{-\tilde{O}(\ddim)}~.
\]
 Then, similarly to \cref{eq:ExpectedIntersections}, we incorporate the truncation and obtain 
 \[
\Pr\left[\Lambda\ge m=\tilde{O}(\ddim)\cdot e^{2r\lambda}\mid\bar{\Psi}\right]\le\frac{\Pr\left[\Lambda\ge m\right]}{1-1/2}\le e^{-\tilde{O}(\ddim)}~.
\]
Thus the simplified analysis yields the desired high-probability bound directly, after which the same Lov\'asz Local Lemma argument as in \cite{Fil24scattering} gives the worst-case sparse-partition guarantee. 
This is the sense in which \Cref{lem:seq} and its analysis are arguably superior: they provide the expected VOP guarantee needed here, while also giving a simpler proof route (with better hidden constants) to the earlier worst-case guarantee.

\section{MPC implementation details for EMD}\label{apd:EMD}

In this section, we show how to obtain an $O(\Gamma\log\Gamma\cdot\log n)$-factor approximation for the Earth Mover's Distance in $O(\ddim/\sigma^3)$ MPC rounds in a metric space of doubling dimension at most $O(\sigma\cdot \log n)$, for every $\Gamma\ge16$, where $n^\sigma$ is the local space complexity. 
Let $(X,\dist)$ be a metric space of doubling dimension $\ddim$. 
For any two finite point sets $A,B\subseteq X$ with $|A|=|B|=n$, the 
\EMPH{Earth Mover's Distance} between $A$ and $B$ 
is defined as 
$\textsf{EMD}(A,B)=\min_{M}\sum_{(a,b)\in M} \dist(a,b)$
where the minimum is taken over all one-to-one matchings between $A$ and $B$.

\paragraph*{Greedy algorithm for EMD.}
Let $T$ be the random HST returned by the MPC algorithm in~\Cref{cor:HST} applied to the metric induced by $A\cup B$. 
We can view $T$ as a rooted labeled tree. 
Then, for every pair $(u,v)$ of points of $A\cup B$,
the expected distortion of $\dist(u,v)$ under $T$ is at most $\alpha=O(\Gamma\log\Gamma\cdot\log n)$, which is $O(\log n)$ for constant $\Gamma$.
The following greedy algorithm computes an optimal matching in the tree metric $T$:
Process the nodes of $T$ in a bottom-up fashion.
For each node $v$, after all its children have been processed, match as many currently unmatched points of $A$ and $B$ inside the subtree rooted at $v$ as possible, and pass the remaining unmatched excess to the parent.
The cost of this algorithm can be computed in the fully-scalable MPC model as follows.

\paragraph*{MPC implementation.}
Let $n^\sigma$ be the local memory bound, for a fixed constant $\sigma\in(0,1)$.
For each node $v$, let $L(v)$ denote its descendant leaves and let $b(v)=|A\cap L(v)|-|B\cap L(v)|$.
If $\ell(v)$ is the distance label of $v$, with zero labels at the leaves, realize $T$ as a weighted tree with edge weights $w(v,\operatorname{par}(v))=(\ell(\operatorname{par}(v))-\ell(v))/2$.
Then, writing $r$ for the root,
\[
\textsf{EMD}_T(A,B)=\sum_{v\ne r}w(v,\operatorname{par}(v))\,|b(v)|.
\]
Indeed, every matching crosses the edge from $v$ to its parent at least $|b(v)|$ times, and the bottom-up greedy algorithm attains equality for every edge.
After applying \Cref{cor:HST} to construct a HST, each point has an
ordered list of $O(\log n)$ labels, one for each level. 
Then using these labels, we can compute 
$\textsf{EMD}_T(A,B)$ in $O(1)$ MPC rounds.

To relate this cost to the original metric, fix an optimal matching $M^*$
between $A$ and $B$ under $\dist$.
Since $T$ dominates $\dist$, every matching has at least its original cost
in $T$, and hence $\textsf{EMD}_T(A,B)\ge\textsf{EMD}(A,B)$.
On the other hand, $M^*$ is also a feasible matching in $T$, so
\[
\mathbb E[\textsf{EMD}_T(A,B)]
\le \sum_{(a,b)\in M^*}\mathbb E[\dist_T(a,b)]
\le \alpha\,\textsf{EMD}(A,B),
\]
where $\dist_T$ denotes distance in $T$.
We repeat this $O(\log n)$ times in parallel, and return the smallest value. 
This proves \Cref{cor:emd}.

\paragraph*{Lower bound for general metrics.}
In a general metric, the situation is significantly more challenging.
More specifically, \cite{azarmehr2025tight} showed that, for every $\delta>0$, there exists a constant $\eps_\delta>0$ such that estimating the unnormalized EMD cost of two disjoint $n$-point sets in a $(1,2)$-metric within an $\eps_\delta\cdot n$-additive error requires $\Omega(n^{2-\delta})$ distance queries.
Here the unnormalized cost is $n$ times the EMD of the uniform probability distributions on these sets, and lies in $[n,2n]$.
For any MPC algorithm with local memory $n^\sigma$ and total memory $O(n^{1+\eps})$, for a constant $\eps>0$ with $\eps+\sigma<1$, it makes at most $O(n^{1+\eps+\sigma})$ distance queries between points stored on the same machine in each round.
Therefore, for $0<\delta<1-\eps-\sigma$, estimating the EMD cost within a \emph{multiplicative error} of $(1+\eps_\delta/2)$ requires $\Omega(n^{1-\eps-\sigma-\delta})=n^{\Omega(1)}$ rounds, since the EMD cost is at most $2n$ in this case.
This shows a sharp contrast between a constant-dimensional Euclidean space and a general metric space.

\end{document}